      \name{author}{3}{}{%
        {{hash=b90d45a507ebe833ab81dccd3c0f3509}{%
           family={Babaioff},
           familyi={B\bibinitperiod},
           given={Moshe},
           giveni={M\bibinitperiod}}}%
        {{hash=2bf2d58e3c96261c225f4c422f263192}{%
           family={Immorlica},
           familyi={I\bibinitperiod},
           given={Nicole},
           giveni={N\bibinitperiod}}}%
        {{hash=2166da52c267a8b58d8f73cb939e68b3}{%
           family={Kleinberg},
           familyi={K\bibinitperiod},
           given={Robert},
           giveni={R\bibinitperiod}}}%
      }
      \name{author}{4}{}{%
        {{hash=b90d45a507ebe833ab81dccd3c0f3509}{%
           family={Babaioff},
           familyi={B\bibinitperiod},
           given={Moshe},
           giveni={M\bibinitperiod}}}%
        {{hash=2bf2d58e3c96261c225f4c422f263192}{%
           family={Immorlica},
           familyi={I\bibinitperiod},
           given={Nicole},
           giveni={N\bibinitperiod}}}%
        {{hash=548c8439009be7e3768757c6ff4fc6d3}{%
           family={Kempe},
           familyi={K\bibinitperiod},
           given={David},
           giveni={D\bibinitperiod}}}%
        {{hash=2166da52c267a8b58d8f73cb939e68b3}{%
           family={Kleinberg},
           familyi={K\bibinitperiod},
           given={Robert},
           giveni={R\bibinitperiod}}}%
      }
      \name{author}{2}{}{%
        {{hash=17d777989c89ecf12f1ff3b51386964b}{%
           family={Chakraborty},
           familyi={C\bibinitperiod},
           given={S.},
           giveni={S\bibinitperiod}}}%
        {{hash=a06b3eda879c335d26935b50237b049a}{%
           family={Lachish},
           familyi={L\bibinitperiod},
           given={O.},
           giveni={O\bibinitperiod}}}%
      }
      \name{author}{2}{}{%
        {{hash=f41b0f5b1ab1be87a878721d53cc7dc3}{%
           family={Dinitz},
           familyi={D\bibinitperiod},
           given={M.},
           giveni={M\bibinitperiod}}}%
        {{hash=0cb1a8ab3b06a57ccb322b17d6ddd1fb}{%
           family={Kortsarz},
           familyi={K\bibinitperiod},
           given={G.},
           giveni={G\bibinitperiod}}}%
      }
      \name{author}{2}{}{%
        {{hash=96b9786b8e584e7e938264f7be39d5c0}{%
           family={Dimitrov},
           familyi={D\bibinitperiod},
           given={N.\bibnamedelimi B.},
           giveni={N\bibinitperiod\bibinitdelim B\bibinitperiod}}}%
        {{hash=71a222ca405aca5fb7bf037b733eba5c}{%
           family={Plaxton},
           familyi={P\bibinitperiod},
           given={C.\bibnamedelimi G.},
           giveni={C\bibinitperiod\bibinitdelim G\bibinitperiod}}}%
      }
      \name{author}{1}{}{%
        {{hash=cb31e448a967be41081b7a719e76b965}{%
           family={Dynkin},
           familyi={D\bibinitperiod},
           given={Evgenii\bibnamedelima Borisovich},
           giveni={E\bibinitperiod\bibinitdelim B\bibinitperiod}}}%
      }
      \name{author}{3}{}{%
        {{hash=4ec4a06b66768ad2d19d1471738049bf}{%
           family={Feldman},
           familyi={F\bibinitperiod},
           given={M.},
           giveni={M\bibinitperiod}}}%
        {{hash=e57646ad448cd7adfe9097ad0cfff521}{%
           family={Svensson},
           familyi={S\bibinitperiod},
           given={O.},
           giveni={O\bibinitperiod}}}%
        {{hash=159a9da7df6e4b5c4e08c591850259b4}{%
           family={Zenklusen},
           familyi={Z\bibinitperiod},
           given={R.},
           giveni={R\bibinitperiod}}}%
      }
      \name{author}{2}{}{%
        {{hash=5d8d670fca2ba97f522028dba853a5f9}{%
           family={Im},
           familyi={I\bibinitperiod},
           given={S.},
           giveni={S\bibinitperiod}}}%
        {{hash=61ad10c626e1d641e0b1d7b5f48d3233}{%
           family={Wang},
           familyi={W\bibinitperiod},
           given={Y.},
           giveni={Y\bibinitperiod}}}%
      }
      \name{author}{3}{}{%
        {{hash=4e7a93e03ef6a84c05eec97201742c65}{%
           family={Jaillet},
           familyi={J\bibinitperiod},
           given={Patrick},
           giveni={P\bibinitperiod}}}%
        {{hash=c273bce3839788927f9e1483dcaf6f53}{%
           family={Soto},
           familyi={S\bibinitperiod},
           given={José\bibnamedelima A},
           giveni={J\bibinitperiod\bibinitdelim A\bibinitperiod}}}%
        {{hash=7f0b2ae3dd99df69edbbc6d59b95f7b5}{%
           family={Zenklusen},
           familyi={Z\bibinitperiod},
           given={Rico},
           giveni={R\bibinitperiod}}}%
      }
      \name{author}{1}{}{%
        {{hash=ca0d40fd2ed3fc314917b42745659849}{%
           family={Karger},
           familyi={K\bibinitperiod},
           given={David},
           giveni={D\bibinitperiod}}}%
      }
      \name{author}{2}{}{%
        {{hash=5405a8d82bab471af8fa46e8f52bf04a}{%
           family={Korula},
           familyi={K\bibinitperiod},
           given={N.},
           giveni={N\bibinitperiod}}}%
        {{hash=06401183ca257a5d8b4301299ff179b6}{%
           family={Pál},
           familyi={P\bibinitperiod},
           given={M.},
           giveni={M\bibinitperiod}}}%
      }
      \name{author}{4}{}{%
        {{hash=d1aec3aaaf76b022131d5beb4668e235}{%
           family={Kesselheim},
           familyi={K\bibinitperiod},
           given={T.},
           giveni={T\bibinitperiod}}}%
        {{hash=3107d0416db0da6464b0667726b4c133}{%
           family={Radke},
           familyi={R\bibinitperiod},
           given={K.},
           giveni={K\bibinitperiod}}}%
        {{hash=8395d99729d093cd488637b302958190}{%
           family={Tönnis},
           familyi={T\bibinitperiod},
           given={A.},
           giveni={A\bibinitperiod}}}%
        {{hash=22f3412b22a105f66dbcb6fe4ad27aa8}{%
           family={Vöcking},
           familyi={V\bibinitperiod},
           given={B.},
           giveni={B\bibinitperiod}}}%
      }
      \name{author}{1}{}{%
        {{hash=f3e6bc16df27afdc5f3a454f45796809}{%
           family={Lachish},
           familyi={L\bibinitperiod},
           given={Oded},
           giveni={O\bibinitperiod}}}%
      }
      \name{author}{3}{}{%
        {{hash=dc76eef66c89440e4e5275de607ffbfe}{%
           family={Ma},
           familyi={M\bibinitperiod},
           given={T.},
           giveni={T\bibinitperiod}}}%
        {{hash=ed89b74a4f46ce11089b9ddb21467ed5}{%
           family={Tang},
           familyi={T\bibinitperiod},
           given={B.},
           giveni={B\bibinitperiod}}}%
        {{hash=61ad10c626e1d641e0b1d7b5f48d3233}{%
           family={Wang},
           familyi={W\bibinitperiod},
           given={Y.},
           giveni={Y\bibinitperiod}}}%
      }
      \name{author}{2}{}{%
        {{hash=b2cdc229226b321211a521789842b3f8}{%
           family={Oveis\bibnamedelima Gharan},
           familyi={O\bibinitperiod\bibinitdelim G\bibinitperiod},
           given={Shayan},
           giveni={S\bibinitperiod}}}%
        {{hash=9c07d41942e202b0e75eaeaab1010b6e}{%
           family={Vondrák},
           familyi={V\bibinitperiod},
           given={Jan},
           giveni={J\bibinitperiod}}}%
      }
      \name{author}{1}{}{%
        {{hash=fee172b4065cfaf0f745a2ea2c6c3ee8}{%
           family={Soto},
           familyi={S\bibinitperiod},
           given={J.\bibnamedelimi A.},
           giveni={J\bibinitperiod\bibinitdelim A\bibinitperiod}}}%
      }
\patchcmd\blx@bblinput{\blx@blxinit}
                      {\blx@blxinit
                       %new jobname
                      }{}{\fail}
\newcommand{\labeltarget}[1]{\Hy@raisedlink{\hypertarget{#1}{}}}
\setlist[enumerate]{nosep, topsep=0.1em}
\setlist[enumerate, 1]{label=(\roman*), leftmargin=2.2em}
\setlist[itemize]{nosep, topsep=0.3em}
\newcommand\appendtographicspath[1]{%
  \g@addto@macro\Ginput@path{#1}%
}
\definecolor{darkblue}{rgb}{0,0,0.38}
\definecolor{darkred}{rgb}{0.8,0,0}
\definecolor{darkgreen}{rgb}{0.1,0.35,0}
\newtheorem{theorem}{Theorem}[section]
\newaliascnt{conjecture}{theorem}
\newtheorem{conjecture}[theorem]{Conjecture}
\newaliascnt{corollary}{theorem}
\newaliascnt{definition}{theorem}
\newtheorem{definition}[theorem]{Definition}
\newaliascnt{lemma}{theorem}
\newtheorem{lemma}[lemma]{Lemma}
\newaliascnt{remark}{theorem}
\newaliascnt{observation}{theorem}
\newaliascnt{claim}{theorem}
\newtheorem{claim}[theorem]{Claim}
\Crefname{theorem}{Theorem}{Theorems}
\Crefname{conjecture}{Conjecture}{Conjectures}
\Crefname{corollary}{Corollary}{Corollaries}
\Crefname{definition}{Definition}{Definitions}
\Crefname{lemma}{Lemma}{Lemmas}
\Crefname{observation}{Observation}{Observations}
\Crefname{remark}{Remark}{Remarks}
\Crefname{line}{Line}{Lines}
\DeclareMathOperator{\supp}{supp}
\DeclareMathOperator{\argmax}{argmax}
\newcommand{\BinomDistr}{B}
\DeclareMathOperator{\UnifDistr}{\mathrm{Unif}}
\renewcommand{\E}{\mathbb{E}}
\renewcommand{\P}{\Pr}  %
\newcommand{\Z}{\mathbb{Z}}
\renewcommand{\R}{\mathbb{R}}
\renewcommand{\M}{\mathcal{M}}
\newcommand{\I}{\mathcal{I}}
\newcommand{\mspan}{\mathrm{span}}
\DeclareMathOperator{\rank}{rank}
\newcommand{\mpartition}{\mathcal{P}}
\newcommand\OPT{\ensuremath{\mathrm{OPT}}}
\newcommand{\card}[1]{\lvert#1\rvert}
\newcommand{\mrestrict}[2]{\left.#1\right|_{#2}}
\newcommand{\mminor}[3]{\left.\left(#1 / #2\right)\right|_{#3}}
\newcommand{\rhodownshift}{\rho'}
\newcommand{\rhoapprox}{\tilde{\rho}}
\newcommand{\rhogrid}{\overline{\rho}}
\newcommand{\lambdagrid}{\overline{\lambda}}
\newcommand{\Lambdagrid}{\overline{\Lambda}}
\newcommand{\orig}{\ensuremath\mathrm{orig}}
\renewcommand{\epsilon}{\varepsilon}
\let\@@pmod\pmod
\DeclareRobustCommand{\pmod}{\@ifstar\@pmods\@@pmod}
\def\@pmods#1{\mkern8mu({\operator@font mod}\mkern 6mu#1)}
\let\@@mod\mod
\DeclareRobustCommand{\mod}{\@ifstar\@mods\@@mod}
\def\@mods#1{\mkern8mu{\operator@font mod}\mkern 6mu#1}
\def\@fnsymbol#1{\ensuremath{\ifcase#1\or *\or %
\ddagger\or
    \mathsection\or \mathparagraph\or \|\or **\or \dagger\dagger
    \or \ddagger\ddagger \else\@ctrerr\fi}}
\title{Constant-Competitiveness for Random Assignment Matroid Secretary Without Knowing the Matroid%
\thanks{This project received funding from Swiss National Science Foundation grant 200021\_184622 and the European Research Council (ERC) under the European Union's Horizon 2020 research and innovation programme (grant agreement No 817750).}
}
\author{
    Richard Santiago%
    \thanks{%
        Department of Mathematics, ETH Zurich, Zurich, Switzerland.
        Email: \href{mailto:rtorres@ethz.ch}%
        {rtorres@ethz.ch}.%
    }%
    \and
    Ivan Sergeev%
    \thanks{%
        Department of Mathematics, ETH Zurich, Zurich, Switzerland.
        Email: \href{mailto:isergeev@ethz.ch}%
        {isergeev@ethz.ch}.%
    }%
    \and
    Rico Zenklusen
    \thanks{%
        Department of Mathematics, ETH Zurich, Zurich, Switzerland.
        Email: \href{mailto:ricoz@ethz.ch}%
        {ricoz@ethz.ch}.%
    }%
}
\date{}
\begin{document}

\maketitle

\begin{abstract}
The Matroid Secretary Conjecture is a notorious open problem in online optimization. It claims the existence of an $O(1)$-competitive algorithm for the Matroid Secretary Problem (MSP). Here, the elements of a weighted matroid appear one-by-one, revealing their weight at appearance, and the task is to select elements online with the goal to get an independent set of largest possible weight. $O(1)$-competitive MSP algorithms have so far only been obtained for restricted matroid classes and for MSP variations, including \emph{Random-Assignment} MSP (RA-MSP), where an adversary fixes a number of weights equal to the ground set size of the matroid, which then get assigned randomly to the elements of the ground set. Unfortunately, these approaches heavily rely on knowing the full matroid upfront. This is an arguably undesirable requirement, and there are good reasons to believe that an approach towards resolving the MSP Conjecture should not rely on it. Thus, both Soto [SIAM Journal on Computing 2013] and Oveis Gharan \& Vondr{\'a}k [Algorithmica 2013] raised as an open question whether RA-MSP admits an $O(1)$-competitive algorithm even without knowing the matroid upfront.

In this work, we answer this question affirmatively. Our result makes RA-MSP the first well-known MSP variant with an $O(1)$-competitive algorithm that does not need to know the underlying matroid upfront and without any restriction on the underlying matroid. Our approach is based on first approximately learning the rank-density curve of the matroid, which we then exploit algorithmically.
\end{abstract}
 
\begin{tikzpicture}[overlay, remember picture, shift = {(current page.south east)}]
\begin{scope}[shift={(-1.1,1.5)}]
\def\hd{2.5}
\node at (-2.15*\hd,0) {\includegraphics[height=0.7cm]{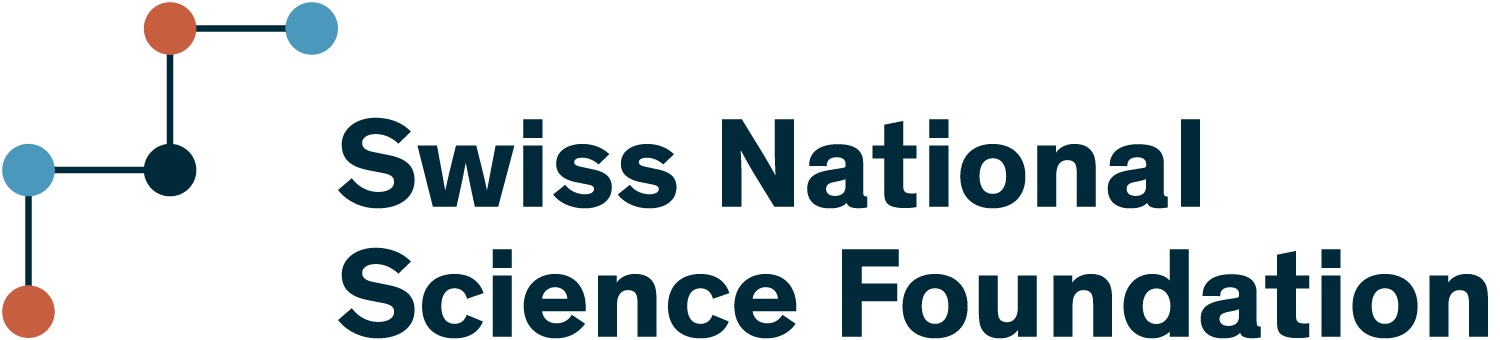}};
\node at (-\hd,0) {\includegraphics[height=1.0cm]{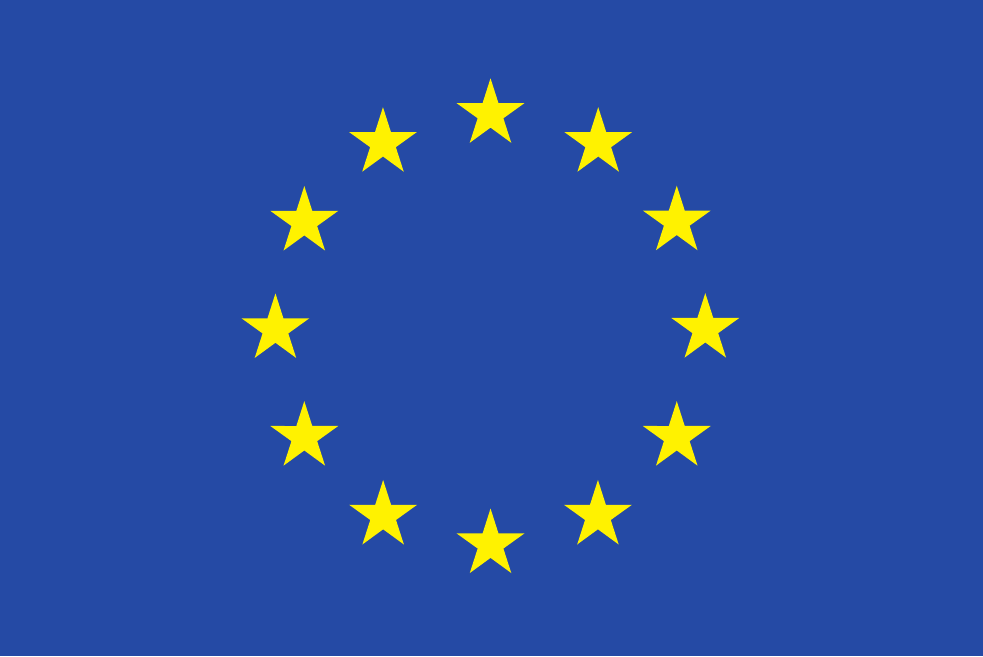}};
\node at (-0.2*\hd,0) {\includegraphics[height=1.2cm]{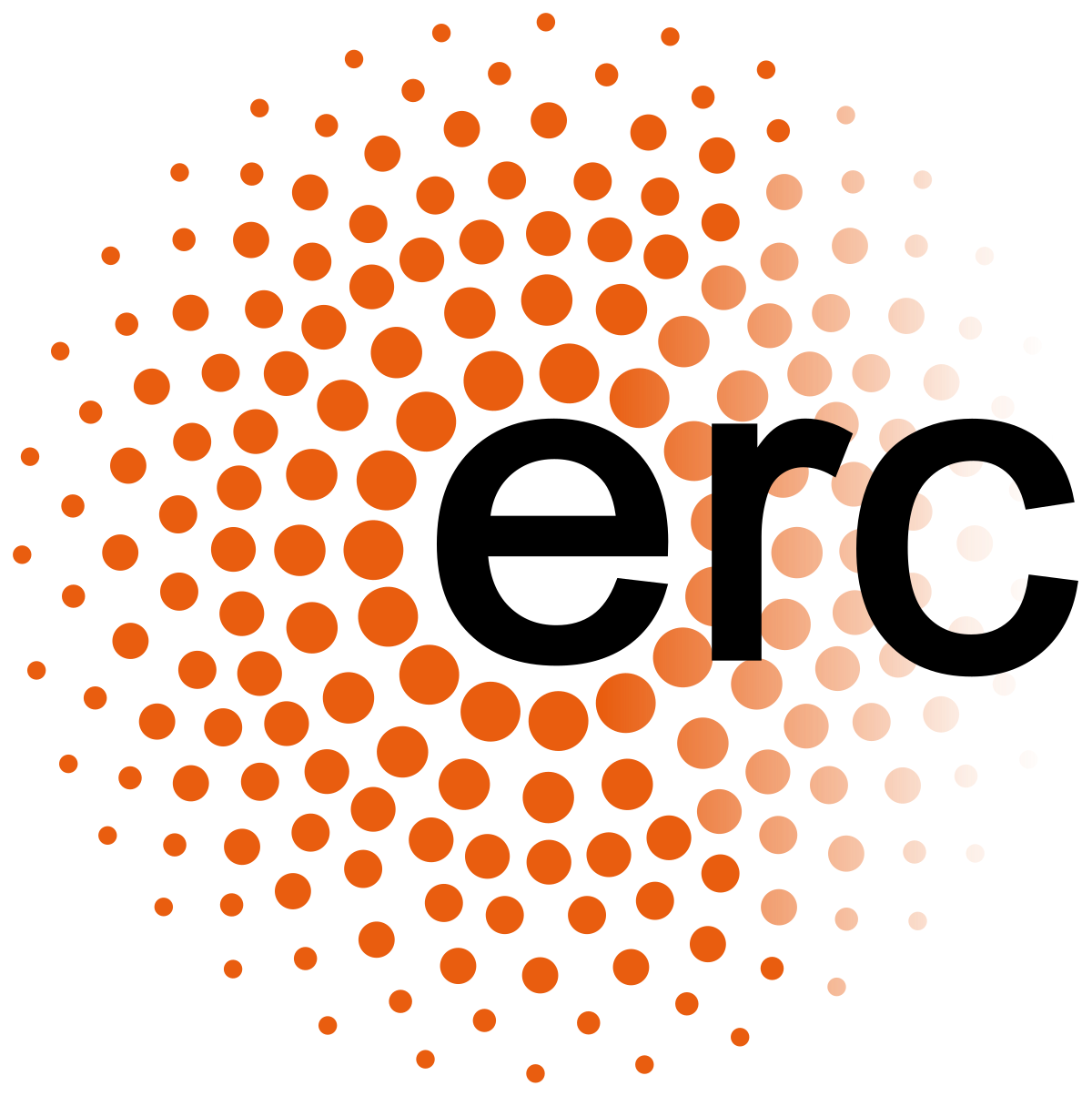}};
\end{scope}
\end{tikzpicture}

\section{Introduction}\label{sec:intro}

The Matroid Secretary Problem (MSP), introduced by~\textcite{babaioff_2007_matroids}, is a natural and well-known generalization of the classical Secretary Problem~\cite{dynkin_1963_secretary}, motivated by strong connections and applications in mechanism design. Formally, MSP is an online selection problem where we are given a matroid $\M = (N, \I)$,%
\footnote{A matroid $\M$ is a pair $\M = (N, \I)$ where $N$ is a finite set and $\I \subseteq 2^{N}$ is a nonempty family satisfying: 1) if $A \subseteq B$ and $B \in \I$ then $A \in \I$, and 2) if $A, B \in \I$ and $\card{B} > \card{A}$ then $\exists e \in B \setminus A$ such that $A \cup \{e\} \in \I$.}
with elements of unknown weights $w \colon N \to \R_{\geq 0}$ that appear one-by-one in uniformly random order. Whenever an element appears, it reveals its weight and one has to immediately and irrevocably decide whether to select it.
The goal is to select a set of elements $I \subseteq N$ that
\begin{enumerate*}
    \item is independent, i.e., $I \in \I$, and
    \item has weight $w(I) = \sum_{e \in I} w(e)$ as large as possible.
\end{enumerate*}
The key challenge in the area is to settle the notorious Matroid Secretary Problem (MSP) Conjecture:

\begin{conjecture}[\cite{babaioff_2007_matroids}]
    There is an $O(1)$-competitive algorithm for MSP.
\end{conjecture}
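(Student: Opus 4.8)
The displayed statement is the \emph{Matroid Secretary Conjecture} itself, which at the time of writing remains open; what follows is therefore a proposal for an attack route rather than a complete argument, and I will be explicit about where the essential difficulty lies.

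The natural starting point is the chain of partial results already available. On the one hand, $O(1)$-competitive algorithms are known for many structured classes---graphic and cographic \cite{korula_2009_algorithms}, transversal \cite{dimitrov_2012_competitive}, laminar \cite{im_2011_secretary,jaillet_2013_advances}, regular and decomposable \cite{dinitz_2013_matroid}, and more---while general matroids admit an $O(\log\log\rank)$-competitive algorithm via the bucketing machinery of \cite{lachish_2014_loglog,feldman_2018_simple}. A proof of the conjecture must therefore either (a)~remove the $\log\log\rank$ loss from the general algorithm, or (b)~exhibit a single online reduction bringing an arbitrary matroid into a form on which an $O(1)$ guarantee is already known. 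Route~(a) would plausibly proceed by replacing the two-level bucketing of weight/density scales with a scheme handling \emph{all} scales simultaneously and charging each against $\OPT$ with only constant loss; the obstacle is that the known analyses need the buckets to be well separated in weight, and one can afford only $O(\log\log\rank)$ such buckets before the separation degrades.

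Route~(b) is the one suggested by the techniques of the present paper, and I would pursue it as follows. First, in an initial sampling phase observe a constant fraction of the elements and build from them an approximate rank--density curve $\rhogrid$ of the revealed submatroid, exactly as in the random-assignment analysis; this yields a principal-partition-style stratification of (an estimate of) the ground set into pieces of roughly uniform density. Second, on each stratum run a ``dense-matroid'' secretary subroutine: a uniformly dense matroid behaves, for selection purposes, essentially like a uniform matroid of large rank, on which a threshold rule calibrated from the sampled weights is $O(1)$-competitive. Third, glue the strata via a matroid-union / direct-sum argument so that the union of the per-stratum outputs is independent and captures a constant fraction of $\OPT$ on each stratum, hence of $\OPT$ overall.

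The step I expect to be the genuine obstacle---and the reason the conjecture is still open---is the passage from the random-assignment guarantee to the fully adversarial one in the second and third steps. With adversarial weights the adversary can place the large weights on a sparse, structurally fragile part of the matroid, so that the density strata estimated from a random prefix no longer predict where the weight of $\OPT$ sits; the concentration that makes a uniformly random permutation of a fixed multiset of weights tractable is simply absent. Concretely, one would need a \emph{weight-oblivious} decomposition of the matroid---a partition into minors each of which is provably easy for adversarial-weight MSP---that can be approximately identified online from a sample of elements alone, independently of the realized weights. No such decomposition is currently known, and constructing one (or proving it cannot exist, thereby forcing route~(a)) is the crux of the problem.
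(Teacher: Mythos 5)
You are right that this ``statement'' is the Matroid Secretary Conjecture itself, which the paper states as an open conjecture attributed to \cite{babaioff_2007_matroids} rather than proving. There is no proof of it in the paper for me to compare you against --- the paper's actual contribution (\cref{thm:main}) is an $O(1)$-competitive algorithm for the \emph{random-assignment} variant with the matroid unknown upfront, not for adversarial-weight MSP.

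That said, your discussion is accurate and your route~(b) is essentially a description of what the paper \emph{does} do, but only in the random-assignment setting: sample a constant fraction, estimate the rank--density curve, stratify into near-uniformly-dense minors via the principal-partition structure, run a uniform-matroid-style threshold rule on each stratum, and glue via direct sum (\cref{thm:GRP}, \cref{thm:aidedAlgorithm}, \cref{thm:goodEvent}). You also correctly locate why this does not resolve the conjecture: every step of the paper's analysis leans on the fact that weights are assigned uniformly at random, which is what lets the density-based (weight-oblivious) stratification predict where the optimum's value sits and what makes the per-stratum threshold rule competitive. With adversarial weights, the density profile learned from a sample says nothing about the weight profile, and the ``weight-oblivious decomposition'' you call for is indeed the missing ingredient. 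So your proposal is a fair assessment of the state of the problem rather than a proof, and it is consistent with, and informed by, the paper's techniques.
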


The best-known procedures for MSP are $O(\log\log(\rank(\M)))$-competitive \cite{lachish_2014_loglog,feldman_2018_simple}, where $\rank(\M)$ is the rank of the matroid $\M$, i.e., the cardinality of a largest independent set.

Whereas the MSP Conjecture remains open, extensive work in the field has led to constant-competitive algorithms for variants of the problem and restricted settings. This includes constant-competitive algorithms for specific classes of matroids~\cite{
    babaioff_2018_matroid,
    korula_2009_algorithms,
    im_2011_secretary,
    soto_2013_matroid,
    jaillet_2013_advances,
    ma_2013_simulatedSTACS,
    dimitrov_2012_competitive,
    kesselheim_2013_optimal,
    dinitz_2013_matroid,
}.
Moreover, in terms of natural variations of the problem, \textcite{soto_2013_matroid} showed that constant-competitiveness is achievable in the so-called \emph{Random-Assignment MSP}, or RA-MSP for short. Here, an adversary chooses $\card{N}$ weights, which are then assigned uniformly at random to the ground set elements $N$ of the matroid. (Soto's result was later extended by~\textcite{oveis_2013_variants} to the setting where the arrival order of the elements is adversarial instead of uniformly random.) Constant-competitive algorithms also exist for the \emph{Free Order Model}, where the algorithm can choose the order in which elements appear~\cite{jaillet_2013_advances}.

Intriguingly, a key aspect of prior advances on constant-competitive algorithms for special cases and variants of MSP is that they heavily rely on knowing the full matroid $\M$ upfront. This is also crucially exploited in \citeauthor{soto_2013_matroid}'s work on RA-MSP. In fact, if the matroid is not known upfront in full, there is no natural variant of MSP for which a constant-competitive algorithm is known.

A high reliance on knowing the matroid $\M = (N, \I)$ upfront (except for its size $\card{N}$) is undesirable when trying to approach the MSP Conjecture, because it is easy to obstruct an MSP instance by adding zero-weight elements. Not surprisingly, all prior advances on the general MSP conjecture, like the above-mentioned $O(\log\log(\rank(\M)))$-competitive algorithms~\cite{lachish_2014_loglog, feldman_2018_simple} and also earlier procedures~\cite{babaioff_2007_matroids, chakraborty_2012_improved}, only need to know $\card{N}$ upfront and make calls to an independence oracle on elements revealed so far. Thus, for RA-MSP, it was raised as an open question, both in~\cite{soto_2013_matroid} and~\cite{oveis_2013_variants}, whether a constant-competitive algorithm exists without knowing the matroid upfront. The key contribution of this work is to affirmatively answer this question, making the random assignment setting the first MSP variant for which a constant-competitive algorithm is known without knowing the matroid and without any restriction on the underlying matroid.

\begin{theorem}\label{thm:main}
    There is a constant-competitive algorithm for RA-MSP with only the cardinality of the matroid known upfront.
\end{theorem}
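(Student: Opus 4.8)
The plan is to build on the algorithm of \textcite{soto_2013_matroid} for RA-MSP, which uses the principal partition of $\M$ to split it into uniformly dense minors and runs a secretary-type threshold procedure on each, and to remove its reliance on knowing $\M$ upfront by inserting an online \emph{learning phase}. The object I would track is the \emph{rank-density curve} of the matroid: informally, the staircase function recording, for each density level $\lambda \geq 1$, the rank of the densest minor of $\M$ of density at least $\lambda$ --- equivalently, the data $\rho_1 < \dots < \rho_k = \rank(\M)$ and $\lambda_1 > \dots > \lambda_k$ read off the principal sequence. This curve is essentially the only information about $\M$ that Soto's approach uses, so the task reduces to (a) approximately recovering it online and (b) turning an approximate curve into a constant-competitive selection rule.

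First I would devote the elements arriving in, say, the first half of the horizon to pure observation, maintaining an independence oracle on the revealed set $R$ only. Since the arrival order is uniformly random, $R$ is a uniformly random subset of $N$ of size $\approx n/2$; by matroid random-sampling concentration (in the style of Karger~\cite{karger_1998}), the rank of $\mrestrict{\M}{R}$ and of its minors is, after rescaling by $n/\card{R}$ and a suitably coarse discretization of the density scale into buckets, a faithful copy of the corresponding quantities for $\M$. Hence the rank-density curve of $\mrestrict{\M}{R}$ approximates that of $\M$. Simultaneously, the weights seen on $R$ form a uniformly random subsample of the fixed multiset of $n$ weights, so empirical quantiles of the observed weights approximate the true ones; combining the two, after the observation phase I have, for each density bucket $i$, an estimate of the rank that $\M$ spends at density $\geq \lambda_i$ together with a weight threshold $\tau_i$ that roughly isolates the weights that a near-optimal solution would route to that density level.

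In the second phase I would run a threshold algorithm: process the remaining elements and accept an arriving element of weight $w$ into bucket $i$, with $i$ determined from $w$ via the thresholds $\tau_i$, provided the current selection stays independent and the bucket's rank budget is not yet exhausted. The analysis mirrors Soto's, bucket by bucket: a bucket corresponds to an approximately uniformly dense minor, and in a uniformly dense matroid a uniformly random set of the relevant size is close to spanning, so a constant fraction of the high-weight elements aimed at that minor can in fact be inserted in expectation --- random assignment is precisely what makes the relevant heavy elements land roughly uniformly. Summing the per-bucket guarantees and comparing against an upper bound on $\OPT$ derived from the same rank-density decomposition yields the constant competitive ratio, the lost constants coming from the sampling phase (factor $\approx 2$), the discretization, and Soto's within-bucket argument.

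The hard part will be controlling how the learning error propagates. The rank-density curve involves ranks of \emph{optimal} dense minors, which can shift discontinuously, and the uniformly dense minors of $\mrestrict{\M}{R}$ need not be restrictions of those of $\M$, so one cannot simply transport Soto's partition; instead one must argue robustly that \emph{any} selection rule consistent with a curve that is close to the true one still captures a constant fraction of $\OPT$. This amounts to a stability statement for the principal partition under uniform random sampling, phrased at the level of the rank-density curve, and choosing the discretization coarse enough for sampling concentration to hold uniformly across all buckets yet fine enough to keep the $\OPT$ comparison lossless up to a constant is the delicate balance at the heart of the proof.
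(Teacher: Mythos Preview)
Your high-level framework --- sample a constant fraction, learn the rank-density curve of $\M$ from the sample, then use that curve to guide selection on the remainder --- matches the paper's approach, and your intuition that Karger-style random-sampling concentration underlies the learning step is correct (the paper's \Cref{thm:concentrationBounds} is precisely a statement of this flavor, and \Cref{thm:goodEvent} formalizes that $\rho_{\mrestrict{\M}{S}}$ is an $(O(1),O(1))$-approximation of $\rho_{\M}$).

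The genuine gap is in your selection phase. You propose weight-based bucketing: route an arriving element to bucket $i$ according to where its weight falls relative to thresholds $\tau_i$, then enforce a per-bucket rank budget plus global independence. You then assert that ``a bucket corresponds to an approximately uniformly dense minor.'' This is not true. In the random-assignment model weights are assigned to ground-set elements uniformly at random, \emph{independently of matroid structure}, so the set of elements whose weight lands in a given interval is just a uniformly random subset of $N$ of a certain size --- it has no relation to any principal minor, and Soto's uniformly-dense analysis does not apply to it. Concretely, nothing prevents light elements accepted into low buckets from spanning the dense region where a heavy element later arrives; your greedy independence check offers no protection against this because the buckets carry no structural information about \emph{where in the matroid} their elements sit. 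You flag exactly this difficulty in your last paragraph but do not resolve it.

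The paper's resolution is the step you are missing: it uses the \emph{sample itself} to partition the non-sampled elements \emph{structurally}, not by weight. For each density level $\lambdagrid_i$ it forms $N_i = \mspan(D(S,\sfrac{\lambdagrid_i}{\beta})) \setminus (S \cup \mspan(D(S,\sfrac{\lambdagrid_{i-1}}{\beta})))$ and works in the minor $\M_i$ obtained by contracting the span of the denser sample part (see~\eqref{eq:chain}). Because these pieces form a chain of contractions, independent sets chosen separately in each $\M_i$ are automatically jointly independent in $\M$ --- this is what replaces Soto's principal partition without knowing $\M$. The concentration bound~\eqref{eq:concentrationCard} then guarantees that each $N_i$ contains many disjoint independent sets of the right size (\Cref{lem:goodSample}), after which a simple group-secretary routine (\Cref{alg:osp}, \Cref{lem:ospProcedure}) extracts $\Omega(r_i\,\eta(\lambdagrid_i))$ from each piece. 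No weight thresholds computed from the sample are ever used; the observed weights play no role in the algorithm beyond what the classical secretary subroutine sees locally.
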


Moreover, our result holds in the more general \emph{adversarial order with a sample} setting, where we are allowed to sample a random constant fraction of the elements and all remaining (non-sampled) elements arrive in adversarial order.

As mentioned, when the matroid is fully known upfront, an $O(1)$-competitive algorithm was known for RA-MSP even when the arrival order of all elements is adversarial~\cite{oveis_2013_variants}. Interestingly, for this setting it is known that, without knowing the matroid upfront, no constant-competitive algorithm exists. More precisely, a lower bound on the competitiveness of $\Omega(\sfrac{\card{N}}{\log\log\card{N}})$ was shown in~\cite{oveis_2013_variants}.

\paragraph{Organization of the Paper}

We start in \cref{sec:densities} with a brief discussion on the role of (matroid) densities in the context of random assignment models, as our algorithm heavily relies on densities. Decomposing the matroid into parts of different densities has been central in prior advances on RA-MSP. However, this crucially relies on knowing the matroid upfront. We work with a rank-density curve, introduced in \cref{sec:rankdensity}, which is also unknown upfront; however, we show that it can be learned approximately (in a well-defined sense) by observing a random constant fraction of the elements. \cref{sec:outline} provides an outline of our approach based on rank-density curves and presents the main ingredients allowing us to derive \cref{thm:main}. 
\cref{sec:curveProperties} takes a closer look at rank-density curves and shows some of their useful properties. 
\cref{sec:curveEstimate} showcases the main technical tool that allows us to approximate the rank-density curve from a sample set. 
Finally, \cref{sec:algorithm} combines the ingredients to present our final algorithm and its analysis.

We emphasize that we predominantly focus on providing a simple algorithm and analysis, refraining from optimizing the competitive ratio of our procedure at the cost of complicating the presentation.

We assume that all matroids are loopless, i.e., every element is independent by itself.
This is without loss of generality, as loops can simply be ignored in matroid secretary problems.
\section{Random-Assignment MSP and Densities}\label{sec:densities}

A main challenge in the design and analysis of MSP algorithms is how to protect heavier elements (or elements of an offline optimum) from being spanned by lighter ones that are selected earlier during the execution of the algorithm. In the random assignment setting, however, weights are assigned to elements uniformly at random, which allows for shifting the focus from protecting elements based on their weights to protecting elements based on their role in the matroid structure. Intuitively speaking, an element arriving in the future is at a higher risk of being spanned by the algorithm's prior selection if it belongs to an area of the matroid of large cardinality and small rank (a ``dense'' area) than an area of small cardinality and large rank (a ``sparse'' area).

This is formally captured by the notion of density: the \emph{density} of a set $U \subseteq N$ in a matroid $\M = (N, \I)$ is $\sfrac{\card{U}}{r(U)}$, where $r \colon 2^{N} \to \Z_{\geq 0}$ is the rank function of $\M$.%
\footnote{The rank function $r \colon 2^{N} \to \Z_{\geq 0}$ assigns to any set $U \subseteq N$ the cardinality of a maximum cardinality independent set in $U$, i.e., $r(U) \coloneqq \max\{\card{I} \colon I \subseteq U, I \in \I\}$.}

Densities play a crucial role in RA-MSP~\cite{soto_2013_matroid,oveis_2013_variants}. Indeed, prior approaches decomposed $\M$ into its \emph{principal sequence}, which is the chain $\emptyset \subsetneq S_{1} \subsetneq \ldots \subsetneq S_{k} = N$ of sets of decreasing densities obtained as follows. $S_{1} \subseteq N$ is the densest set of $\M$ (in case of ties it is the unique maximal densest set), $S_{2}$ is the union of $S_{1}$ and the densest set in the matroid obtained from $\M$ after contracting $S_{1}$, and so on until a set $S_{k}$ is obtained with $S_{k} = N$. \cref{subfig:principalDecomp} shows an example of the principal sequence of a graphic matroid.

\begin{figure}[ht]
    \centering
    \begin{subfigure}[b]{0.485\linewidth}
        \centering
\definecolor{colora}{RGB}{101,  44, 179}
\colorlet{colorb}{green!50!black}
\definecolor{colorc}{RGB}{ 33, 207, 214}
\definecolor{colord}{RGB}{219,   9, 104}
\colorlet{colore}{orange!70!yellow}
\colorlet{colorf}{blue}
\colorlet{colorg}{black}

\begin{tikzpicture}[
    scale=0.85,
    layer/.style={circle, draw=#1, fill=#1, inner sep=1pt},
    layer/.default={black},
    layera/.style={layer, colora},
    layerb/.style={layer, colorb},
    layerc/.style={layer, colorc},
    layerd/.style={layer, colord},
    layere/.style={layer, colore},
    layerf/.style={layer, colorf},
    layerg/.style={layer, colorg},
]

    \node at (0,-4.2) {};

    \begin{scope}[every node/.style={circle, inner sep=1.3pt, fill=black, draw=white}]
        \foreach \i [evaluate=\i as \a using int(\i * 45)] in {0, ..., 7}{\node (va\i) at (\a:  1) {};}
        
        \node (vb1) at (2, 1) {};
        \node (vb2) at (2, 0) {};
        
        \node (vc1) at (-0.5, -2) {};
        \node (vc2) at (0, -1.5) {};
        \node (vc3) at (0, -2.5) {};
        \node (vc4) at (0.5, -2) {};
        
        \node (vd1) at (2, -1.5) {};
        \node (vd2) at (2, -2.5) {};
        
        \node (ve1) at (3, 1) {};
        \node (ve2) at (4, 1) {};
        \node (ve3) at (4, 0) {};
        \node (ve4) at (3, 0) {};
        \node (ve5) at (3, -2.5) {};
        \node (ve6) at (3.5, -1.5) {};
        \node (ve7) at (4, -2.5) {};
        
        \foreach \x in {0, ..., 4}{\node (vf\x) at (\x, -3.5) {};}
        
        \node (vg1) at (3, -0.7) {};
        \node (vg2) at (4, -0.7) {};
        \node (vg3) at (2.8,-1.8) {};

    \end{scope}

    \begin{scope}[line width=1.2pt]
        \foreach \i in {0, ..., 6}{
            \pgfmathtruncatemacro{\s}{\i + 1}
            \foreach \j in {\s, ..., 7}{
                \draw[layera] (va\i) -- (va\j);
            }
        }

        \draw[layerb] (vb1) edge[bend right=0] (va1)
                      (vb1) edge[bend right=10] (va2)
                      (vb1) edge[bend right=40] (va3)
                      (vb1) edge[bend right=0] (vb2)
                      (vb2) edge[bend right=0] (va0);

        \draw[layerc] (vc1) edge[bend left=20] (va4)
                      (vc1) edge[bend right=0] (va5)
                      (vc1) edge (vc2)
                      (vc1) edge (vc3)
                      (vc2) edge (vc4)
                      (vc2) edge (vc3)
                      (vc3) edge (vc4)
                      (vc2) edge[bend right=0] (va6)
                      (vc4) edge[bend right=0] (vb2);

        \draw[layerd] (vd1) edge (vb2)
                      (vd1) edge (va7)
                      (vd1) edge (vd2)
                      (vd2) edge (vc3);

        \draw[layere] (ve1) edge[bend right=30] (va2)
                      (ve4) edge (vb2)
                      (ve1) edge (ve2)
                      (ve2) edge (ve3)
                      (ve3) edge (ve4)
                      (ve4) edge (ve1)
                      (ve5) edge (ve6)
                      (ve6) edge (ve7)
                      (ve7) edge (ve5);

        \foreach \x [evaluate=\x as \sx using int(\x+1)] in {0, ..., 3}{\draw[layerf] (vf\x) edge (vf\sx);}
        \draw[layerf] (vf0) edge (vc1)
                      (vf4) edge[bend right=20] (ve3);

        \draw[layer] (vg1) edge (vg2);
        \draw[layer] (vd1) edge (vg3);
        \draw[layer] (vg3) edge (ve5);
    \end{scope}

    \begin{scope}[font=\footnotesize]
        \node[colora] at ($(va0)+(0.2,0.3)$) {$S_1$};
        \node[colorb] at ($(vb2)+(-0.1,0.5)$)[right] {$S_2\!\!\setminus\!\! S_1$};
        \node[colorc] at ($(vc4)+(0.2,0.2)$)[right] {$S_3\!\!\setminus\!\! S_2$};
        \node[colord] at ($(vd2)+(-1.3,-0.3)$)[right] {$S_4\!\!\setminus\!\! S_3$};
        \node[colore] at ($(ve1)+(0.0,0.3)$)[right] {$S_5\!\!\setminus\!\! S_4$};
        \node[colorf] at ($(vf3)+(-0.1,0.3)$)[right] {$S_6\!\!\setminus\!\! S_5$};
        \node[colorg] at ($(vg3)+(-0.6,0.4)$)[right] {$S_7\!\!\setminus\!\! S_6$};
    \end{scope}
\end{tikzpicture}
         \caption{Principal decomposition.}\label{subfig:principalDecomp}
    \end{subfigure}%
    \begin{subfigure}[b]{0.515\linewidth}
        \centering
\definecolor{colora}{RGB}{101,  44, 179}
\colorlet{colorb}{green!50!black}
\definecolor{colorc}{RGB}{ 33, 207, 214}
\definecolor{colord}{RGB}{219,   9, 104}
\colorlet{colore}{orange!70!yellow}
\colorlet{colorf}{blue}
\definecolor{colorg}{RGB}{144, 144, 144}

\begin{tikzpicture}[
    xscale=0.18,
    yscale=0.75,
    layer/.style={circle, draw=#1, fill=#1, inner sep=2pt},
    layer/.default={black},
    layera/.style={layer, colora},
    layerb/.style={layer, colorb},
    layerc/.style={layer, colorc},
    layerd/.style={layer, colord},
    layere/.style={layer, colore},
    layerf/.style={layer, colorf},
    layerg/.style={layer, colorg},
    gridline/.style={dotted, color = black!60},
    plotmain/.style={thick, fill=none},
    plothint/.style={thick, dashed, fill=none}
]
\small
    \pgfmathsetmacro\xmin{0.0}
    \pgfmathsetmacro\xmax{28}
    \pgfmathsetmacro\xticks{25}
    \pgfmathsetmacro\ymin{0.0}
    \pgfmathsetmacro\ymax{5.5}
    \pgfmathsetmacro\yticks{5}

    \draw[thick, ->](\xmin, \ymin) -- (\xmax, \ymin) node[pos=1,inner sep=0pt,shift={(2mm,-2mm)}]{rank};
    \draw[thick, ->](\xmin, \ymin) -- (\xmin, \ymax) node[above, pos=1]{density};

    \foreach \i in {5, 10, ..., \xticks}{\draw[gridline] (\i, \ymin) -- (\i, \ymax) node[black, below, pos=0]{\i};}
    \foreach \i in {1, ..., \yticks}{\draw[gridline] (\xmax, \i) -- (\xmin, \i) node[black, left,  pos=1]{\i};}

    \begin{scope}{every node/.style+={fill}}
        \coordinate (v0) at (0, 4);
        \node[layera] (v1) at (7, 4) {};
        \node[layerb] (v2) at (9, 2.5) {};
        \node[layerc] (v3) at (13, 2.25) {};
        \node[layerd] (v4) at (15, 2) {};
        \node[layere] (v5) at (19, 1.5) {};
        \node[layerf] (v6) at (24, 1.2) {};
        \node[layer]  (v7) at (27, 1) {};
        \coordinate (v8) at (27, 0);
    \end{scope}

    \foreach[count=\i] \style [evaluate=\i as \j using int(\i - 1)] in {layera, layerb, layerc, layerd, layere, layerf, black}{
        \draw[plotmain, \style] (v\i -| v\j) -- (v\i);
    }
    \foreach[count=\i] \style [evaluate=\i as \j using int(\i + 1)] in {layera, layerb, layerc, layerd, layere, layerf, black}{
        \draw[plothint, \style] (v\i) -- (v\j -| v\i);
    }
\end{tikzpicture}
         \caption{Rank-density curve.}\label{subfig:rankDensityCurve}
    \end{subfigure}
    \caption{\cref{subfig:principalDecomp} shows a graph representing a graphic matroid together with its principal sequence $\emptyset \subsetneq S_{1} \subsetneq \dots \subsetneq S_{7} = N$, where $N$ are all edges of the graph. \cref{subfig:rankDensityCurve} shows its rank-density curve. Each step in the rank-density curve (highlighted by a circle) corresponds to one $S_{i}$ and has $y$-coordinate equal to the density of $\M_{i} = \mminor{\M}{S_{i - 1}}{S_{i} \setminus S_{i - 1}}$ and $x$-coordinate equal to $r(S_{i})$.}
	\label{fig:rdc-simple}
\end{figure}

Previous approaches then considered, independently for each $i \in [k] \coloneqq \{1, \dots, k\}$, the matroid $\M_{i} \coloneqq \mminor{\M}{S_{i - 1}}{S_i\setminus S_{i - 1}}$, i.e., the matroid obtained from $\M$ by first contracting $S_{i - 1}$ and then restricting to $S_{i} \setminus S_{i - 1}$. (By convention, we set $S_{0} \coloneqq \emptyset$.) These matroids are also known as the \emph{principal minors} of $\M$. Given an independent set in each principal minor, their union is guaranteed to be independent in the original matroid $\M$. Prior approaches (see, in particular,~\cite{soto_2013_matroid} for details) then exploited the following two key properties of the principal minors $\M_{i}$:

\begin{enumerate}
    \item\label{item:decompOffOptOk} $\sum_{i = 1}^{k} \E[w(\OPT(\M_{i}))] = \Omega(\E[w(\OPT(\M))])$, where $\OPT(\M)$ (and analogously $\OPT(\M_{i})$) is an (offline) maximum weight independent set in $\M$ and the expectation is over all random weight assignments.
    \item\label{item:decompGivesUnifDense} Each matroid $\M_{i}$ is \emph{uniformly dense}, which means that the (unique maximal) densest set in $\M_{i}$ is the whole ground set of $\M_{i}$.
\end{enumerate}

Property~\ref{item:decompOffOptOk} guarantees that, to obtain an $O(1)$-competitive procedure, it suffices to compare against the (offline) optima of the matroids $\M_i$. Combining this with property~\ref{item:decompGivesUnifDense} implies that it suffices to design a constant-competitive algorithm for uniformly dense matroids.
Since uniformly dense matroids behave in many ways very similarly to uniform matroids, which are a special case of uniformly dense matroids, it turns out that MSP on uniformly dense matroids admits a simple yet elegant $O(1)$-competitive algorithm. (See~\cite{soto_2013_matroid} for details.)
\section{Outline of Our Approach}\label{sec:outline}

As discussed, prior approaches~\cite{soto_2013_matroid,oveis_2013_variants} for RA-MSP heavily rely on knowing the matroid upfront, as they need to construct its principal sequence upfront. A natural approach would be to observe a sample set $S \subseteq N$ containing a constant fraction of all elements and then try to mimic the existing approaches using the principal sequence of $\mrestrict{\M}{S}$, the matroid $\M$ restricted to the elements in $S$. A main hurdle lies in how to analyze such a procedure as the principal sequence of $\mrestrict{\M}{S}$ can differ significantly from the one of $\M$.
In particular, one can construct matroids where it is likely that there are parts whose density is underestimated by a super-constant factor.
Moreover, $\mrestrict{\M}{S}$ may have many different densities not present in $\M$ (e.g., when $\M$ is uniformly dense).

We overcome these issues by not dealing with principal sequences directly, but rather using what we call the \emph{rank-density curve} of a matroid, which captures certain key parameters of the principal sequence. As we show, rank-density curves have three useful properties:
\begin{enumerate}
    \item They provide a natural way to derive a quantity that both relates to the offline optimum and can be easily compared against to bound the competitiveness of our procedure.
    \item They can be learned approximately by observing an $O(1)$-fraction of $N$.
    \item Approximate rank-density curves can be used algorithmically to protect denser areas from sparser ones without having to know the matroid upfront.
\end{enumerate}
\cref{sec:rankdensity} introduces rank-density curves and shows how they conveniently allow for deriving a quantity that compares against the offline optimum. \cref{sec:proofStrategy} then discusses our results on approximately learning rank-density curves and how this can be exploited algorithmically.

\subsection{Rank-Density Curves}\label{sec:rankdensity}

Given a matroid $\M = (N, \I)$, one natural way to define its rank-density curve $\rho_{\M} \colon \R_{> 0} \to \R_{\geq 0}$ is through its principal minors $\M_{1}, \dots, \M_{k}$, which are defined through the principal sequence $\emptyset \subsetneq S_{1} \subsetneq \dots \subsetneq S_{k} = N$ as explained in \cref{sec:densities}. For a value $t \in (0, \rank(\M)]$, let $i_{t} \in [k]$ be the smallest index such that $r(S_{i_{t}}) \geq t$. The value $\rho_{\M}(t)$ is then given by the density of $\M_{i_{t}}$. (See \cref{subfig:rankDensityCurve} for an example.) In addition, we set $\rho_{\M}(t) = 0$ for any $t > \rank(\M)$.

A formally equivalent way to define $\rho_{\M}$, which is more convenient for what we do later, is as follows. For any $S \subseteq N$ and $\lambda \in \R_{\geq 0}$, we define
\begin{equation}\label{eq:DSLambda}
    D_{\M}(S, \lambda) \in \underset{U \subseteq S}{\argmax}\left\{\card{U} - \lambda r(U)\right\}
\end{equation}
to be the unique maximal maximizer of $\max_{U \subseteq S}\{\card{U} - \lambda r(U)\}$. It is well-known that each set in the principal sequence $S_{1}, \dots, S_{k}$ is nonempty and of the form $D_{\M}(N, \lambda)$ for $\lambda \in \R_{\geq 0}$. This leads to the following way to define the rank-density curve, which is the one we use in what follows.

\begin{definition}[rank-density curve]\label{def:curve}
    Let $\M = (N, \I)$ be a matroid. Its \emph{rank-density} curve $\rho_{\M} \colon \R_{> 0} \to \R_{\geq 0}$ is defined by
    \begin{equation*}
        \rho_{\M}(t) \coloneqq \begin{cases}
            \max\left\{\lambda \in \R_{\geq 0} \colon r(D_{\M}(N, \lambda)) \geq t\right\} & \forall t \in (0, \rank(\M)] \\
            0                                                                              & \forall t > \rank(\M).
        \end{cases}
    \end{equation*}
\end{definition}

When the matroid $\M$ is clear from context, we also simply write $\rho$ instead of $\rho_{\M}$ for its rank-density curve and $D(N, \lambda)$ instead of $D_{\M}(N, \lambda)$. Note that $\rho$ is piecewise constant, left-continuous, and non-increasing. (See \cref{subfig:rankDensityCurve} for an example.) If $\M$ is a uniformly dense matroid with density $\lambda$, we have $\rho(t) = \lambda$ for $t\in (0,\rank(\mathcal{M})]$ and $\rho(t) = 0$ for $t\in (\rank(\mathcal{M}),\infty)$.

We now expand on how $\rho_{\M}$ is related to the expected offline optimum value $\E[\OPT(\M)]$ of an RA-MSP instance. To this end, we use the function $\eta \colon [0, \card{N}] \to \R_{\geq 0}$  defined by
\begin{equation}\label{eq:funcEta}
    \eta(a) \coloneqq \E_{R \sim \UnifDistr(N, \lfloor a \rfloor)}\left[\max_{e \in R} w(e)\right],
\end{equation}
where $\UnifDistr(N, \lfloor a \rfloor)$ is a uniformly random set of $\lfloor a \rfloor$ many elements out of $N$ (without repetitions); and we set $\eta(a)=0$ for $a \in [0,1)$ (i.e., when the set $R$ above is empty) by convention. 
In words, $\eta(a)$ is the expected maximum weight out of $\lfloor a \rfloor$ weights chosen uniformly at random from all the weights $\{w_{e}\}_{e \in N}$. Based on this notion, we assign the following value $F(\rho)$ to a rank-density curve $\rho$:
\begin{equation}\label{eq:funcF}
    F(\rho) \coloneqq \int_{0}^{\infty} \eta(\rho(t))\, dt = \int_{0}^{\rank(M)} \eta(\rho(t))\, dt,
\end{equation}
where the second equality holds because $\rho(t) = 0$ for $t> \rank(\mathcal{M})$.
Note that as the graph of $\rho$ is a staircase, the above integral is just a finite sum. 

One key property of the function $F$, is that it can be used as a proxy for the expected value of the offline optimum.
More precisely, %
the statement below shows that $F(\rho)$ is at most a constant factor smaller than the offline optimum --- see \cref{sec:curveProperties} for proof details.
This is the direction we need to be able to compare the output of our algorithm against the offline optimum.
Moreover, $F(\rho)$ is also no more than a constant-factor larger, which we do not need for our derivations; however, this is in particular also a consequence of the fact that our algorithm returns an independent set of expected weight $\Omega(F(\rho))$.
$\Cref{lem:optCurveRelation}$ is phrased in a slightly more general form that allows for applying it not just to the original matroid, but also to any minors thereof, which we need later.

\begin{lemma}\label{lem:optCurveRelation}
Let $w_1,\dots, w_n\in \mathbb{R}_{\geq 0}$ be $n$ weights, and let $\M$ be a matroid with a ground set of size $k\leq n$. 
Assume we first choose a uniformly random subset of $k$ weights among $w_1,\dots, w_n$, and then assign these weights uniformly at random to the elements of $\M$.
Then $\E[w(\OPT(\M))] \leq \frac{3e}{e - 1} \cdot F(\rho_{\M})$.
\end{lemma}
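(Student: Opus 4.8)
The plan is to reduce the statement to the principal-minor decomposition of $\M$ and then bound the offline optimum of each uniformly dense principal minor separately. Recall from \cref{sec:densities} that the principal sequence $\emptyset \subsetneq S_1 \subsetneq \dots \subsetneq S_k = N$ gives principal minors $\M_i = \mminor{\M}{S_{i-1}}{S_i \setminus S_{i-1}}$, each of which is uniformly dense; moreover, choosing an independent set in each $\M_i$ and taking the union yields an independent set in $\M$, and conversely $\OPT(\M)$ restricted appropriately gives independent sets in the minors. The key submodularity-type fact is that $w(\OPT(\M)) \le \sum_{i=1}^k w(\OPT(\M_i))$ for every weight assignment, because if $I^* = \OPT(\M)$ then $I^* \cap (S_i \setminus S_{i-1})$ is independent in $\M_i$ (contracting $S_{i-1}$ can only shrink rank, and $I^*\cap S_i$ is independent in $\mrestrict{\M}{S_i}$, so $I^* \cap (S_i\setminus S_{i-1})$ is independent after contraction). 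Hence $\E[w(\OPT(\M))] \le \sum_{i=1}^k \E[w(\OPT(\M_i))]$, and it suffices to prove, for each $i$, that $\E[w(\OPT(\M_i))] \le \frac{3e}{e-1}\cdot \int_{r(S_{i-1})}^{r(S_i)} \eta(\rho_\M(t))\,dt$, since by \cref{def:curve} we have $\rho_\M(t) = \mathrm{density}(\M_i)$ for $t \in (r(S_{i-1}), r(S_i)]$, so these integrals sum to $F(\rho_\M)$.

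So the heart of the matter is a single uniformly dense matroid: I would state and prove the claim that if $\mathcal{N}$ is a uniformly dense matroid with rank $r$ and density $\lambda = |E(\mathcal{N})|/r$, whose elements receive $|E(\mathcal{N})|$ weights drawn uniformly at random (without repetition) from a larger pool $w_1,\dots,w_n$, then $\E[w(\OPT(\mathcal{N}))] \le \frac{3e}{e-1}\cdot r\cdot \eta(\lambda)$. The intuition is that in a uniformly dense matroid every element lies in a circuit of size roughly $\lambda+1$, so a greedy/exchange argument shows the optimum basis cannot contain too many of the heaviest elements; more precisely, standard facts about uniformly dense matroids (cf.\ \cite{soto_2013_matroid}) give that the top $j$ weights present on $\mathcal{N}$ can contribute at most about $j/\lambda$ to any independent set, so $w(\OPT(\mathcal{N}))$ is dominated by $\sum_{\ell=1}^{r}$ (the $\ell$-th largest among $\lambda$ random weights), and taking expectations and using concavity/telescoping bounds this by a constant times $r\,\eta(\lambda)$. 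Here the random-assignment structure is what lets us pass from "the top weights on $\mathcal{N}$" to the quantity $\eta(\lambda) = \E_{R\sim\UnifDistr(N,\lfloor\lambda\rfloor)}[\max_{e\in R} w(e)]$: a uniformly random subset of size $|E(\mathcal{N})|$ can be thought of as partitioned into $r$ random blocks of size $\lambda$, the maximum over each block being distributed as $\eta(\lambda)$, and $w(\OPT(\mathcal{N}))$ is at most (a constant times) the sum of these $r$ block-maxima. The constant $\tfrac{3e}{e-1}$ presumably comes from combining a factor $e/(e-1)$ from a "random subset vs.\ random block" coupling argument with a small integer slack for rounding $\lambda$ and for the circuit-size bound being $\lambda$ rather than $\lambda+1$.

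The main obstacle I anticipate is making the uniformly dense bound rigorous without knowing the precise structural lemma the authors rely on: the cleanest route is via the covering characterization of uniformly dense matroids — every uniformly dense matroid of density $\lambda$ can be fractionally covered by bases in a balanced way, equivalently its elements can be fractionally partitioned so that each independent set meets each "part" in a controlled number of elements — and then using a union/averaging bound to say that the $j$ heaviest elements present span rank at most $\lceil j/\lambda\rceil$ (or at least that $r(U) \ge |U|/\lambda$ for all $U$, which is exactly uniform density). Converting "$r(U)\ge |U|/\lambda$ for all $U$" into the weight bound $w(\OPT) \le \sum_{\ell=1}^r (\ell\text{-th order statistic of }\lambda\text{ draws})$ requires a careful exchange argument on the sorted weight sequence, and then relating a sum of order statistics of a size-$\lfloor\lambda\rfloor$ sample (without replacement, from the pool restricted to the elements of the minor, which is itself a random sub-sample of the global pool) to $r\,\eta(\lambda)$ needs a coupling that handles the nested randomness — this nesting (random weights for $\M$, then random assignment within $\M$) is exactly why the lemma is phrased with the pool $w_1,\dots,w_n$ and $k\le n$, and is where I'd expect the $e/(e-1)$ factor and the bulk of the technical care to go.
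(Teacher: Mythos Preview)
Your high-level decomposition into principal minors is correct and matches the paper, but the core step --- bounding $\E[w(\OPT(\mathcal{N}))]$ for a single uniformly dense minor $\mathcal{N}$ of rank $r$ and density $\lambda$ by $O(r\,\eta(\lambda))$ --- is where your proposal remains a sketch, and the mechanism you suggest is not the one that works. The circuit-size and order-statistic heuristics you describe do not yield the bound: uniform density gives $r(U)\ge |U|/\lambda$, which is a \emph{lower} bound on rank and hence does not directly cap how many heavy elements an optimal basis can contain; and the naive estimate ``sum of top $r$ weights out of $r\lambda$'' is not $O(r\,\eta(\lambda))$ in general (bounding each order statistic by the maximum gives only $r\cdot\eta(r\lambda)\le 2r^2\eta(\lambda)$).

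The paper closes this gap via a specific intermediate object: the \emph{random partition matroid} $\mpartition_i$ associated to each principal minor $\M_i$ (randomly assign the elements of $\M_i$ to $r_i$ classes and allow at most one per class). It then invokes Soto's Lemma~4.2, which states $\E[w(\OPT(\M))]\le \tfrac{e}{e-1}\,\E[w(\OPT(\oplus_i \mpartition_i))]$; this is exactly the non-trivial ``$w(\OPT)$ is at most a constant times the sum of block-maxima'' step you allude to but do not justify. Once this is in hand, the rest is a clean computation: $\E[w(\OPT(\mpartition_i))]=r_i\cdot\E[\eta(X)]$ with $X\sim\BinomDistr(n_i,1/r_i)$, and the factor $3$ comes from the elementary bound $\E[\eta(X)]\le 3\,\eta(\E X)=3\,\eta(\lambda_i)$ (\cref{lem:etaProperties}\ref{item:etaExpectation}), not from any rounding or circuit-size slack. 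So the constants are $\tfrac{e}{e-1}$ from Soto's lemma and $3$ from the binomial tail of $\eta$; your speculation about their origin is off, and the missing ingredient in your plan is precisely a proof (or citation) of the random-partition comparison.
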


Thus, to be constant-competitive, it suffices to provide an algorithm returning an independent set of expected weight $\Omega(F(\rho))$.

\paragraph{RA-MSP Subinstances}

We will often work with minors of the matroid that is originally given in our RA-MSP instance, and apply certain results to such minors instead of the original matroid.
To avoid confusion, we fix throughout the paper one RA-MSP instance with matroid $\M_{\orig} = (N_{\orig}, \I_{\orig})$, whose ground set size we denote by $n\coloneqq |N_{\orig}|$, and whose elements have unknown but (adversarially) fixed weights $w \colon N_{\orig} \to \R_{\geq 0}$, and our goal is to design an $O(1)$-competitive algorithm for this one instance.
The weights $w$ of the original instance are the only weights we consider, even when working with RA-MSP subinstances on minors of $\M_{\orig}$, as their elements also obtain their weights uniformly at random from $w$.
In particular, the function $F$ as defined in~\eqref{eq:funcF} is always defined with respect to the original vector of $n$ weights $w$.

To formally describe the type of matroids we get as subinstances, we introduce the notion of a \emph{matroid with $w$-sampled weights}.
More precisely, if $\mathcal{M}$ is a matroid with a ground set size of $k\leq n$, then \emph{$\mathcal{M}$ with $w$-sampled weights} is a randomly weighted version of the matroid, where we pick a uniform subset of $k$ among the $n$ entries of $w$ and assign them uniformly at random to the ground set of $\mathcal{M}$.
Clearly, any minor of $\mathcal{M}$ is of this type.

Even though we may have $k<n$, a matroid $\M$ with $w$-sampled weights can be interpreted as a RA-MSP instance, as it corresponds to the adversary first choosing uniformly at random a subset of $k$ weights among the weights in $w$, which then get assigned uniformly at random to the elements.

\subsection{Proof Plan for \cref{thm:main} via Rank-Density Curves}\label{sec:proofStrategy}

We now expand on how one can learn an approximation $\rhoapprox$ of the rank-density curve $\rho_{\M_{\orig}}$ and how this can be exploited algorithmically to return an independent set of expected weight $\Omega(F(\rho_{\M_{\orig}}))$, which by \cref{lem:optCurveRelation} implies $O(1)$-competitiveness of the procedure.
To this end, we start by formalizing the notion of an \emph{approximate rank-density} curve, which relies on the notion of \emph{downshift}.

\begin{definition}\label{def:approxCurve}
    Let $\rho \colon \R_{> 0} \to \R_{\geq 0}$ be a non-increasing function and let $\alpha, \beta \in \R_{\geq 1}$. The \emph{$(\alpha, \beta)$-downshift} $\rhodownshift \colon \R_{> 0} \to \R_{\geq 0}$ of $\rho$ is defined via an auxiliary function $\phi \colon \R_{> 0} \to \R_{\geq 0}$ as follows:
    \begin{equation*}
        \phi(t) \coloneqq \begin{cases}
            \frac{\rho(\alpha)}{\beta}        & \forall t \in (0, 1], \\
            \frac{\rho(\alpha\cdot t)}{\beta} & \forall t > 1;
        \end{cases}
        \quad
        \rhodownshift(t) \coloneqq \begin{cases}
            1 & \text{ if } \phi(t) \in (0, 1), \\
            \phi(t) & \text{ otherwise }.
        \end{cases}
    \end{equation*}
    Moreover, a function $\rhoapprox \colon \R_{> 0} \to \R_{\geq 0}$ is called an \emph{$(\alpha, \beta)$-approximation} of $\rho$ if it is non-increasing and $\rhodownshift \leq \rhoapprox \leq \rho$, where $\rhodownshift$ is the $(\alpha, \beta)$-downshift of $\rho$.
\end{definition}

One helpful way to think about an $(\alpha,\beta)$-downshift is as a slightly modified version of $\sfrac{\rho(\alpha\cdot t)}{\beta}$.
This is also where the name stems from, as $\sfrac{\rho(\alpha\cdot t)}{\beta}$ corresponds to shifting, when thinking in doubly logarithmic scale, the function $\rho$ to the left and down, corresponding to the factors $\alpha$ and $\beta$, correspondingly.
This function is then modified in two ways.
First, for $t\in (0,1]$, we lower its value to $\sfrac{\alpha}{\beta}$.
This is done because we are not able to accurately estimate densities for low ranks.
Fortunately, this turns out not to be an issue to obtain a constant-competitive algorithm because we can set off the loss of this modification by running, with some probability, the classical single secretary algorithm to return the heaviest element with constant probability.
(This is in particular implied by \Cref{lem:shiftedCurve} below and discussed right after.)
The second modification is that we round up values in $(0,1)$.
This reflects the fact that density values are always at least one.

One issue when working with an $(O(1), O(1))$-approximation $\rhoapprox$ of $\rho$ is that $F(\rhoapprox)$ may be more than a constant factor smaller than $F(\rho)$, and we thus cannot compare against $F(\rhoapprox)$ to obtain an $O(1)$-competitive procedure.
This happens due to the above-mentioned way how $(\alpha,\beta)$-downshifts are defined; more precisely, that values for $t\in (0,1]$ got rounded down to $\sfrac{\rho(\alpha)}{\beta}$.
However, as the following lemma shows, also in this case we can obtain a simple lower bound for the value $F(\rhoapprox)$ in terms of $F(\rho)$ and the largest weight $w_{\max}$ in $w$ --- a proof of the statement can be found at the end of \Cref{sec:curveProperties}.

\begin{lemma}\label{lem:shiftedCurve}
    Let $\M$ be a matroid with $w$-sampled weights, let $\alpha,\beta \in \mathbb{R}_{\geq 1}$, and let $\rhoapprox$ be an $(\alpha, \beta)$-approximation of $\rho=\rho_{\M}$. Then $F(\rho) \leq 2 \alpha \beta F(\rhoapprox) + \alpha w_{\max}$.
\end{lemma}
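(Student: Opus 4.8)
The plan is to first pass from $\rhoapprox$ to the $(\alpha,\beta)$-downshift $\rhodownshift$, and then compare $F(\rho)$ against $F(\rhodownshift)$ via a change of variables. Since $\rhoapprox$ is an $(\alpha,\beta)$-approximation of $\rho$ we have $\rhodownshift \le \rhoapprox$ pointwise, and $\eta$ is non-decreasing, so $F(\rhodownshift) \le F(\rhoapprox)$; hence it suffices to show $F(\rho) \le 2\alpha\beta\,F(\rhodownshift) + \alpha w_{\max}$. Substituting $t = \alpha s$ in the definition of $F$ gives $F(\rho) = \alpha\int_0^\infty \eta(\rho(\alpha s))\,ds$, which I would split at $s=1$ as $F(\rho) = \alpha\int_0^1 \eta(\rho(\alpha s))\,ds + \alpha\int_1^\infty \eta(\rho(\alpha s))\,ds$.

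For the first term, $\eta(a) \le w_{\max}$ for every $a$ (the maximum of any set of weights is at most the largest weight), so $\alpha\int_0^1 \eta(\rho(\alpha s))\,ds \le \alpha w_{\max}$; this is precisely the term that absorbs the loss from the downshift lowering the curve on $(0,1]$. For the second term, recall that for $s>1$ the auxiliary function $\phi$ of \cref{def:approxCurve} satisfies $\rho(\alpha s) = \beta\,\phi(s)$, and that $\phi(s) \le \rhodownshift(s)$ always (if $\phi(s)\in(0,1)$ then $\rhodownshift(s)=1$, otherwise they agree). Hence $\rho(\alpha s) \le \beta\,\rhodownshift(s)$ and, $\eta$ being non-decreasing, $\eta(\rho(\alpha s)) \le \eta(\beta\,\rhodownshift(s))$. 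The crux is a scaling estimate for $\eta$: for every real $\beta \ge 1$ and every $y$ with $y=0$ or $y \ge 1$ one has $\eta(\beta y) \le 2\beta\,\eta(y)$. Since $\rhodownshift(s) \in \{0\}\cup[1,\infty)$, this yields $\eta(\rho(\alpha s)) \le 2\beta\,\eta(\rhodownshift(s))$ for all $s>1$, so $\alpha\int_1^\infty \eta(\rho(\alpha s))\,ds \le 2\alpha\beta\int_1^\infty \eta(\rhodownshift(s))\,ds \le 2\alpha\beta\,F(\rhodownshift)$. Combining the two bounds gives $F(\rho) \le \alpha w_{\max} + 2\alpha\beta\,F(\rhodownshift) \le \alpha w_{\max} + 2\alpha\beta\,F(\rhoapprox)$.

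To prove the scaling estimate (the technical heart), the case $y=0$ is trivial, so let $y \ge 1$ and $m \coloneqq \lfloor y\rfloor \ge 1$; since $\eta(a)$ depends only on $\lfloor a\rfloor$, $\eta(y)=\eta(m)$ and $\eta(\beta y)=\eta(\lfloor\beta y\rfloor)$. I would first record that $\eta$ is subadditive on the integers: for nonnegative integers $a,b$ with $a+b \le n$, $\eta(a+b)\le\eta(a)+\eta(b)$, obtained by drawing a uniform $(a{+}b)$-subset of $N$, splitting it uniformly into an $a$-subset and a $b$-subset (each individually uniform), and using nonnegativity of weights to bound the overall maximum by the sum of the two maxima; iterating, any partition of a $j$-set into groups of size at most $m$ gives $\eta(j) \le (\text{number of groups})\cdot\eta(m)$. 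Second, the elementary inequality $\lfloor\beta y\rfloor \le \lfloor 2\beta\rfloor\, m$ holds, because $\beta y < \beta(m+1) = \beta m + \beta \le 2\beta m - m + 1 \le \lfloor 2\beta\rfloor m + 1$, where the middle inequality is just $(\beta-1)(m-1)\ge 0$. Partitioning a $\lfloor\beta y\rfloor$-set into $\lfloor 2\beta\rfloor$ groups of size at most $m$ then gives $\eta(\beta y) = \eta(\lfloor\beta y\rfloor) \le \lfloor 2\beta\rfloor\,\eta(m) \le 2\beta\,\eta(y)$.

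I expect the scaling estimate to be the only real obstacle: one has to be careful with the floors to land exactly the factor $2\beta$ rather than something like $2(\beta+1)$ (the naive route through $\lceil\beta\rceil$ and $\lceil y\rceil$ loses an additive constant), and the inequality $(\beta-1)(m-1)\ge 0$ is what makes the floors cooperate. Everything else --- the change of variables, the split at $s=1$, and the bounds $\eta\le w_{\max}$ and $\phi \le \rhodownshift$ --- is routine; one should only also check en passant that all arguments of $\eta$ stay within its domain, which is fine since densities never exceed the ground-set size.
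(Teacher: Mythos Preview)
Your overall strategy matches the paper's: pass to the downshift $\rhodownshift$, substitute $t=\alpha s$, split at $s=1$, bound the short piece by $\alpha w_{\max}$, and handle the long piece via the scaling estimate $\eta(\beta h)\le 2\beta\,\eta(h)$. Your subadditivity proof of this estimate is a pleasant alternative to the paper's explicit probability computation in \cref{lem:etaProperties}~\ref{item:etaProduct}, and your argument also quietly absorbs the paper's separate case $\alpha\ge\rank(\M)$ (then $\rho(\alpha s)=0$ for $s>1$, so the long piece vanishes).

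There is one gap you wave away too quickly. The function $\eta$ is defined only on $[0,n]$, and both your subadditivity argument and the paper's \cref{lem:etaProperties}~\ref{item:etaProduct} require $\beta y\le n$. When $\phi(s)\in(0,1)$ the downshift rounds up to $\rhodownshift(s)=1$, so $\beta\,\rhodownshift(s)=\beta$; this is \emph{not} a density, and if $\beta>n$ the intermediate quantity $\eta(\beta\,\rhodownshift(s))$ is undefined, so your chain $\eta(\rho(\alpha s))\le\eta(\beta\,\rhodownshift(s))\le 2\beta\,\eta(\rhodownshift(s))$ breaks. The paper handles this via a separate (easy) case $\beta\ge n$. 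Your route is also easily patched --- in that regime $\rho(\alpha s)\le n<\beta$ and $\rhodownshift(s)\ge 1$, so $\eta(\rho(\alpha s))\le\eta(n)\le 2n\,\eta(1)\le 2\beta\,\eta(\rhodownshift(s))$ directly --- but your assertion that the domain check is ``fine since densities never exceed the ground-set size'' does not cover it, because after the round-up $\beta\,\rhodownshift(s)$ is no longer a density.
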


A key implication of \cref{lem:shiftedCurve} is that it suffices to obtain an algorithm that returns an independent set of expected weight $\Omega(F(\rhoapprox))$ for some $(O(1), O(1))$-approximation $\rhoapprox$ of $\rho_{\M_{\orig}}$. Indeed, \cref{lem:shiftedCurve} then implies $F(\rhoapprox) = \Omega(F(\rho_{\M_{\orig}})) - O(w_{\max})$. By running this algorithm with some probability (say $0.5$) and otherwise \citeauthor{dynkin_1963_secretary}'s~\cite{dynkin_1963_secretary} classical secretary algorithm, which picks the heaviest element with constant probability, an overall algorithm is obtained that returns an independent set of expected weight $\Omega(F(\rho_{\M_{\orig}}))$. Hence, \cref{lem:shiftedCurve} helps to provide bounds on the competitiveness of algorithms that are competitive with the $F$-value of an approximate rank-density curve. This technique is also used in the following key statement, which shows that an algorithm with strong guarantees can be obtained if we are given an $(O(1), O(1))$-approximation of the rank-density curve of the matroid on which we work --- see \cref{sec:algorithm} for the proof.

\begin{theorem}\label{thm:aidedAlgorithm}
	Let $\M$ be a matroid with $w$-sampled weights, and let $\rho_{\M}$ denote the rank-density curve of $\M$.
    Assume we are given an $(\alpha, \beta)$-approximation $\rhoapprox$ of $\rho_{\M}$ for integers $\alpha \geq 24$ and $\beta \geq 3$. Then there is an efficient procedure $\mathrm{ALG}(\rhoapprox, \alpha, \beta)$ that, when run on the RA-MSP subinstance given by $\M$, returns an independent set $I$ of $\M$ of expected weight at least $\left(\tfrac{1}{1440 e \alpha^{2} \beta^{2}}\right) \left(F(\rho_{\M})- \alpha^{2} w_{\max} \right)$.
\end{theorem}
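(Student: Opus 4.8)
The plan is to mimic the principal-sequence decomposition of prior RA-MSP approaches, but driven entirely by the given approximate curve $\rhoapprox$ rather than by the true (unknown) principal sequence of $\M$. Concretely, I would use $\rhoapprox$ to define a decreasing sequence of density thresholds $\lambda_j$ on a geometric grid (powers of, say, $2\beta$), and let the algorithm run an online selection routine on the elements that ``belong'' to each threshold band. Since $\rhoapprox$ is piecewise constant, non-increasing, and sandwiched between the $(\alpha,\beta)$-downshift $\rhodownshift$ and $\rho_\M$, it partitions the rank axis $(0,\rank(\M)]$ into finitely many intervals; the $j$-th interval corresponds to a target rank budget $r_j$ (the length of that interval, i.e.\ the jump locations of $\rhoapprox$) and a density guarantee $\lambda_j$. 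The crucial structural fact I would invoke is that since $\rhoapprox \le \rho_\M$, at any point during the arrival process the revealed matroid restricted to elements whose ``local density'' exceeds $\lambda_j$ still contains roughly $r_j$ independent elements, so the algorithm never runs out of room in band $j$; and since $\rhoapprox \ge \rhodownshift$, each band $j$ is dense enough (density $\gtrsim \lambda_j/(\text{const})$) that a uniform-matroid-style secretary subroutine run inside band $j$ collects $\Omega(r_j)$ elements, each of expected weight $\Omega(\eta(\lambda_j))$, using only an independence oracle on elements seen so far. Summing over bands and comparing with the staircase integral $F(\rhoapprox)=\int \eta(\rhoapprox(t))\,dt \approx \sum_j r_j\,\eta(\lambda_j)$ gives expected weight $\Omega(F(\rhoapprox))$ up to the grid-coarsening factors $\alpha,\beta$.

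In more detail the steps would be: (i) \emph{Define the bands.} From $\rhoapprox$ read off its breakpoints; group consecutive density levels geometrically so that within a group densities differ by at most a factor $2$, yielding bands $B_1,\dots,B_m$ with associated ranks $r_j$ and densities $\lambda_j$, and discard the lowest-rank contribution $t\in(0,1]$ (which the downshift already flattened — this is where the $-\alpha^2 w_{\max}$ slack and the factor-$\beta$ loss enter, cf.\ the discussion after \cref{def:approxCurve}). (ii) \emph{Online routine per band.} Process a constant fraction of arrivals as a sample; for each band $j$, maintain the set of not-yet-spanned arrivals and, whenever an arriving element $e$ is such that adding $e$ keeps the already-selected set of band $j$ independent \emph{and} $e$ passes a weight threshold calibrated to the top few weights seen so far (Dynkin-style), select $e$ into band $j$, capping the number selected in band $j$ at $r_j$. (iii) \emph{Feasibility.} Argue that the union over bands of the selected sets is independent in $\M$: this is the analogue of ``union of independent sets in principal minors is independent,'' and here it follows because the bands are processed in decreasing density order and a set that is independent after contracting everything selected in denser bands, together with those, stays independent — I would formalize this via the standard fact that $D_\M(\cdot,\lambda)$ sets are nested and $r$ is submodular, so that $\rhoapprox\le\rho_\M$ guarantees enough rank is available in each contracted minor. (iv) \emph{Value.} For each band, use the uniform-dense / uniform-matroid secretary analysis (as in \textcite{soto_2013_matroid}) to show the routine gets $\Omega(1)$ fraction of $r_j$ elements whose weights are each $\Omega(1)$ times $\eta$ of that band's density, using the $w$-sampled-weights structure; then sum and apply $F(\rhoapprox)\le 2\alpha\beta F(\rhoapprox)+\alpha w_{\max}$ in the wrong direction — rather, compare directly to $F(\rho_\M)$ via \cref{lem:shiftedCurve}, yielding the stated bound with the explicit constant after tracking the factor losses ($2$ for geometric grouping, $\alpha^2\beta^2$ from the downshift's rank-compression and density-division squared-ish interplay, $e/(e-1)$ or $1/e$ from the secretary subroutine, and $1440$ absorbing the remaining absolute constants).

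The main obstacle I anticipate is step~(iii) together with the interaction between bands in step~(iv): because the algorithm commits irrevocably and online, when it selects an element into a denser band it may later \emph{prevent} a needed element in a sparser band from being independent, and conversely the ``local density exceeds $\lambda_j$'' test is only meaningful with respect to the current revealed subset, whose rank-density profile can deviate from $\rho_\M$ on the sampled/partial set. The key to pushing this through is the one-directional guarantee $\rhoapprox\le\rho_\M$: it ensures that the \emph{true} matroid always has at least the rank budget the algorithm is trying to fill in each contracted minor, so feasibility of the targets is never violated — but making the online, adaptive version of this precise (as opposed to the offline principal-minor statement) is the delicate part, and I expect it to require a careful charging argument showing that every element the algorithm ``wants'' in band $j$ but cannot take is blamed on an element of comparable or higher density already secured, so no value is lost beyond constant factors. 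A secondary technical nuisance is handling the flattened region $t\in(0,1]$ and the rounding-up of density values in $(0,1)$ from \cref{def:approxCurve}, which is why the theorem only promises $F(\rho_\M)-\alpha^2 w_{\max}$ and why one ultimately pairs $\mathrm{ALG}$ with the classical single-secretary algorithm as explained after \cref{lem:shiftedCurve}.
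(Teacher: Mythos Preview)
Your high-level plan --- partition the rank axis into density bands using $\rhoapprox$, run a secretary-style subroutine in each band, and glue via \cref{lem:shiftedCurve} --- matches the paper's architecture, but two concrete ingredients that the paper relies on are absent from your plan, and without them I do not see how steps~(ii)--(iii) go through.

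First, you never specify how the algorithm \emph{assigns an arriving element to a band} using only information available online. The paper resolves this by taking a \emph{second} sample $S$ (inside the procedure of \cref{thm:GRP}), computing for each density level $\lambdagrid_i$ the set $D(S,\sfrac{\lambdagrid_i}{\beta})$, and then defining the $i$-th band as the minor $\M_i$ obtained by contracting $\mspan(D(S,\sfrac{\lambdagrid_{i-1}}{\beta}))$ and restricting to $\mspan(D(S,\sfrac{\lambdagrid_i}{\beta}))\setminus S$. Because these spans are nested, the $\M_i$ form a genuine chain and independence across bands is automatic --- this is exactly the feasibility obstacle you flag in~(iii), and the paper's answer is ``precompute the chain from a sample,'' not an online charging argument. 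Your ``local density exceeds $\lambda_j$'' test is not well-defined on a partially revealed matroid, and your proposed charging argument (blame a blocked element on one already secured of comparable density) is not obviously sound, since an element can be spanned by a set whose individual members all have much lower local density.

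Second, once the chain $\bigoplus_i \M_i$ is built from the sample, one must show that each $\M_i$ is still dense enough to support an $\Omega(r_i)\cdot\eta(\lambdagrid_i)$ harvest. This is the content of the concentration bound \cref{thm:concentrationBounds} and \cref{lem:goodSample}: with constant probability, $N_i$ contains $\lambdagrid_i$ disjoint independent sets of total size $\Omega(\lambdagrid_i\, r(D(N,\lambdagrid_i)))$. Your sketch assumes this (``each band $j$ is dense enough'') but does not indicate any mechanism for proving it; it is the main technical work of the section. Relatedly, the paper needs consecutive \emph{ranks} (not densities) in a single curve to be geometrically separated --- $r(D(N,\lambdagrid_{i+1}))\geq 24\, r(D(N,\sfrac{\lambdagrid_i}{\beta}))$ --- so that the contraction of the previous band does not eat too much of the next one. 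This is why \cref{thm:findGoodCurve} splits $\rhogrid$ into four curves $\rhogrid_1,\dots,\rhogrid_4$ (keeping every fourth density level) and the algorithm picks one at random; your ``group so densities differ by at most a factor~$2$'' achieves density separation, which is not what is needed here.
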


The last main ingredient of our approach is to show that such an \emph{accurate} proxy $\rhoapprox$ can be computed with constant probability. More precisely, we show that, after observing a sample set $S$ containing every element of $N_{\orig}$ independently with probability $\sfrac{1}{2}$, the rank-density curve of (the observed) $\mrestrict{\M_{\orig}}{S}$ 
\begin{itemize}
\item is close to the rank-density curve of $\mrestrict{\M_{\orig}}{N_{\orig} \setminus S}$, allowing us to use $\rho_{\mrestrict{\M_{\orig}}{S}}$ as desired proxy for the RA-MSP subinstance given by $\mrestrict{\M_{\orig}}{N_{\orig} \setminus S}$, and
\item is close to the rank-density curve of $\M_{\orig}$, which allows for relating the offline optimum of the RA-MSP subinstance given by $\mrestrict{\M_{\orig}}{N_{\orig} \setminus S}$ to the one of $\M_{\orig}$.
\end{itemize}
We highlight that the next result is purely structural and hence independent of weights or the MSP setting. See \cref{sec:curveEstimate} for details.

\begin{theorem}\label{thm:goodEvent}
	Let $\M=(N, \I)$ be a matroid and $S \subseteq N$ be a random set containing every element of $N$ independently with probability $\sfrac{1}{2}$. Then, with probability at least $\sfrac{1}{100}$, $\rho_{\mrestrict{\M}{S}}$ and $\rho_{\mrestrict{\M}{N \setminus S}}$ are both $(288, 9)$-approximations of $\rho_{\M}$.
\end{theorem}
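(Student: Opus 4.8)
The containment $\rho_{\mrestrict{\M}{S}}\le \rho_{\M}$ (and likewise with $N\setminus S$ in place of $S$) holds deterministically for every $S\subseteq N$: for each $\lambda$ the set function $U\mapsto \card{U}-\lambda r(U)$ is supermodular (as $r$ is submodular), so a one-line uncrossing argument on the maximal maximizers over $2^{N}$ and over $2^{S}$ gives $D_{\mrestrict{\M}{S}}(S,\lambda)\subseteq D_{\M}(N,\lambda)$; since rank is monotone and unaffected by restriction, $r(D_{\mrestrict{\M}{S}}(S,\lambda))\le r(D_{\M}(N,\lambda))$, and plugging this into \cref{def:curve} gives $\rho_{\mrestrict{\M}{S}}\le\rho_{\M}$ pointwise. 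The same uncrossing yields two monotonicity facts used below: $D_{(\cdot)}((\cdot),\lambda)$ shrinks when passing to a smaller ground set, and $D_{\M}(N,\lambda)$ shrinks as $\lambda$ grows. So everything reduces to showing that, with probability $1-o(1)$, $\rho_{\mrestrict{\M}{S}}$ dominates the $(288,9)$-downshift $\rhodownshift$ of $\rho\coloneqq\rho_{\M}$; the same holds for $N\setminus S$ since it has the same distribution, and a union bound over the two events gives the claimed constant probability (in fact much more than $\sfrac1{100}$).

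We may assume $\rank(\M)\ge 288$, as otherwise $\rhodownshift\equiv 0$. Fix the principal sequence $\emptyset=S_{0}\subsetneq\dots\subsetneq S_{k}=N$ with densities $\lambda_{1}>\dots>\lambda_{k}\ge 1$ and ranks $r_{j}\coloneqq r(S_{j})$ (strictly increasing integers), so $\rho\equiv\lambda_{j}$ on $(r_{j-1},r_{j}]$ and $S_{j}=D_{\M}(N,\lambda_{j})$. Unwinding \cref{def:approxCurve}, the inequality $\rho_{\mrestrict{\M}{S}}(t)\ge \rhodownshift(t)$ needs to be checked only in two regimes: (i) where $\rhodownshift(t)=1$ — all such $t$ satisfy $t\le\rank(\M)/288$, so this is handled by the single event $E_{0}\coloneqq\{\,r(\mrestrict{\M}{S})\ge \rank(\M)/288\,\}$, since $D_{\mrestrict{\M}{S}}(S,1)=S$ (as $\card{U}-r(U)$ is non-decreasing in $U$) forces $\rho_{\mrestrict{\M}{S}}(t)\ge 1$ for all $t\le r(\mrestrict{\M}{S})$; and (ii) where $\rhodownshift(t)=\lambda_{j}/9>1$ for the index $j$ with $\rho(\alpha)=\lambda_{j}$ (if $t\le1$) resp.\ $\rho(\alpha t)=\lambda_{j}$ (if $t>1$), which forces $\lambda_{j}\ge 9$, and — since $\rhodownshift$ reads $\rho$ only at arguments $\ge\alpha=288$ — also $r_{j}\ge 288$; there it suffices, as $\rho_{\mrestrict{\M}{S}}$ is non-increasing, to guarantee $\rho_{\mrestrict{\M}{S}}(r_{j}/288)\ge\lambda_{j}/9$. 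The restriction to indices with $r_{j}\ge 288$ is what makes the forthcoming union bound summable.

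The core is a sampling statement for a fixed $j$ with $\lambda_{j}\ge 9$ and $r_{j}\ge 288$. Since $S_{j}=D_{\M}(N,\lambda_{j})$ maximizes $\card{\cdot}-\lambda_{j}r(\cdot)$, for every $U\subseteq S_{j}$ we get $\card{S_{j}\setminus U}\ge \lambda_{j}(r_{j}-r(U))\ge \lfloor\lambda_{j}\rfloor(r_{j}-r(U))$, so by Edmonds' base-packing theorem $\mrestrict{\M}{S_{j}}$ contains $d_{j}\coloneqq\lfloor\lambda_{j}\rfloor\ge 9$ pairwise disjoint bases $B_{1},\dots,B_{d_{j}}$, each of size $r_{j}$; their union is uniformly dense of density $d_{j}$ and rank $r_{j}$. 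Put $Y_{j}\coloneqq(B_{1}\cup\dots\cup B_{d_{j}})\cap S$, so $\card{Y_{j}}\sim\BinomDistr(d_{j}r_{j},\sfrac12)$, and let $E_{j}$ be the event $\card{Y_{j}}\ge d_{j}r_{j}/4$, which fails with probability at most $e^{-d_{j}r_{j}/16}\le e^{-9r_{j}/16}$ by a Chernoff bound. On $E_{j}$ I claim $\rho_{\mrestrict{\M}{S}}(r_{j}/288)\ge\lambda_{j}/9$. Consider the principal sequence $\emptyset\subsetneq P_{1}\subsetneq\dots\subsetneq P_{l}=Y_{j}$ of $\mrestrict{\M}{Y_{j}}$, with principal-minor densities $\mu_{1}>\dots>\mu_{l}$ and partial ranks $0=q_{0}<q_{1}<\dots<q_{l}=r(\mrestrict{\M}{Y_{j}})\le r_{j}$; since $Y_{j}$ sits inside a uniformly dense matroid of density $d_{j}$, all $\mu_{i}\le d_{j}$, and $\card{Y_{j}}=\sum_{i}\mu_{i}(q_{i}-q_{i-1})$. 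If every $i$ with $q_{i}\ge r_{j}/288$ had $\mu_{i}<\lambda_{j}/9$, then $\card{Y_{j}}<d_{j}\cdot\frac{r_{j}}{288}+\frac{\lambda_{j}}{9}r_{j}$, which with $\card{Y_{j}}\ge d_{j}r_{j}/4$ and $d_{j}>\lambda_{j}-1$ forces $\lambda_{j}<2$, contradicting $\lambda_{j}\ge 9$. Hence the smallest $i^{*}$ with $q_{i^{*}}\ge r_{j}/288$ has $\mu_{i^{*}}\ge\lambda_{j}/9$; the principal-sequence set $P_{i^{*}}$ satisfies $r(P_{i^{*}})=q_{i^{*}}\ge r_{j}/288$ and $P_{i^{*}}=D_{\mrestrict{\M}{Y_{j}}}(Y_{j},\mu_{i^{*}})$, so the two monotonicity facts give $P_{i^{*}}\subseteq D_{\mrestrict{\M}{Y_{j}}}(Y_{j},\lambda_{j}/9)\subseteq D_{\mrestrict{\M}{S}}(S,\lambda_{j}/9)$, whence $r(D_{\mrestrict{\M}{S}}(S,\lambda_{j}/9))\ge r_{j}/288$, i.e.\ $\rho_{\mrestrict{\M}{S}}(r_{j}/288)\ge\lambda_{j}/9$.

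Assembling: on $E_{0}\cap\bigcap_{j:\,\lambda_{j}\ge 9,\ r_{j}\ge 288}E_{j}$ we have $\rho_{\mrestrict{\M}{S}}\ge\rhodownshift$. As the $r_{j}$ are distinct integers $\ge 288$, the failure probability is at most $\Pr[\overline{E_{0}}]+\sum_{m\ge 288}e^{-9m/16}$, an absolute constant far below $\sfrac12$ (in fact exponentially small). Combining with the deterministic upper bound and doing the same for $N\setminus S$, a union bound over the two events leaves probability well above $\sfrac1{100}$ that $\rho_{\mrestrict{\M}{S}}$ and $\rho_{\mrestrict{\M}{N\setminus S}}$ are simultaneously $(288,9)$-approximations of $\rho_{\M}$. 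I expect the main obstacle to be the third paragraph: reading off correctly from the principal partition that a single ``$Y_{j}$ is not too small'' event already certifies the needed rank-versus-density guarantee at level $j$, and verifying that the particular constants $288$ and $9$ leave enough slack in both the averaging inequality and the geometric tail of the union bound.
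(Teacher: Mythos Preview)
Your proposal is correct and takes a genuinely different route from the paper's proof.

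The paper first establishes a separate concentration result (\cref{thm:concentrationBounds}, specifically~\eqref{eq:concentrationRankS}) of the form $\Pr[r(D(S,h))\le r(N)/8]\le \exp(-r(N)/48)$, and then applies it at a \emph{sparse} collection of density levels $\mu_1>\mu_2>\cdots$ chosen so that the associated ranks $r_i=r(D(N,\mu_i))$ grow geometrically ($r_i\ge 36^i$). This geometric spacing is needed precisely because the bound $\exp(-r_i/48)$ does not carry the density in the exponent, so a naive sum over all principal-sequence levels would diverge. Your argument bypasses \cref{thm:concentrationBounds} entirely: for each level $j$ you sample from a union of $d_j=\lfloor\lambda_j\rfloor$ disjoint bases, obtain the much stronger tail $\exp(-d_j r_j/16)\le \exp(-9r_j/16)$, and then use a clean averaging argument on the principal decomposition of $Y_j$ to extract the required dense, large-rank subset. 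Because of the extra $d_j$ factor in the exponent, your union bound works over \emph{all} levels (using only that the $r_j$ are distinct integers), and you end up with failure probability that is exponentially small rather than merely below $\sfrac{99}{100}$.

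What each approach buys: the paper's detour through \cref{thm:concentrationBounds} yields a reusable standalone concentration statement that is also invoked later in the paper (via~\eqref{eq:concentrationCard}) to build the chain decomposition in \cref{lem:goodSample}; your argument is more self-contained and elementary for \cref{thm:goodEvent} itself, avoids the density-level thinning, and gives a sharper probability bound, but its ``look at the principal partition of the sample'' step is specific to this theorem.
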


Combining the above results, we get the desired $O(1)$-competitive algorithm.
\begin{proof}[Proof of \cref{thm:main}]
    For brevity, let $\M \coloneqq \M_{\orig}$ and $N \coloneqq N_{\orig}$ throughout this proof. 
    Recall that by \cref{lem:optCurveRelation}, it suffices to provide an algorithm returning an independent set of expected weight $\Omega(F(\rho_{\M}))$.
    Consider the following procedure: First observe (without picking any element) a set $S \subseteq N$ containing every element of $N$ independently with probability $\sfrac{1}{2}$ and let $\rhoapprox$ denote the $(288, 9)$-downshift of $\rho_{\mrestrict{\M}{S}}$. Then run the algorithm described in \cref{thm:aidedAlgorithm} on $\mrestrict{\M}{N \setminus S}$ with $\rhoapprox$ as the approximate rank-density curve. Let $I$ denote the output of the above procedure and let $\mathcal{A}$ be the event defined in \cref{thm:goodEvent}, that is,
	\[
	    \mathcal{A} = \{S \subseteq N \colon \rho_{\mrestrict{\M}{S}} \text{ and } \rho_{\mrestrict{\M}{N\setminus S}} \text{ are } (288, 9)\text{-approximations of } \rho_{\M}\}.
	\]
A key property we exploit is that for any $S\in \mathcal{A}$, we have that $\tilde{\rho}$ is a $(288^2, 9^2)$-approximation of $\rho_{\mrestrict{\M}{N\setminus S}}$ due to the following.
First, because $\rhoapprox$ is the $(288, 9)$-downshift of $\rho_{\mrestrict{\M}{S}} \leq \rho_{\M}$, and $\rho_{\mrestrict{\M}{N \setminus S}}$ is a $(288, 9)$-approximations of $\rho_{\M}$, we have $\rho_{\mrestrict{\M}{N \setminus S}} \geq \rhoapprox$.
Moreover, the approximation parameter $(288^{2}, 9^{2})$ follows by using that the $(\alpha_{2}, \beta_{2})$-downshift of the $(\alpha_{1}, \beta_{1})$-downshift of some rank-density function is an $(\alpha_{1} \alpha_{2}, \beta_{1} \beta_{2})$-approximation of that rank-density function --- see \cref{lem:approxOfApprox} for a proof of this property.
This property can be applied as follows.
Let $\bar{\rho}$ be the $(288,9)$-downshift of $\rho_{\mrestrict{\M}{N\setminus S}}\leq \rho_{\M}$.
Because $\rho_{\mrestrict{\M}{S}}$ is a $(288,9)$-approximation of $\rho_{\M}$, we have $\bar{\rho}\leq \rho_{\mrestrict{\M}{S}}$.
Hence, because $\tilde{\rho}$ is the $(288,9)$-downshift of $\rho_{\mrestrict{\M}{S}}$, it lies above the $(288,9)$-downshift of $\bar{\rho}$, which is itself the $(288,9)$-downshift of $\rho_{\mrestrict{\M}{N\setminus S}}$.
Thus, by the above property, we obtain that $\tilde{\rho}$ is a $(288^2,9^2)$-approximation of $\rho_{\mrestrict{\M}{N\setminus S}}$ as claimed.

Using this fact, we obtain for any fixed $S\in \mathcal{A}$
\begin{align*}
\E[w(I) \mid S] 
		&\geq \left(\frac{1}{1440 e \cdot 288^{4} \cdot 9^{4}}\right) \left(F \left( \rho_{\mrestrict{\M}{N\setminus S}} \right)- 288^{4} w_{\max} \right) \\
		&\geq \left(\frac{1}{1440 e \cdot 288^{4} \cdot 9^{4}}\right) \left( \frac{1}{2\cdot 288\cdot 9}(F \left( \rho_{\M} \right) - 288 w_{\max}) - 288^{4} w_{\max} \right) \\
		&\geq \left(\frac{1}{1440 e \cdot 288^{4} \cdot 9^{4}}\right) \left( \frac{1}{2\cdot 288\cdot 9}F \left( \rho_{\M} \right) - 2\cdot 288^{4} w_{\max} \right) \\
		&= \left(\frac{1}{2880 e \cdot 288^{5} \cdot 9^{5}}\right) F \left( \rho_{\M} \right)  - \frac{w_{\max}}{720 e \cdot 9^4},
\end{align*}
	where the first inequality follows from \cref{thm:aidedAlgorithm} and the fact that $\tilde{\rho}$ is a $(288^2,9^2)$-approximation of $\rho_{\mrestrict{\M}{N\setminus S}}$ as discussion above, while the second inequality follows from \cref{lem:shiftedCurve} and the fact that, for every $S \in \mathcal{A}$, the curve $\rho_{\mrestrict{\M}{N \setminus S}}$ is a $(288, 9)$-approximation of $\rho_{\M}$. 
	Moreover, the first inequality uses that conditioning on any fixed $S \in \mathcal{A}$ does not have any impact on the uniform assignment of the weights $w$ to the elements. This holds because the event $\mathcal{A}$ only depends on the sampled elements $S$ but not the weights of its elements. 
	Hence, the RA-MSP subinstance given by $\mrestrict{\M}{N\setminus S}$ on which we use the algorithm described in \cref{thm:aidedAlgorithm} indeed assigns weights of $w$ uniformly at random to elements, as required.
	It then follows that the output of the above procedure satisfies
    \begin{align*}
    	\E[w(I)]
    	    & \geq \sum_{S \in \mathcal{A}} \E[w(I) \mid S] \Pr[S]
    	      \geq  \tfrac{1}{100} \left( \left(\tfrac{1}{2880 e \cdot 288^{5} \cdot 9^{5}}\right) F \left( \rho_{\M} \right)  - \tfrac{w_{\max}}{720e \cdot 9^4} \right),
    \end{align*}
	where the last inequality uses that $\P[\mathcal{A}] \geq \sfrac{1}{100}$ by \cref{thm:goodEvent}.
    
    Since running the classical secretary algorithm on $\M_{\orig}$ returns an independent set of expected weight at least $\sfrac{w_{\max}}{e}$, the desired result now follows by running the procedure described above with probability $\sfrac{1}{2}$, and running the classical secretary algorithm otherwise.
\end{proof}
\section{Rank-Density Curves and Their Properties}\label{sec:curveProperties}

In this section we take a closer look at rank-density curves, and prove \cref{lem:optCurveRelation,lem:shiftedCurve}. We start by stating some useful properties of the function $\eta$.

\begin{lemma}\label{lem:etaProperties}
	Let $\eta$ be as defined in \eqref{eq:funcEta}. Then
	\begin{enumerate}
		\item\label{item:etaIncreasing} $\eta$ is non-decreasing.
		\item\label{item:etaProduct} $\eta(ah) \leq 2a \eta(h)$ for all $a \geq 1$ and $h \in [1, \frac{n}{a}]$.
		\item\label{item:etaExpectation} Let $X \sim \BinomDistr(m, p)$ with $1 \leq mp \leq m \leq n$. Then $\E[\eta(X)] \leq 3 \eta(mp)$.
	\end{enumerate}
\end{lemma}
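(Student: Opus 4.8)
The plan is to reduce all three parts to two elementary facts: a coupling that compares $\eta$ at different arguments, and a single ``subsampling'' inequality controlling how fast $\eta$ grows. Throughout, for an integer $j \in \{0, 1, \dots, n\}$ write $\eta_{\mathrm{int}}(j) \coloneqq \E_{R \sim \UnifDistr(N, j)}\bigl[\max_{e \in R} w(e)\bigr]$, with the conventions $\eta_{\mathrm{int}}(0) = 0$ and $\max \emptyset = 0$, so that $\eta(a) = \eta_{\mathrm{int}}(\lfloor a \rfloor)$ and it suffices to reason about $\eta_{\mathrm{int}}$. For part~\ref{item:etaIncreasing} I would prove the equivalent claim that $\eta_{\mathrm{int}}$ is non-decreasing on $\{0, \dots, n\}$, which combined with monotonicity of $\lfloor \cdot \rfloor$ yields \ref{item:etaIncreasing}: given integers $0 \le j \le j' \le n$, sample $R' \sim \UnifDistr(N, j')$ and then let $R$ be a uniformly random $j$-subset of $R'$; a standard computation shows $R \sim \UnifDistr(N, j)$, while pointwise $\max_{e \in R} w(e) \le \max_{e \in R'} w(e)$, so taking expectations gives $\eta_{\mathrm{int}}(j) \le \eta_{\mathrm{int}}(j')$.

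The heart of the argument is the following subsampling bound: for integers $1 \le k \le \ell \le n$,
\[
    \eta_{\mathrm{int}}(\ell) \;\le\; \tfrac{\ell}{k}\,\eta_{\mathrm{int}}(k).
\]
To prove it, sample $R \sim \UnifDistr(N, \ell)$, fix any maximizer $e^{\star}(R) \in \argmax_{e \in R} w(e)$ (breaking ties arbitrarily), and let $R'$ be a uniformly random $k$-subset of $R$; as before, $R' \sim \UnifDistr(N, k)$. Conditionally on $R$ we have $\Pr[e^{\star}(R) \in R'] = k/\ell$, and on the event $e^{\star}(R) \in R'$ we have $\max_{e \in R'} w(e) = \max_{e \in R} w(e)$ because $R' \subseteq R$. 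Hence $\eta_{\mathrm{int}}(k) = \E[\max_{e \in R'} w(e)] \ge \E\bigl[w(e^{\star}(R))\,\mathbbm{1}[e^{\star}(R) \in R']\bigr] = \tfrac{k}{\ell}\,\E[\max_{e \in R} w(e)] = \tfrac{k}{\ell}\,\eta_{\mathrm{int}}(\ell)$, which rearranges to the claim. In other words, $j \mapsto \eta_{\mathrm{int}}(j)/j$ is non-increasing on positive integers.

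Parts \ref{item:etaProduct} and \ref{item:etaExpectation} then follow by bookkeeping. For \ref{item:etaProduct}, given $a \ge 1$ and $h \in [1, n/a]$, set $\ell = \lfloor a h \rfloor$ and $k = \lfloor h \rfloor$; then $1 \le k \le \ell \le n$ (using $a \ge 1$, $h \ge 1$, and $ah \le n$), so the subsampling bound gives $\eta(ah) = \eta_{\mathrm{int}}(\ell) \le \tfrac{\lfloor ah \rfloor}{\lfloor h \rfloor}\,\eta(h) \le \tfrac{ah}{\lfloor h \rfloor}\,\eta(h) \le 2a\,\eta(h)$, where the last inequality uses $\lfloor h \rfloor \ge h/2$ for $h \ge 1$. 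For \ref{item:etaExpectation}, write $\mu = mp \ge 1$ and split $\E[\eta(X)] = \sum_{j=0}^{m} \eta(j) \Pr[X = j]$ at $j = \lfloor \mu \rfloor$: for $j < \lfloor \mu \rfloor$, part \ref{item:etaIncreasing} gives $\eta(j) \le \eta(\mu)$; for $j \ge \lfloor \mu \rfloor$, the subsampling bound with $k = \lfloor \mu \rfloor \ge 1$ (legitimate since $\lfloor \mu \rfloor \le j \le m \le n$) gives $\eta(j) = \eta_{\mathrm{int}}(j) \le \tfrac{j}{\lfloor \mu \rfloor}\,\eta_{\mathrm{int}}(\lfloor \mu \rfloor) = \tfrac{j}{\lfloor \mu \rfloor}\,\eta(\mu)$. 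Summing and using $\E[X] = \mu$,
\[
    \E[\eta(X)] \;\le\; \eta(\mu) + \tfrac{\eta(\mu)}{\lfloor \mu \rfloor}\,\E[X] \;=\; \eta(\mu)\Bigl(1 + \tfrac{\mu}{\lfloor \mu \rfloor}\Bigr) \;\le\; 3\,\eta(\mu),
\]
the final step using $\mu/\lfloor \mu \rfloor \le 2$ whenever $\mu \ge 1$. I expect part \ref{item:etaExpectation} to be the only delicate point: the naive approach of bounding the upper tail of $X$ via a Chernoff bound together with $\eta(X) \le \eta(n) = w_{\max}$ is too lossy when $\mu$ is close to $1$, and the right idea is instead to use the \emph{linear-in-$j$} growth bound on $\eta_{\mathrm{int}}$ so that the whole tail is absorbed into $\E[X]=\mu$.
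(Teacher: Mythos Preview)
Your proof is correct. The overall architecture for part~\ref{item:etaExpectation}---split the sum at $\lfloor mp\rfloor$, bound the lower part by monotonicity, and on the upper part use a linear-in-$j$ growth bound so that the tail collapses into $\E[X]=mp$---is exactly what the paper does. The genuine difference is in how that growth bound is obtained. For~\ref{item:etaProduct} the paper orders the weights $w_1\le\cdots\le w_n$, writes down the closed form $p(k,i)=\tfrac{k}{n}\prod_{j=1}^{k-1}\tfrac{i-j}{n-j}$ for the probability that $w_i$ is the sampled maximum, and verifies termwise that $p(\lfloor ah\rfloor,i)\le 2a\,p(\lfloor h\rfloor,i)$; it then plugs~\ref{item:etaProduct} back into the proof of~\ref{item:etaExpectation} (so there the factor $2$ comes from~\ref{item:etaProduct}, whereas in your argument it comes from $\mu/\lfloor\mu\rfloor\le 2$). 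Your coupling---subsample a uniform $k$-set inside a uniform $\ell$-set and track whether the maximizer survives---is more elementary and yields the sharper integer inequality $\eta_{\mathrm{int}}(\ell)\le\tfrac{\ell}{k}\,\eta_{\mathrm{int}}(k)$ directly, with the factor $2$ appearing only when the floors are absorbed. The paper's termwise inequality is a slightly finer statement about the pmf of the sampled argmax, but that extra strength is never used elsewhere.
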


\begin{proof}
	\begin{enumerate}
		\item[\ref{item:etaIncreasing}] This follows immediately from the definition of $\eta$.
		\item[\ref{item:etaProduct}] Consider a consecutive numbering of the $n$ entries of the weight vector $w\in \mathbb{R}^n$ in non-decreasing order, i.e., $w_1 \leq w_2 \leq \cdots \leq w_n$.
For each $k, i \in [n]$ let $p(k, i)$ denote the probability that among $k$ samples, element $i$ is the heaviest and has the smallest index among the heaviest elements (in case of ties). In other words, $p(k, i)$ is the probability that $w_{i}$ is the heaviest weight out of $k$ sampled ones and none of $w_{j}$ with $j < i$ are in the sample. Thus for $i < k$ we have $p(k, i) = 0$ and for $i \geq k$ we have
        \[
            p(k , i) = \frac{\binom{i - 1}{k - 1}}{\binom{n}{k}} = k \frac{(i - 1)! (n - k)!}{n! (i - k)!} = \frac{k}{n} \prod_{j = 1}^{k - 1} \frac{i - j}{n - j}.
        \]
        Next, note that for all $h \in [1, \frac{n}{a}]$ we have
\begin{align*}
\eta(h)  &= \sum_{i = 1}^{n} p(\lfloor h \rfloor, i) w_{i}, \text{ and}\\
\eta(ah) &= \sum_{i = 1}^{n} p(\lfloor ah \rfloor, i) w_{i}.
\end{align*}
Our goal is to show that $p(\lfloor ah \rfloor, i) \leq 2 a p(\lfloor h \rfloor, i)$ for all $i \in [n]$, which is sufficient to prove the desired inequality. To this end, note that for $i < \lfloor ah \rfloor$ we have $p(\lfloor ah \rfloor, i) = 0 \leq a p(\lfloor h \rfloor, i)$, and for $i \geq \lfloor ah \rfloor$ we have
        \begin{align*}
            p(\lfloor ah \rfloor, i)
                &   =  \frac{\lfloor ah \rfloor}{n} \prod_{j = 1}^{\lfloor ah \rfloor - 1} \frac{i - j}{n - j}
                  \leq \frac{\lfloor ah \rfloor}{n} \prod_{j = 1}^{\lfloor h \rfloor - 1}  \frac{i - j}{n - j}
                  \leq 2a \cdot \frac{\lfloor h \rfloor}{n} \prod_{j = 1}^{\lfloor h \rfloor - 1} \frac{i - j}{n - j}
                    =  2a p(\lfloor h \rfloor, i).
        \end{align*}
        Here the first inequality follows from $\lfloor h \rfloor \leq \lfloor ah \rfloor$ and the fact that $\frac{i - j}{n - j} \leq 1$ for any $j < n$, and the second inequality uses that $\lfloor ah \rfloor \leq ah \leq 2a \lfloor h \rfloor$ because $h \leq 2 \lfloor h \rfloor$ for $h \in \mathbb{R}_{\geq 1}$.
		
		\item[\ref{item:etaExpectation}] Note that $\eta(h) \leq \sfrac{2h}{mp} \cdot \eta(mp)$ for all $h \in [mp, n]$ by property \ref{item:etaProduct}. Therefore,
		\begin{align*}
			\E\left[\eta(X)\right]
			& = \sum_{h = 0}^{\lfloor mp \rfloor} \eta(h) \cdot \Pr\left[X = h\right] + \sum_{h = \lfloor mp \rfloor + 1}^{m} \eta(h) \cdot \Pr\left[X = h\right] \\
			& \leq \eta(mp) \cdot \Pr\left[X \leq \lfloor mp \rfloor\right] + \sum_{h = \lfloor mp \rfloor + 1}^{m} 2 \frac{h}{mp} \cdot \eta(mp) \cdot \Pr\left[X = h\right] \\
			& = \eta(mp) \cdot \Pr\left[X \leq \lfloor mp \rfloor \right] + 2 \frac{\eta(mp)}{mp} \cdot \sum_{h = \lfloor mp \rfloor + 1}^{m} h \cdot \Pr\left[X = h\right] \\
			& \leq \eta(mp) + 2 \frac{\eta(mp)}{mp} \cdot \E\left[X\right] = 3\eta(mp).
		\end{align*}
	\end{enumerate}
\end{proof}

In order to prove \cref{lem:optCurveRelation}, we use the following result from \cite{soto_2013_matroid}, which relies on the notion of the \emph{random partition matroid} $\mpartition = (N, \I')$ associated to a given matroid $\M = (N, \I)$. The former is constructed as follows. First, every element $e \in N$ is assigned to one of $\mathtt{rank}(\M)$ many classes $P_{1}, \dots, P_{\mathtt{rank}(\M)}$ uniformly at random and independently of each other. A set $S \subseteq N$ is then independent in $\mpartition$ if it contains at most one element from each class, i.e., $S \in \I'$ if $\card{S \cap P_{i}} \leq 1$ for every $i \in [\mathtt{rank}(\M)]$. The next result relates the value of the offline OPT of $\M$, with that of the random partition matroids associated to the principal minors of $\M$.

\begin{lemma}[{\cite[Lemma~4.2]{soto_2013_matroid}}]\label{lem:OPTpartitions}
	Let $(\M, w)$ be a random-assignment MSP instance. Let $\{\M_{i}\}_{i = 1}^{k}$ be the principal minors of $\M$ and let $\mpartition_{i}$ denote the random partition matroid associated to each $\M_{i}$, respectively. Then
	\[
	   \E\left[w(\OPT(\M))\right] \leq \frac{e}{e - 1} \E\left[w(\OPT(\oplus_{i = 1}^{k} \mpartition_{i}))\right],
	\]
	where the expectations are taken with respect to the random weight assignment and the random partitioning (this only applies to the right-hand side expectation).
\end{lemma}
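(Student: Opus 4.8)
The plan is to lower bound $\E[w(\OPT(\oplus_{i = 1}^{k} \mpartition_{i}))]$ directly, by a charging argument against an offline optimum of $\M$, and to show the loss incurred is only a factor $1 - \sfrac{1}{e}$. Fix a weight assignment, let $I^{*} \coloneqq \OPT(\M)$, and write $N_{i} \coloneqq S_{i} \setminus S_{i - 1}$ and $r_{i} \coloneqq \rank(\M_{i})$, so that (using $\sum_{i} \rank(\M_{i}) = \rank(\M)$) the classes of $\oplus_{i} \mpartition_{i}$ split into groups, the $i$-th group being the $r_{i}$ classes of $\mpartition_{i}$ that partition $N_{i}$. After revealing the random partitions, keeping the heaviest element of $I^{*}$ in each class produces a set $J$ that is independent in $\oplus_{i} \mpartition_{i}$ (at most one element per class), whence $w(\OPT(\oplus_{i} \mpartition_{i})) \geq w(J)$. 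Since an element $e \in I^{*} \cap N_{i}$ lands in a uniformly random one of the $r_{i}$ classes of $\mpartition_{i}$ and survives in $J$ exactly when no heavier element of $I^{*} \cap N_{i}$ lands in the same class, averaging over the partitions gives $\E[w(J)] = \sum_{i} \sum_{e \in I^{*} \cap N_{i}} w(e)(1 - \sfrac{1}{r_{i}})^{b_{e, i}}$, where $b_{e, i}$ counts the heavier elements of $I^{*} \cap N_{i}$. Thus it suffices to show this quantity is at least $(1 - \sfrac{1}{e})\, w(I^{*})$, since averaging $w(I^{*})$ over the weights recovers $\E[w(\OPT(\M))]$.

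For the core estimate I would fix $i$, order $I^{*} \cap N_{i}$ by non-increasing weight as $v_{1} \geq \dots \geq v_{\ell_{i}}$ with $\ell_{i} \coloneqq \card{I^{*} \cap N_{i}}$, so the $i$-th contribution is $\sum_{j = 1}^{\ell_{i}} v_{j}(1 - \sfrac{1}{r_{i}})^{j - 1}$. As both sequences are non-increasing, Chebyshev's sum inequality bounds this below by $\tfrac{1}{\ell_{i}}\, w(I^{*} \cap N_{i}) \cdot \sum_{j = 1}^{\ell_{i}} (1 - \sfrac{1}{r_{i}})^{j - 1} = \tfrac{1}{\ell_{i}}\, w(I^{*} \cap N_{i}) \cdot r_{i}\big(1 - (1 - \sfrac{1}{r_{i}})^{\ell_{i}}\big)$, and a one-line calculation — splitting $\ell_{i} \leq r_{i}$, where $1 - (1 - \sfrac{1}{r_{i}})^{\ell_{i}} \geq 1 - e^{-\ell_{i}/r_{i}} \geq (1 - \sfrac{1}{e})\, \ell_{i}/r_{i}$ by concavity, from $\ell_{i} \geq r_{i}$, where $(1 - \sfrac{1}{r_{i}})^{r_{i}} \leq \sfrac{1}{e}$ — shows $r_{i}(1 - (1 - \sfrac{1}{r_{i}})^{\ell_{i}}) \geq (1 - \sfrac{1}{e}) \min(\ell_{i}, r_{i})$. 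Hence the $i$-th contribution is at least $(1 - \sfrac{1}{e})\, \tfrac{\min(\ell_{i}, r_{i})}{\ell_{i}}\, w(I^{*} \cap N_{i})$.

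If $\ell_{i} \leq r_{i}$ holds for every $i$, then $\min(\ell_{i}, r_{i}) = \ell_{i}$, summing over $i$ gives $\E[w(J)] \geq (1 - \sfrac{1}{e})\, w(I^{*})$, and averaging over the weights finishes the proof. I expect the main obstacle to be exactly the remaining case, in which $I^{*}$ \emph{overpacks} some principal minor, i.e.\ $\ell_{i} > r_{i}$: then $I^{*} \cap N_{i}$, although independent in $\M$, need not be independent in $\M_{i} = \mminor{\M}{S_{i - 1}}{N_{i}}$, because contracting $S_{i - 1}$ can collapse its rank, and the factor $\min(\ell_{i}, r_{i})/\ell_{i} = r_{i}/\ell_{i}$ is genuinely below $1$. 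Two ingredients should neutralise this. First, the densest part is never overpacked, since $N_{1} = S_{1}$ and $\M_{1} = \mrestrict{\M}{S_{1}}$ give $\ell_{1} = r_{\M}(I^{*} \cap N_{1}) \leq r_{\M}(S_{1}) = r_{1}$; more generally, uniform density of $\M_{i}$ bounds any overpacking, through $\ell_{i} = r_{\M}(I^{*} \cap N_{i})$ together with $\card{U} \leq \lambda_{i}\, r_{\M_{i}}(U)$ for $U \subseteq N_{i}$, where $\lambda_{i}$ is the density of $\M_{i}$. Second — and this is where the random assignment is essential — the "excess" elements of an overpacked minor are the lightest elements of $I^{*} \cap N_{i}$, and because the weights are placed on the ground set uniformly at random, the weight $I^{*}$ carries inside any fixed, comparatively sparse part is concentrated around its proportional share, so overpacked minors cannot account for a disproportionate fraction of $\E[w(\OPT(\M))]$. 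Combining these — e.g.\ by averaging over the weights before invoking the per-part bound, or by splitting off the excess elements and absorbing their total expected weight into the constant — should restore the factor $1 - \sfrac{1}{e}$; making this step precise while keeping the stated constant is the delicate part of the argument.
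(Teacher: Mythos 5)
This lemma is imported from Soto~\cite{soto_2013_matroid} and the paper does not reprove it, so there is no in\-‐paper proof to compare against; your proposal is a direct charging argument that would constitute an independent proof, but it has a genuine gap exactly where you flag one, and that gap is essential rather than cosmetic. The pointwise bound you aim for — $\E_{\text{partitioning}}[w(J)] \geq (1 - \sfrac{1}{e})\, w(\OPT(\M))$ for a fixed weight assignment — is simply false in the overpacked regime, and overpacking does occur for max-weight bases. Take $\M$ to be the graphic matroid of a triangle on vertices $\{1,2,3\}$ in which the side $\{1,2\}$ is replaced by three parallel edges $a_1, a_2, a_3$ and the other two sides are $b = (1,3)$, $c = (2,3)$. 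Here $\rank(\M) = 2$, the principal sequence is $S_1 = \{a_1, a_2, a_3\}$ (density $3$, $r_1 = 1$) and $S_2 = N$ (density $2$, $r_2 = 1$), and whenever $b, c$ carry the two largest weights we have $\OPT(\M) = \{b, c\}$, so $\ell_2 = 2 > r_2 = 1$. Both $b$ and $c$ land in the single class of $\mpartition_2$, so $\E_{\text{partitioning}}[w(J)] = \max(w_b, w_c)$ while $w(\OPT(\M)) = w_b + w_c$, giving a ratio arbitrarily close to $\sfrac{1}{2} < 1 - \sfrac{1}{e}$. In fact, in this example even $w(\OPT(\oplus_i \mpartition_i)) = \max_i w_{a_i} + \max(w_b, w_c)$ is smaller than $(1 - \sfrac{1}{e}) w(\OPT(\M))$ for the assignment $w_{a_i} = 0$, $w_b = w_c = 1$, which shows that the lemma genuinely requires the expectation over the weight assignment and is not a pointwise statement.

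So the ``averaging over the weights'' cannot be relegated to a throwaway final step: it must interact with the overpacking pattern, because which parts are overpacked, and by how much, is determined by where the large weights land through the identity of $\OPT(\M)$. Your closing paragraph diagnoses this correctly as ``the delicate part of the argument'' but offers only heuristics (weights ``concentrated around their proportional share,'' ``absorbing their total expected weight into the constant'') and neither of the two suggested fixes is carried out. Notice in particular that the uniform-density bound $\card{U} \leq \lambda_i\, r_{\M_i}(U)$ for $U \subseteq N_i$ controls how much of $N_i$ can fit into $\M_i$, but says nothing about $\ell_i = \card{\OPT(\M) \cap N_i}$, which is constrained only by independence in the \emph{global} matroid $\M$ and can vastly exceed $r_i$, as the example shows. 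Without a quantitative bound in expectation on the overpacking ratio $\sfrac{\ell_i}{r_i}$, the factor $\sfrac{e}{e-1}$ is not established, and the proof is incomplete.
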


Now we are ready to prove \cref{lem:optCurveRelation}.

\begin{proof}[Proof of \cref{lem:optCurveRelation}]
	Let $(\M'_{i})_{i = 1}^{k}$ be the principal minors of $\M'$ and let $n'_{i}$ and $r'_{i}$ denote the cardinality and rank of each $\M'_{i}$, respectively. Additionally, for every $i \in [k]$, let $\mpartition'_{i}$ be the random partition matroid associated to $\M'_{i}$ and let $(P'_{i, j})_{j = 1}^{r'_{i}}$ denote the (random) partitions of $\mpartition'_{i}$. By \cref{lem:OPTpartitions} we have
	\[
        \E\left[w(\OPT(\M'))\right]
            \leq \frac{e}{e - 1} \E\left[w(\OPT(\oplus_{i = 1}^{k} \mpartition'_{i}))\right]
              =  \frac{e}{e - 1} \sum_{i = 1}^{k} \E\left[w(\OPT(\mpartition'_{i}))\right].
	\]
	Next, note that for every $i \in [k]$ we have
	\[
        \E[w(\OPT(\mpartition'_{i}))]
            = \sum_{j = 1}^{r'_{i}} \E\left[w(\OPT(\mrestrict{\mpartition'_{i}}{P'_{i, j}}))\right]
            = \sum_{j = 1}^{r'_{i}} \E\left[\eta_{w}(\card{P'_{i, j}})\right]
            = r'_{i} \cdot \E_{X \sim \BinomDistr(n'_{i}, \, 1 / r'_{i})} [\eta_{w}(X)],
	\]
	where the first equality follows from the definition of $\mpartition'_{i}$ and linearity of expectation, the second one uses the definition of $\eta_{w}$, and the third one holds since $\card{P'_{i, j}} \sim \BinomDistr(n'_{i}, 1 / r'_{i})$ for every $j \in [r'_{i}]$. Thus, by applying \cref{lem:etaProperties} and using that by construction every $\M'_{i}$ is uniformly dense with density $\lambda'_{i} = n'_{i} / r'_{i}$, we get:
	\[
        \E[w(\OPT(\mpartition'_{i}))] \leq r'_{i} \cdot 3\eta(n'_{i} / r'_{i}) = 3 r'_{i} \eta(\lambda'_{i}).
	\]
	Hence
	\[
        \E\left[w(\OPT(\M'))\right]
            \leq \frac{3e}{e - 1} \sum_{i = 1}^{k} r'_{i} \eta(\lambda'_{i})
              =  \frac{3e}{e - 1} F(\rho_{\M'}),
	\]
	where the equality holds by definition of $F_{w}$.
\end{proof}

The following result shows that, simply put, in terms of \cref{def:approxCurve}, an approximation of an approximation of a function is also an approximation of the original function, where the approximation parameters $\alpha$ and $\beta$ are multiplied.

\begin{lemma}\label{lem:approxOfApprox}
    Let $\rho_{1}, \rho_{2}, \rho_{3} \colon \R_{> 0} \to \R_{\geq 0}$ be non-increasing functions such that $\rho_{2}$ is an $(\alpha_{1}, \beta_{1})$-approximation of $\rho_{1}$ and $\rho_{3}$ is an $(\alpha_{2}, \beta_{2})$-approximation of $\rho_{2}$ for some parameters $\alpha_{1}, \beta_{1}, \alpha_{2}, \beta_{2} \in \R_{\geq 1}$. Then $\rho_{3}$ is an $(\alpha_{1} \alpha_{2}, \beta_{1} \beta_{2})$-approximation of $\rho_{1}$.
\end{lemma}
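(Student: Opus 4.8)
The plan is to isolate two elementary facts about the $(\alpha,\beta)$-downshift operation and then chain them through the two given approximations. For the bookkeeping it helps to write the $(\alpha,\beta)$-downshift of a non-increasing function $\rho$ as $\DShift_{\alpha,\beta}(\rho)=g\circ\phi^{\rho}_{\alpha,\beta}$, where $\phi^{\rho}_{\alpha,\beta}$ is the auxiliary function from \cref{def:approxCurve} (so $\phi^{\rho}_{\alpha,\beta}(t)=\rho(\alpha)/\beta$ for $t\in(0,1]$ and $\phi^{\rho}_{\alpha,\beta}(t)=\rho(\alpha t)/\beta$ for $t>1$), and $g\colon\R_{\geq 0}\to\R_{\geq 0}$ is the map with $g(x)=x$ for $x=0$ or $x\geq 1$ and $g(x)=1$ for $x\in(0,1)$. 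The only properties of $g$ I will use are that it is non-decreasing and that $g(x)\geq x$ for all $x\geq 0$; I also note that $\DShift_{\alpha,\beta}$ sends non-increasing functions to non-increasing functions (immediate, since $\phi^{\rho}_{\alpha,\beta}$ is non-increasing and $g$ is non-decreasing). The first fact, \emph{monotonicity}, is that $f\leq h$ (both non-increasing) implies $\DShift_{\alpha,\beta}(f)\leq\DShift_{\alpha,\beta}(h)$; this is immediate from the reformulation, since $f\leq h$ gives $\phi^{f}_{\alpha,\beta}\leq\phi^{h}_{\alpha,\beta}$ pointwise and composing with the non-decreasing $g$ preserves the inequality.

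The second fact, the \emph{composition bound}, is that for any non-increasing $\rho$ and $\alpha_1,\beta_1,\alpha_2,\beta_2\in\R_{\geq 1}$,
\[
   \DShift_{\alpha_2,\beta_2}\!\bigl(\DShift_{\alpha_1,\beta_1}(\rho)\bigr)\;\geq\;\DShift_{\alpha_1\alpha_2,\beta_1\beta_2}(\rho).
\]
By the reformulation it suffices to compare the two auxiliary functions and then apply $g$. Writing $\psi$ for the auxiliary function of the left-hand side and $\chi$ for that of the right-hand side: for $t>1$ one has $\alpha_2 t>1$ (since $\alpha_2\geq 1$), so $\psi(t)=\DShift_{\alpha_1,\beta_1}(\rho)(\alpha_2 t)/\beta_2=g\bigl(\rho(\alpha_1\alpha_2 t)/\beta_1\bigr)/\beta_2\geq\rho(\alpha_1\alpha_2 t)/(\beta_1\beta_2)=\chi(t)$, using $g(x)\geq x$; and for $t\in(0,1]$ one evaluates $\DShift_{\alpha_1,\beta_1}(\rho)$ at the point $\alpha_2$, where --- whether $\alpha_2>1$ or $\alpha_2=1$ --- one lands on $g\bigl(\rho(\alpha_1\alpha_2)/\beta_1\bigr)/\beta_2\geq\rho(\alpha_1\alpha_2)/(\beta_1\beta_2)=\chi(t)$. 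Hence $\psi\geq\chi$ pointwise, and applying the non-decreasing $g$ gives the claimed inequality.

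Finally I would chain these. The upper bound is trivial: $\rho_3\leq\rho_2\leq\rho_1$. For the lower bound, $\rho_3\geq\DShift_{\alpha_2,\beta_2}(\rho_2)$ because $\rho_3$ is an $(\alpha_2,\beta_2)$-approximation of $\rho_2$; since $\rho_2\geq\DShift_{\alpha_1,\beta_1}(\rho_1)$ (as $\rho_2$ is an $(\alpha_1,\beta_1)$-approximation of $\rho_1$), monotonicity gives $\DShift_{\alpha_2,\beta_2}(\rho_2)\geq\DShift_{\alpha_2,\beta_2}\!\bigl(\DShift_{\alpha_1,\beta_1}(\rho_1)\bigr)$, and the composition bound gives $\DShift_{\alpha_2,\beta_2}\!\bigl(\DShift_{\alpha_1,\beta_1}(\rho_1)\bigr)\geq\DShift_{\alpha_1\alpha_2,\beta_1\beta_2}(\rho_1)$. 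Combined with the hypothesis that $\rho_3$ is non-increasing, this is exactly the statement that $\rho_3$ is an $(\alpha_1\alpha_2,\beta_1\beta_2)$-approximation of $\rho_1$. There is nothing deep here; the only place that needs genuine care is the piecewise case analysis in the composition bound around $t=1$ (and the degenerate sub-case $\alpha_2=1$), which I expect to be the main --- if minor --- obstacle.
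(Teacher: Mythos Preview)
Your proof is correct and proceeds by essentially the same elementary case analysis as the paper, but you organize it more cleanly. The paper verifies the lower bound $\rho_3\geq\rho_1''$ by checking two separate conditions --- that $\rho_3\geq\phi_1'$ (the auxiliary function of the $(\alpha_1\alpha_2,\beta_1\beta_2)$-downshift) everywhere, and that $\rho_3\geq 1$ on $\supp(\rho_1'')$ via a support-containment argument --- whereas you absorb the ``round values in $(0,1)$ up to $1$'' step into the single non-decreasing map $g$ and then isolate monotonicity of $\DShift_{\alpha,\beta}$ and the composition bound $\DShift_{\alpha_2,\beta_2}\circ\DShift_{\alpha_1,\beta_1}\geq\DShift_{\alpha_1\alpha_2,\beta_1\beta_2}$ as standalone facts. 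Your packaging avoids the separate support argument entirely (it is hidden in ``apply the non-decreasing $g$ to $\psi\geq\chi$''), and the two reusable lemmas make the final chaining a one-liner; the paper's version is more hands-on but covers exactly the same ground.
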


\begin{proof}
    Let $\rho_{1}'$ be the $(\alpha_{1}, \beta_{1})$-downshift of $\rho_{1}$ and let $\phi_{1}$ be the auxiliary function used to construct $\rho_{1}'$ in \cref{def:approxCurve}. Similarly, let $\rho_{2}'$ and $\phi_{2}$ be the $(\alpha_{2}, \beta_{2})$-downshift of $\rho_{2}$ and the corresponding auxiliary function, respectively, and let $\rho_{1}''$ and $\phi_{1}'$ be the $(\alpha_{1} \alpha_{2}, \beta_{1} \beta_{2})$-downshift of $\rho_{1}$ and the corresponding auxiliary function, respectively.

    Since $\rho_{3} \leq \rho_{2} \leq \rho_{1}$, it only remains to show that $\rho_{3} \geq \rho_{1}''$. Observe that to obtain this bound it suffices to prove the following two properties (for a function $g$ taking real values, we denote by $\supp(g)$ all points $t$ in its domain for which $g(t) \neq 0$):
    \begin{enumerate}
        \item\label{item:AOAaux} $\rho_{3}(t) \geq \phi_{1}'(t)$ for every $t > 0$, and
        \item\label{item:AOAone} $\rho_{3}(t) \geq 1$ for every $t \in \supp(\rho_{1}'')$.
    \end{enumerate}
    Indeed, on the one hand, for every $t > 0$ such that $\phi_{1}'(t) = \rho_{1}''(t)$ the first property implies $\rho_{3}(t) \geq \rho_{1}''(t)$. On the other hand, for every $t > 0$ such that $\phi_{1}'(t) \neq \rho_{1}''(t)$ we have $\rho_{1}''(t) = 1$, so $\rho_{3}(t) \geq \rho_{1}''(t)$ follows from the second property.

    First, we prove property \ref{item:AOAaux}. To this end, observe that for every $t > 0$ we have $\rho_{3}(t) \geq \rho_{2}'(t) \geq \phi_{2}(t)$ and $\rho_{2}(t) \geq \rho_{1}'(t) \geq \phi_{1}(t)$. Therefore,
    \begin{align*}
        \rho_{3}(t)
            & \geq \phi_{2}(t)
                =  \begin{cases}
                       \frac{\rho_{2}(\alpha_{2})}{\beta_{2}} & \forall t \in (0, 1), \\
                       \frac{\rho_{2}(\alpha_{2} t)}{\beta_{2}} & \forall t \geq 1
                   \end{cases}
              \geq \begin{cases}
                       \frac{\phi_{1}(\alpha_{2})}{\beta_{2}} & \forall t \in (0, 1), \\
                       \frac{\phi_{1}(\alpha_{2} t)}{\beta_{2}} & \forall t \geq 1
                   \end{cases}
                =  \begin{cases}
                       \frac{\rho_{1}(\alpha_{1} \alpha_{2})}{\beta_{1} \beta_{2}} & \forall t \in (0, 1), \\
                       \frac{\rho_{1}(\alpha_{1} \alpha_{2} t)}{\beta_{1} \beta_{2}} & \forall t \geq 1
                   \end{cases}
                =  \phi_{1}'(t).
    \end{align*}
    Here the first and second inequalities hold by the above observation, the first and the last equalities hold by construction of $\phi_{2}$ and $\phi_{1}'$, respectively, and the second equality holds by construction of $\phi_{1}$ and the fact that $\alpha_{2} \geq 1$.

    Now, let us show property \ref{item:AOAone}. First, note that for every $t \in \supp(\rho_{2}')$ by construction we have $\rho_{3}(t) \geq \rho_{2}'(t) \geq 1$. Next, note that by construction
    \[
        \supp(\rho_{2}') = \supp(\phi_{2}) = \begin{cases}
            \emptyset & \text{ if } \alpha_{2} \notin \supp(\rho_{2}) \\
            \{t > 0 \colon \alpha_{2} t \in \supp(\rho_{2})\} & \text{ otherwise }.
        \end{cases}
    \]
    Since $\supp(\rho_{2}) \supseteq \supp(\rho_{1}')$, this implies
    \[
        \supp(\rho_{2}') \supseteq \begin{cases}
            \emptyset & \text{ if } \alpha_{2} \notin \supp(\rho_{1}') \\
            \{t > 0 \colon \alpha_{2} t \in \supp(\rho_{1}')\} & \text{ otherwise }.
        \end{cases}
    \]
    Similarly, note that
    \[
        \supp(\rho_{1}') = \supp(\phi_{1}) = \begin{cases}
            \emptyset & \text{ if } \alpha_{1} \notin \supp(\rho_{1}) \\
            \{t > 0 \colon \alpha_{1} t \in \supp(\rho_{1})\} & \text{ otherwise }.
        \end{cases} \\
    \]
    Thus, the condition $\alpha_{2} \notin \supp(\rho_{1}')$ is equivalent to the following: either $\alpha_{1} \notin \supp(\rho_{1})$ or $\alpha_{1} \alpha_{2} t \notin \supp(\rho_{1})$. Note that the latter condition is more restrictive, as $\alpha_{2} \geq 1$. Therefore, we get:
    \[
        \supp(\rho_{2}') \supseteq \begin{cases}
            \emptyset & \text{ if } \alpha_{1} \alpha_{2} \notin \supp(\rho_{1}) \\
            \{t > 0 \colon \alpha_{1} \alpha_{2} t \in \supp(\rho_{1})\} & \text{ otherwise }
        \end{cases} = \supp(\rho_{1}''),
    \]
    where the equality follows by construction of $\supp(\rho_{1}'')$. Thus, $\rho_{3}(t) \geq 1$ holds for every $t \in \supp(\rho_{2}') \supseteq \supp(\rho_{1}'')$, so property \ref{item:AOAone} holds.
\end{proof}

We conclude this section by providing a proof of \Cref{lem:shiftedCurve}.

\begin{proof}[Proof of \Cref{lem:shiftedCurve}]
First observe that the statement trivially holds if we have $\alpha \geq \rank(\mathcal{M})$, because $F(\rho) \leq \alpha w_{\max}$.
Hence, in what follows we assume $\alpha < \rank(\mathcal{M})$.

Let $\rho'\colon \mathbb{R}_{>0} \to \mathbb{R}_{\geq 0}$ be the $(\alpha,\beta)$-downshift of $\rho$.
By definition of the function $F$ and because $\tilde{\rho} \geq \rho'$, we get
\begin{equation}\label{eq:FTildeToDownshift}
F(\tilde{\rho}) = \int_0^{\infty} \eta(\tilde{\rho}(t))\,dt
 \geq \int_0^{\infty} \eta(\rho'(t))\,dt
 = \int_0^{\frac{\rank(\mathcal{M})}{\alpha}} \eta(\rho'(t))\,dt,
\end{equation}
where the equality at the end follows from the fact that $\rho'(t)=0$ for $t > \sfrac{\rank(\mathcal{M})}{\alpha}$.
We now distinguish between $\beta \geq n$ and $\beta < n$.

Consider first the case $\beta \geq n$.
In this case we can expand as follows:
\begin{align*}
\int_0^{\frac{\rank(\mathcal{M})}{\alpha}} \eta(\rho'(t))\,dt
&\geq \frac{\rank(\mathcal{M})}{\alpha} \cdot \eta(1)
 \geq \frac{1}{2\alpha\beta} \rank(\mathcal{M}) \cdot 2n \eta(1)
 \geq \frac{1}{2\alpha\beta} \rank(\mathcal{M})\cdot \eta(n)\\
&= \frac{1}{2\alpha\beta} \rank(\mathcal{M})\cdot w_{\max}
\geq \frac{1}{2\alpha\beta} F(\rho),
\end{align*}
where the first inequality holds because $\rho'$ is at least $1$ when it is non-zero and $\rho'$ is non-zero within $[0,\sfrac{\rank(\M)}{\alpha}]$ because $\alpha\leq \rank(\M)$, the second inequality uses $\beta \geq n$, the third one is due to $2n\eta(1) \geq \eta(n)$---which is a consequence of \cref{lem:etaProperties}~\ref{item:etaProduct}---the equality thereafter uses $\eta(n)=w_{\max}$ and the final inequality follows from $\rank(\mathcal{M})\cdot w_{\max} \geq F(\rho)$.
The above relation together with \Cref{eq:FTildeToDownshift} implies the desired result when $\beta \geq n$.

Now assume $\beta < n$.
In this case we continue as follows:
\begin{align*}
\int_0^{\frac{\rank(\mathcal{M})}{\alpha}} \eta(\rho'(t))\,dt
  &\geq \int_1^{\frac{\rank(\mathcal{M})}{\alpha}} \eta(\rho'(t))\,dt\\
  &= \int_1^{\frac{\rank(\mathcal{M})}{\alpha}} \eta\left(\max\left\{\frac{\rho(\alpha t)}{\beta},1\right\}\right)\,dt\\
  &= \frac{1}{\alpha} \int_{\alpha}^{\rank(\mathcal{M})} \eta\left(\max\left\{\frac{\rho(t)}{\beta},1\right\}\right)\,dt\\
  &\geq \frac{1}{2 \alpha \beta} \int_{\alpha}^{\rank(\mathcal{M})} \eta\left(\max\left\{\rho(t),\beta\right\}\right)\,dt\\
  &\geq \frac{1}{2 \alpha \beta} \int_{\alpha}^{\rank(\mathcal{M})} \eta(\rho(t))\,dt\\
  &= \frac{1}{2 \alpha \beta} \left(\int_{0}^{\rank(\mathcal{M})} \eta(\rho(t))\,dt -  \int_0^{\alpha} \eta(\rho(t))\,dt\right)\\
  &\geq \frac{1}{2 \alpha \beta} (F(\rho) - \alpha w_{\max}),
\end{align*}
where the first equality is a consequence of $\rho'$ being the $(\alpha,\beta)$-downshift of $\rho$,
the second equality follows from a variable substitution,
the second inequality uses $2\beta \eta(\max\{\sfrac{\rho(t)}{\beta},1\}) \geq \eta(\max\{\rho(t),\beta\})$, which holds due to \cref{lem:etaProperties}~\ref{item:etaProduct} (here we use $\beta \leq n$ to fulfill the conditions of this statement),
and in the last inequality we use the definition of $F$ and the fact that the function $\eta$ never takes values larger than $w_{\max}$.
The above relation together with \Cref{eq:FTildeToDownshift} implies the desired result for $\beta \leq n$, which finishes the proof.
\end{proof}

\section{Learning Rank-Density Curves From a Sample}\label{sec:curveEstimate}

One of the main challenges when designing and analyzing algorithms for MSP is understanding what kind of (and how much) information can be learned about the underlying instance after observing a random sample of it. 

In this section, we show that, with constant probability, after observing a sample set $S$ containing each element with probability $0.5$, one can learn a good approximation of the rank-density curve of both $\M$ and $\mrestrict{\M}{N \setminus S}$, thus proving \cref{thm:goodEvent}. However, even if one knew the exact (instead of an approximate) rank-density curve of $\mrestrict{\M}{N \setminus S}$, given that the matroid is not known upfront (and hence neither which elements are associated to each of the different density areas of the curve), it is a priori not clear how to proceed. A second main contribution of this section is to show that the set of elements in $N \setminus S$ that are spanned by a subset of $S$ of a given density is well-structured. In particular, this will allow us to build a (chain) decomposition $\bigoplus_{i = 1}^{k} \M_{i}$ of $\mrestrict{\M}{N \setminus S}$ where all the $\M_{i}$'s satisfy some desired properties with constant probability --- see \cref{sec:proofThmGRP} for details.

The main technical contribution in this section is the following result.

\begin{theorem}\label{thm:concentrationBounds}
	Let $\M = (N, \I)$ be a matroid containing $3 h$ disjoint bases for some $h \in \Z_{\geq 1}$. Let $S \sim \BinomDistr(N, \sfrac{1}{2})$. Then
	\begin{align}
		\Pr\left[ |\mspan(D(S, h)) \setminus S| \leq \frac{\card{N}}{12}\right]
		&\leq \mathtt{exp}\left(-\frac{\card{N}}{144}\right),\label{eq:concentrationCard}\\
		\Pr\left[r(D(S, h)) \leq \frac{r(N)}{8}\right]
		&\leq \mathtt{exp}\left(-\frac{r(N)}{48}\right).\label{eq:concentrationRankS}
	\end{align}
\end{theorem}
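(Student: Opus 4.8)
The plan is to analyze each of the two bounds by identifying a suitable ``large'' structure that survives into $D(S,h)$ with high probability, and then applying a standard concentration inequality. I would handle~\eqref{eq:concentrationRankS} first, since it is more directly about rank. The matroid $\M$ contains $3h$ disjoint bases $B_1,\dots,B_{3h}$; think of these as $3h$ copies of a spanning set. Each element of $N$ is included in $S$ independently with probability $\tfrac12$. The key observation is that $D(S,h)$ is, essentially by definition of the down-operator $D(\cdot,\lambda)$, the maximal set $U\subseteq S$ maximizing $|U|-h\,r(U)$; in particular every subset of $S$ of density $>h$ lies inside $D(S,h)$, so $\mspan(D(S,h))\supseteq$ any subset of $S$ of density exceeding $h$. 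For~\eqref{eq:concentrationRankS} I would argue that with high probability $S$ contains $h$ disjoint bases of a minor of rank $\Omega(r(N))$: concretely, among the $3h$ disjoint bases, consider the multiset $\{B_i\cap S\}$; using that each $B_i\cap S$ is a random subset of $B_i$ of expected size $\tfrac12 r(N)$ and Chernoff/Hoeffding on $|S\cap B_i|$, one gets that with high probability ``many'' of the $B_i\cap S$ are large. The cleaner route is to note that $\bigcup_i (B_i\cap S)$ is a set whose rank and cardinality can both be controlled: its cardinality concentrates around $\tfrac12\cdot 3h\cdot r(N)=\tfrac{3h}{2}r(N)$ by Hoeffding (it is a sum of $3h\,r(N)$ independent Bernoulli's, or fewer if the $B_i$ overlap — but the $B_i$ are disjoint here), and its rank is at most $r(N)$. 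If this union set has density at least $h$ then it is contained in $D(S,h)$, forcing $r(D(S,h))$ to be large; but density $\ge h$ only gives $r\ge (\text{card})/h\ge$ something, which is not yet a lower bound on $r(D(S,h))$ directly. So instead I would use the ``protection'' structure: restrict attention to the submatroid spanned by $B_1\cup\cdots\cup B_{3h}$ (which is all of $\M$ since each $B_i$ is a basis), and inside $S$ find $h$ disjoint independent sets each of size $\ge r(N)/8$ with high probability — this uses a matroid-union / greedy argument together with the fact that $|S\cap B_i|\ge r(N)/4$ for most $i$ by Hoeffding, so $S$ restricted to the ground set still contains $h$ disjoint ``large-rank'' independent sets. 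A set consisting of $h$ disjoint independent sets of size $s$ has cardinality $hs$ and rank $\le$ anything, hence density $\ge$ (cardinality)/(its rank); if its rank were $< r(N)/8$ its density would exceed $h$ (since $hs/(r(N)/8) > h$ when $s > r(N)/8$... wait, need $s\ge r(N)/8$ gives density $\ge h\cdot (r(N)/8)/(r(N)/8)=h$, marginal), so I would pick the constants so the inequality is strict, landing the whole union inside $D(S,h)$ and therefore $r(D(S,h))\ge r(N)/8$.

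For~\eqref{eq:concentrationCard}, the target is $|\mspan(D(S,h))\setminus S|\ge |N|/12$. Here I would combine the previous structural fact — $D(S,h)$ contains a set $U\subseteq S$ of large rank, call it $\ge r(N)/8$ — with a counting argument over the elements of $N\setminus S$: an element $e\in N\setminus S$ is spanned by $D(S,h)$ whenever it is spanned by such a $U$. The cleanest way is to go back to the $3h$ disjoint bases. With high probability, a $\Theta(1)$ fraction of $N$ lies in $N\setminus S$ (Hoeffding: $|N\setminus S|\ge |N|/3$ say, with probability $1-\exp(-\Omega(|N|))$). Now I want most of $N\setminus S$ to be spanned by $D(S,h)$. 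I would use the fact (which the paper sets up) that elements of $N\setminus S$ spanned by a density-$h$ subset of $S$ are ``well-structured''; more concretely, consider the rank-$h\cdot$(something) structure: if $S$ contains $h$ disjoint bases $B'_1,\dots,B'_h$ of $\M$ itself (not just of a minor), then $\bigcup B'_j$ has cardinality $h\,r(N)$ and rank $r(N)$, density exactly $h$, hence $\subseteq D(S,h)$, and it spans all of $N$ — so $\mspan(D(S,h))=N$ and $|\mspan(D(S,h))\setminus S| = |N\setminus S|\ge |N|/3$, done. So the real content is: with probability $\ge 1-\exp(-|N|/144)$, $S$ contains $h$ disjoint bases of $\M$. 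This follows from $\M$ containing $3h$ disjoint bases: by matroid union, $S$ contains $h$ disjoint bases iff $S$ has no ``deficient'' set, i.e.\ for every flat $F$, $|S\setminus F|\ge h(r(N)-r(F))$; but checking all flats is where a union bound over exponentially many flats would fail, so instead I would argue more cheaply: among the $3h$ original disjoint bases $B_1,\dots,B_{3h}$, each element $B_i\cap S$ is a uniformly random subset of $B_i$, $|B_i\cap S|$ concentrates, and a greedy/exchange argument reconstitutes $h$ full bases from the ``surplus'' of $3h$ partial ones once each partial one has $\ge$ a large constant fraction of $r(N)$ — this is the argument I expect to be the main obstacle and the place requiring care, because naively ``$h$ disjoint partial bases of total size $\ge h r(N)$'' does not reassemble into $h$ disjoint full bases.

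**Main obstacle.** The crux is converting the high-probability survival of \emph{cardinality} mass (which Hoeffding gives for free, since $|S\cap B_i|$ and $|N\setminus S|$ are sums of independent Bernoulli's) into high-probability survival of \emph{rank / spanning} structure inside $D(S,h)$, without a union bound over exponentially many flats or subsets of $N$. I expect the right tool is to lower-bound $r(D(S,h))$ or the spanned set via the $3h$-disjoint-bases hypothesis used with a \emph{fixed} deterministic collection of subsets (e.g.\ the $B_i$'s themselves, or $h$-fold sub-collections of them), so that all randomness enters only through the coin flips defining $S$ restricted to a fixed ground set, and Hoeffding applies directly. Getting the constants ($1/12$, $1/8$, $1/144$, $1/48$) to line up will require choosing, among the $3h$ bases, the right number to ``spend'' on guaranteeing large rank versus the right slack in the density comparison so that the relevant set is \emph{strictly} denser than $h$ and hence genuinely inside $D(S,h)$; I would do this bookkeeping last, after the structure of the argument is fixed.
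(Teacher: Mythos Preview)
Your plan has genuine gaps in both parts.

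\textbf{For~\eqref{eq:concentrationCard}.} Your intended route is to show that $S$ contains $h$ disjoint \emph{full} bases of $\M$ with probability $\geq 1-\exp(-|N|/144)$, so that $\mspan(D(S,h))=N$. This is false in general. Take $\M$ to be a direct sum of $r(N)$ parallel classes each of size exactly $3h$; this matroid has $3h$ disjoint bases. The probability that $S$ misses a fixed class entirely is $2^{-3h}$, so when $r(N)\gg 2^{3h}$, with high probability $S$ misses some class and hence contains no basis at all. Yet~\eqref{eq:concentrationCard} still holds for this matroid, so the bound cannot be proved via ``$S$ contains $h$ disjoint bases''. The paper instead uses a coupling through the $h$-fold matroid union $\M_h$: process elements one by one, greedily building a set $W\subseteq S$ with $W\in\I_h$; the elements of $S$ rejected by this greedy form a set $C$, and the elements of $N\setminus S$ rejected form a set $G$. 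One shows $G\subseteq \mspan(D(S,h))\setminus S$ (via the Nash--Williams formula for $r_h$), and $G$ and $C$ are identically distributed. Since $|C|=|S|-|W|\geq |S|-h\,r(N)\geq |S|-|N|/3$, a single Chernoff bound on $|S|$ finishes.

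\textbf{For~\eqref{eq:concentrationRankS}.} You assert that ``every subset of $S$ of density $>h$ lies inside $D(S,h)$''. This is also false: take $A=\{a_1,\dots,a_{10}\}$ a parallel class and $b$ independent from $A$, with $h=3$; then $D(N,3)=A$, but $\{a_1,\dots,a_7,b\}$ has density $4>3$ and is not contained in $A$. So finding a dense subset of $S$ does not directly lower-bound $r(D(S,h))$. The paper's argument is purely algebraic: for $B$ the union of $3h$ disjoint bases, every $Q\subseteq B$ satisfies $|Q|\leq 3h\,r(Q)$, whence for any $A\subseteq B$,
\[
2h\,r(D(A,h)) \;\geq\; |D(A,h)| - h\,r(D(A,h)) \;\geq\; |A| - h\,r(A).
\]
Applying this with $A=S\cap B$ and a single Chernoff bound on $|S\cap B|$ (expectation $\tfrac{3h}{2}r(N)$) gives~\eqref{eq:concentrationRankS}. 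Your idea of controlling the individual $|S\cap B_i|$ is unnecessary; one Chernoff on the union suffices once the displayed inequality is in hand.
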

\begin{proof}
	We prove the concentration result \eqref{eq:concentrationCard} first. Let $\M_{h} = (N, \I_{h})$ denote the $h$-fold union of $\M$ and let $r_{h}$ denote its rank function. Consider the procedure described in \cref{alg:karger-inspired}, which is loosely inspired by \cite{karger_1998}.
    \begin{algorithm}[th]
        \caption{Algorithm for lower bounding $\E_{S}\left[\card{\mspan(D(S, h)) \setminus S}\right]$}\label{alg:karger-inspired}
        Set $W \leftarrow \emptyset$, $G \leftarrow \emptyset$, and $C \leftarrow \emptyset$\;
		\For{every $e \in N$ considered in an arbitrary order}{
			\eIf{$W \cup \{e\} \in \I_{h}$}{
				\lIf{$e \in S$}{Update $W \leftarrow W \cup \{e\}$}
			}{
				\leIf{$e \in S$}{Update $C \leftarrow C \cup \{e\}$}{Update $G \leftarrow G \cup \{e\}$}
			}
		}
	\end{algorithm}
    Note that the following three properties hold at all times: $W$, $G$, and $C$ are pairwise disjoint; $W \subseteq S$ and $C \subseteq S$, while $G \cap S = \emptyset$; and $W \in \I_{h}$. In addition, by construction, at the end of the procedure we have:
	\begin{enumerate}
		\item\label{item:karger1} $S = C \uplus W$. Moreover, the random sets $G$ and $C$ have identical distributions, because each element belongs to $S$ with probability $\sfrac{1}{2}$ independently of the other elements.
		\item\label{item:karger2} $G \subseteq \mspan(D(S, h)) \setminus S$. Because $G \cap S = \emptyset$, it is enough to show $G \subseteq \mspan(D(S, h))$. Given an arbitrary $e \in G$, by construction we have $W \cup \{e\} \notin \I_{h}$, i.e., $r_{h}(W \cup \{e\}) = r_{h}(W)$. As $W \subseteq S$, this yields $r_{h}(S \cup \{e\}) = r_{h}(S)$, which then implies $e \in \mspan(D(S, h))$. The latter implication follows by a standard matroid argument; we provide a proof in \cref{lem:density-union} for completeness.
	\end{enumerate}

    \noindent As $G\subseteq \mspan(D(S,h))\setminus S$, and $G$ and $C$ have the same distribution, we get
    \begin{equation}\label{eq:boundGInsteadOfSpan}
        \Pr\left[\card{\mspan(D(S,h)\setminus S)} \leq \tfrac{\card{N}}{12}\right]
            \leq \Pr\left[\card{G} \leq \tfrac{\card{N}}{12}\right]
              =  \Pr\left[\card{C} \leq \tfrac{\card{N}}{12}\right].
    \end{equation}
    Moreover,
    \begin{equation}\label{eq:boundCInS}
        \card{C} = \card{S} - \card{W} \geq \card{S} - h r(N) \geq \card{S} - \sfrac{\card{N}}{3},
    \end{equation}
    where the equality follows from $S = C \uplus W$, the first inequality from $W \in \I_{h}$ (which implies $\card{W} = r_{h}(W) \leq h r(N)$), and the last one from the fact that $\M$ contains $3 h$ many disjoint bases (and hence $\card{N} \geq 3 h r(N)$).
    
    Combining~\eqref{eq:boundGInsteadOfSpan} and~\eqref{eq:boundCInS} we obtain
    \begin{equation}\label{eq:spanToChernoff}
        \Pr\left[|\mspan(D(S,h)\setminus S)|\leq \tfrac{\card{N}}{12}\right]
            \leq \Pr\left[\card{S} - \tfrac{\card{N}}{3} \leq \tfrac{\card{N}}{12}\right]
            \leq \Pr\left[\card{S} \leq \tfrac{5}{6} \E[\card{S}]\right],
    \end{equation}
    where the second inequality follows from $\E[\card{S}]=\sfrac{\card{N}}{2}$. Relation \eqref{eq:concentrationCard} now follows by applying a Chernoff bound $\Pr[X \leq (1 - \delta)\E[X]] < \mathtt{exp}[\sfrac{-\delta^{2} \E[X]}{2}]$ for $X = \card{S}$ to the right-hand side expression in \eqref{eq:spanToChernoff} and using $\E[\card{S}] = \sfrac{\card{N}}{2}$.
    
	We next prove the concentration result~\eqref{eq:concentrationRankS}.
	Let $B$ be a union of $3h$ disjoint bases contained in $\M$. We first show that for any set $A \subseteq B$ the following inequality holds:
	\begin{equation}\label{eq:rankConcentrationClaim}
		r(D(A, h)) \geq \frac{\card{A} - h r(A)}{2h}.
	\end{equation}
	To this end, we start by observing that, because $\mrestrict{\M}{B}$ is a uniformly dense matroid with density $3h$, we have
	\begin{equation}\label{eq:noDenserInUnifDense}
	3h r(Q)  \geq |Q| \qquad \forall Q\subseteq B.
	\end{equation}
	The above property also immediately follows by observing that, if we write $B=B_1 \cup \cdots \cup B_{3h}$ as the union of $3h$ disjoint bases, then we have
	\begin{equation*}
	|Q| = \sum_{i=1}^{3h} |Q \cap B_i| = \sum_{i=1}^{3h} r(Q\cap B_i) \leq \sum_{i=1}^{3h} r(Q) = 3h r(Q).
	\end{equation*}
	Relation~\eqref{eq:rankConcentrationClaim} now holds due to the following:
	\begin{align*}
	2h r(D(A,h)) \geq |D(A,h)| - h r(D(A,h)) \geq |A| - h r(A),
	\end{align*}
	where the first inequality is due to~\eqref{eq:noDenserInUnifDense} and the second one follows from the definition of $D(A,h))$.

	Now, in order to show \eqref{eq:concentrationRankS}, suppose that the event $r(D(S, h)) \leq \sfrac{r(N)}{8}$ occurs. Then
	\[
		\frac{r(N)}{8}
		\geq r(D(S, h))
		\geq r(D(S \cap B, h))
		\geq \frac{\card{S \cap B} - h r(S \cap B)}{2h}
		\geq \frac{\card{S \cap B} - h r(N)}{2h},
	\]
	where the third inequality holds by \eqref{eq:rankConcentrationClaim}. Since rearranging the terms in the above expression gives $\card{S \cap B} \leq \sfrac{5h r(N)}{4}$, we have $\Pr\left[r(D(S, h)) \leq \sfrac{r(N)}{8}\right] \leq \Pr\left[\card{S \cap B} \leq \sfrac{5h r(N)}{4}\right]$. To upper bound the latter probability, observe that $S \cap B$ contains each element of $B$ with probability $\sfrac{1}{2}$ independently by construction of $S$. Therefore we can apply a Chernoff bound $\Pr\left[X \leq (1 - \delta) \E[X]\right] \leq \exp\left(-\sfrac{\delta^{2}\E[X]}{2}\right)$ with $X = \card{S \cap B}$, $\E[X] = \sfrac{|B|}{2} = \sfrac{3h r(N)}{2}$, and $\delta = \sfrac{1}{6}$ (chosen so that $(1 - \delta) \E[X] = \sfrac{5h r(N)}{4}$), resulting in
	\begin{align*}
		\Pr\left[r(D(S, h)) \leq \frac{r(N)}{8}\right]
		& \leq  \mathtt{exp} \left(-\frac{h r(N)}{48} \right)
		\leq  \mathtt{exp} \left(-\frac{r(N)}{48}\right),
	\end{align*}
	where the last inequality holds since $h \geq 1$.
\end{proof}

The proof of \cref{thm:goodEvent} is based on the concentration result \eqref{eq:concentrationRankS}. In summary, rather than directly showing that $\rho_{\mrestrict{\M}{S}}$ approximates $\rho_{\M}$ well everywhere, we consider a discrete set of points on $\rho_{\M}$ associated to minors of $\M$ of geometrically increasing ranks. We then apply \eqref{eq:concentrationRankS} to these minors and employ a union bound to show that we get a good approximation for these grid points. The union bound works out because the ranks are geometrically increasing and appear in the exponent of the right-hand side of \eqref{eq:concentrationRankS}. The complete proof is presented below.

\begin{proof}[Proof of \cref{thm:goodEvent}]
	Let $\{\lambda_i\}_{i \in [m]}$ denote the densities (i.e., values of at least one) in the image of $\rho_{\M}$, and let $\tau \coloneqq \max\{t>0 \colon \rho_{\M}(t) \geq 1\}$.	
	Let $\rho'\colon \mathbb{R}_{>0} \to \mathbb{R}_{\geq 0}$ be the curve obtained from $\rho_{\M}$ by rounding down every density $\lambda_i$ to the closest power of $3$.
	That is,
	\begin{equation*}
		\rho'(t) =
		\begin{cases}
		\max\{3^j \colon 3^j \leq \rho_{\M}(t), j \in \Z_{\geq 0}\} &\text{if $0 < t \leq \tau$},\\
		0 &\text{if $t > \tau$}.
		\end{cases}
	\end{equation*}
	Let $\lambda'_1 > \cdots >\lambda'_{\ell}$ denote the densities in the image of $\rho'$, and note that by construction all the $\lambda'_i$ are powers of $3$. In particular, $\rho'$ is a $(1,3)$-approximation of $\rho_{\M}$.
	
	Next, let $r_{\max} \colon \R_{\geq 0} \to \R_{\geq 0}$ be the function given by
	\begin{equation*}
		r_{\max}(\lambda) =
		\begin{cases}
			\max\{t: \rho'(t) \geq \lambda\} &\text{if $0 \leq \lambda \leq \lambda'_1$},\\
			0 &\text{if $\lambda > \lambda_1'$}.
		\end{cases}
	\end{equation*}
	We define a subset of densities of $\{\lambda'_i\}_{i \in [\ell]}$ as follows. First, set $\mu_1=\rho'(36)$ and $i=1$. Then, while $\rho'(36 r_{\max}(\mu_i)) \geq 1$ (i.e., $\rho'(36 r_{\max}(\mu_i))$ is in the non-zero support of $\rho'$), set $\mu_{i+1} = \rho'(36 r_{\max}(\mu_i))$ and update $i=i+1$.
	Let $\Lambda$ denote the subset of densities selected by the above procedure and let $q\coloneqq |\Lambda|$.
	
	Now, for each $i \in [q]$, define $r_i \coloneqq r(D(N,\mu_i))$. Note that $r_i = r_{\max}(\mu_i)$. We then call a subset $S \subseteq N$ \emph{good} if it satisfies
	\begin{equation}\label{eq:goodApprox}
		r\left(D\left(S, \frac{\mu_i}{3}\right)\right) \geq \frac{r_i}{8} \qquad \forall i \in [q].
	\end{equation}
	The motivation for the above definition is that any good set $S$ satisfies that $\rho_{\mrestrict{\M}{S}}$ is a $(288,9)$-approximation of $\rho_{\M}$. To see this, first note that if $S$ is good then $\rho_{\mrestrict{\M}{S}} (t) \geq \sfrac{\mu_i}{3}$ for all $t\in (0,\sfrac{r_i}{8}]$. 
	Next, let $\mu_{i-1},\mu_i \in \Lambda$ and $t\in (\sfrac{r_{i-1}}{8},\sfrac{r_{i}}{8}]$. Then $288t > 36r_{i-1}$, and thus $\rho'(288t) \leq \rho'(36r_{i-1})=\mu_i$, where the last equality follows by construction of the $\mu_i$'s. Using that $\rho'$ is a $(1,3)$-approximation of $\rho_{\M}$, it follows that $\rho_{\M}(288t)\leq 3 \rho'(288t) \leq 3 \mu_i$. Combining all the above, it follows that 
	\begin{equation*}
		\rho_{\mrestrict{\M}{S}} (t) \geq \frac{\mu_i}{3} \geq \frac{\rho_{\M}(288t)}{9} \qquad \forall \mu_{i-1},\mu_i \in \Lambda \text{ and } t\in \left(\frac{r_{i-1}}{8},\frac{r_{i}}{8}\right].
	\end{equation*}
	Moreover, notice that for $t\in (1,\sfrac{r_1}{8}]$ we have $\rho'(288t) \leq \rho'(36) = \mu_1$. Hence, using the same reasoning as above, it follows that $\rho_{\mrestrict{\M}{S}} (t) \geq \sfrac{\mu_1}{3} \geq \sfrac{\rho_{\M}(288t)}{9}$.
	Finally, consider the case where $t > \sfrac{r_q}{8}$. 
	By construction of the $\mu_i$ we have $\rho'(288t) \leq \rho'(36r_q) = 0$. 
	Hence $\rho_{\M}(288t)=\rho'(288t)=0$, since $\rho'$ is a $(1,3)$-approximation of $\rho_{\M}$ and thus $\rho'(a)=0 \iff \rho_{\M}(a)=0$ for any $a>0$.
	It follows that $\rho_{\mrestrict{\M}{S}} (t) \geq 0 = \rho_{\M}(288t)$.
	This concludes the proof of the claim that if $S$ is good, then $\rho_{\mrestrict{\M}{S}}$ is a $(288,9)$-approximation of $\rho_{\M}$.	
		
	Hence, in order to prove the theorem it remains to show that the probability of a set $S\subseteq N$ being good (i.e., satisfying \eqref{eq:goodApprox}) is at least $\sfrac{1}{100}$. We discuss this next. First, note that by \cref{eq:concentrationRankS} from \cref{thm:concentrationBounds}, for each $\mu_i \in \Lambda$ with $\mu_i \geq 3$ it holds
	\begin{equation*}
		\Pr\left[r\left(D\left(S, \frac{\mu_i}{3}\right)\right) \leq \frac{r_i}{8}\right]
		\leq \mathtt{exp}\left(-\frac{r_i}{48}\right).
	\end{equation*}
	In the case where $\mu_q=1$, while we cannot directly black box the concentration bound from \cref{thm:concentrationBounds} since the assumptions are not met, we can still get the same bound as follows. Note that if $\mu_q=1$, then $r(D(S, \sfrac{\mu_q}{3}))$ is just $r(S)$, and $r_q = r(N)$.
	Let $B$ be any basis of $\M$, and observe that $r(S) \geq |S \cap B|$. Since $S$ contains every element of $N$ independently with probability $\sfrac{1}{2}$, we can use a Chernoff bound $\Pr\left[X \leq (1 - \delta) \E[X]\right] \leq \exp\left(-\sfrac{\delta^{2}\E[X]}{2}\right)$ with $X = \card{S \cap B}$, $\E[X] = \sfrac{|B|}{2} = \sfrac{r(N)}{2}$, and $\delta = \sfrac{3}{4}$ (chosen so that $(1 - \delta) \E[X] = \sfrac{r(N)}{8}$), resulting in
	\begin{equation*}
		\Pr\left[r\left(D\left(S, \frac{\mu_q}{3}\right)\right) \leq \frac{r_q}{8}\right]
		= \Pr\left[r(S) \leq \frac{r(N)}{8}\right]
		\leq \mathtt{exp}\left(-\frac{3^2 \cdot r_q}{4^3}\right)
		\leq \mathtt{exp}\left(-\frac{r_q}{48}\right).
	\end{equation*}

	Finally, note that by construction we have $r_1\geq 36$ and $r_{i+1} \geq 36 r_i$ for all $i,i+1 \in [q]$. 
	Thus, $r_i\geq 36^i$ for each $i \in [q]$, and hence by the union bound it follows
	\begin{equation*}
		\P[S \text{ not good}] 
		\leq \sum_{i=1}^q \mathtt{exp}\left(-\frac{r_i}{48}\right)
		\leq \sum_{i=1}^q \mathtt{exp}\left(-\frac{36^i}{48}\right)
		\leq \left[\mathtt{exp}\left(-\frac{36}{48}\right)+2\mathtt{exp}\left(-\frac{36^2}{48}\right)\right]
		\leq \frac{99}{200}.
	\end{equation*}
	Hence,
	\begin{equation*}
		\P[S \text{ and } N\setminus S \text{ are both good}] \geq 1 - 2 \cdot \frac{99}{200} = \frac{1}{100},
	\end{equation*}
	which concludes the proof.
\end{proof}
\section{The Main Algorithm and Its Analysis}\label{sec:algorithm}

In this section we describe and analyse the procedure from \cref{thm:aidedAlgorithm}. The analysis consists of two main ingredients. The first one is to show that if the approximate curve $\rhoapprox$ is well-struc\-tured (in some well-defined sense), then there is an algorithm retrieving a constant factor of $F(\rhoapprox)$ on expectation --- see \cref{thm:GRP}. The second one is then to show that, given any initial approximate curve $\rhoapprox$, one can find well-struc\-tured curves whose $F$ function value is close to $F(\rhoapprox)$ --- see \cref{thm:findGoodCurve}. 

The next result, proved in \cref{sec:proofThmGRP}, formalizes the first step above.

\begin{theorem}\label{thm:GRP}
	Let $\M=(N,\I)$ be a matroid with $w$-sampled weights, and let $r$ and $\rho_{\M}$ denote the rank function and rank-density curve of $\M$, respectively.
	Let $\rhogrid \leq \rho_{\M}$ be a rank-density curve with densities $\{\lambdagrid_{i}\}_{i \in [m]}$ such that the $\lambdagrid_{i}$ are powers of some integer $\beta \geq 3$ and $\lambdagrid_{1} > \cdots > \lambdagrid_{m} \geq 1$, and such that $r(D(N, \lambdagrid_{i + 1})) \geq 24 r(D(N, \sfrac{\lambdagrid_{i}}{\beta}))$ for $i \in [m - 1]$.
	Then there is an efficient procedure $\mathrm{ALG}(\rhogrid, \beta)$ that, when run on the RA-MSP subinstance given by $\M$, returns an independent set $I$ of $\M$ of expected weight at least $(\sfrac{1}{180e}) F(\rhogrid)$.
\end{theorem}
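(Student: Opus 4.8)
The plan is to use the sampling phase to build, without knowing the matroid, a chain decomposition of $\M$ whose pieces play the role of the principal minors at the density levels prescribed by $\rhogrid$, and then to run a sample-based secretary subroutine for (nearly) uniformly dense matroids on each piece. Writing $r_i \coloneqq r(D(N,\lambdagrid_i))$ and letting $\ell_i$ be the length of the constant piece of $\rhogrid$ at height $\lambdagrid_i$ (so $F(\rhogrid) = \sum_i \ell_i\,\eta(\lambdagrid_i)$, a finite sum, and $\ell_i \leq r_i$ since $\rhogrid \leq \rho_{\M}$), it suffices to extract $\Omega(r_i\,\eta(\lambdagrid_i))$ — hence also $\Omega(\ell_i\,\eta(\lambdagrid_i))$ — from the $i$-th piece. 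The algorithm first observes a sample $S\subseteq N$ containing each element independently with probability $\sfrac12$ (equivalently, a random constant-fraction prefix), picking nothing. Having seen $\mrestrict{\M}{S}$, it forms the nested sets $\emptyset = T_0 \subseteq T_1 \subseteq \dots \subseteq T_m$ with $T_i \coloneqq \mspan(D(S, \sfrac{\lambdagrid_i}{\beta}))$ — these are nested since $D(S,\cdot)$ is non-increasing in its threshold and $\sfrac{\lambdagrid_1}{\beta} > \dots > \sfrac{\lambdagrid_m}{\beta}$ — and sets $\M_i \coloneqq \mminor{\M}{T_{i-1}}{(T_i\setminus T_{i-1})\setminus S}$. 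On the elements arriving after the sample it runs, in parallel on each $\M_i$, a subroutine in the spirit of \textcite{soto_2013_matroid}: pass to a uniformly dense minor $\mathcal N_i$ of $\M_i$ of density $\Omega(\lambdagrid_i)$, split its elements uniformly at random into $\Theta(\rank(\mathcal N_i))$ classes, and in each class run a sample-based single-secretary step (select the first post-sample arrival exceeding the largest weight seen in that class during the sample). The output is the union of all picked elements. (When $\sfrac{\lambdagrid_i}{\beta} < 1$, which forces $\lambdagrid_i = 1$, the threshold $1$ is used instead and that piece is handled by a direct Chernoff argument, exactly as the $\mu_q = 1$ case in the proof of \cref{thm:goodEvent}.)

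Feasibility is immediate and holds for every outcome of $S$: since the $T_i$ form a chain and $\M_i = \mminor{\M}{T_{i-1}}{\cdot}$, a standard matroid argument shows the union of independent sets chosen in the $\M_i$ is independent in $\M$. The core of the proof is to show that with constant probability over $S$, simultaneously for all $i\in[m]$: (i) $\rank(\M_i) = r(T_i) - r(T_{i-1}) = \Omega(r_i)$, and (ii) $\M_i$ contains a uniformly dense minor $\mathcal N_i$ with $\rank(\mathcal N_i) = \Omega(r_i)$ and density $\Omega(\lambdagrid_i)$. The key inputs are the concentration bounds \eqref{eq:concentrationRankS} and \eqref{eq:concentrationCard} of \cref{thm:concentrationBounds}, applied to appropriate (fixed) minors of $\M$ — in particular the ``chunk'' of $\M$'s principal sequence lying between density levels $\lambdagrid_{i-1}$ and $\lambdagrid_i$, which, $\lambdagrid_i$ being an integer power of $\beta \geq 3$, contains $\lambdagrid_i \geq 3\,\sfrac{\lambdagrid_i}{\beta}$ pairwise disjoint bases (it is a tower of uniformly dense principal minors, each of density $\geq \lambdagrid_i$) and for which the parameter $h = \sfrac{\lambdagrid_i}{\beta}$ is a positive integer, so the hypothesis of \cref{thm:concentrationBounds} is met. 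From \eqref{eq:concentrationRankS} one gets $r(T_i) = \Omega(r_i)$ with failure probability $\exp(-\Omega(r_i))$, and the gap hypothesis $r(D(N,\lambdagrid_{i+1})) \geq 24\,r(D(N,\sfrac{\lambdagrid_i}{\beta}))$, which gives $r(T_{i-1}) \lesssim r(D(N,\sfrac{\lambdagrid_{i-1}}{\beta})) \leq \sfrac{r_i}{24}$, shows that contracting $T_{i-1}$ costs little, so (i) follows; from \eqref{eq:concentrationCard} together with the defining property that $D(S,\sfrac{\lambdagrid_i}{\beta})$ has density at least $\sfrac{\lambdagrid_i}{\beta}$ (hence its span contains a correspondingly dense chunk), one extracts $\mathcal N_i$ for (ii), the $\sfrac1\beta$ factor lost in the threshold being recovered later via $\eta(\lambdagrid_i) \leq 2\beta\,\eta(\sfrac{\lambdagrid_i}{\beta})$ from \cref{lem:etaProperties}. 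A union bound over $i$ then closes the estimate exactly as in the proof of \cref{thm:goodEvent}: by the gap hypothesis the $r_i$ grow at least geometrically (ratio $\geq 24$) and sit in the exponents of the failure probabilities, so $\sum_i \exp(-\Omega(r_i))$ is a small constant.

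Conditioned on this good event, each subroutine runs on $\mathcal N_i$, a uniformly dense matroid of rank $\Omega(r_i)$, density $\Omega(\lambdagrid_i)$, and with $w$-sampled weights. By Soto's random partition matroid technique (as behind \cref{lem:OPTpartitions}), a uniformly random transversal of the $\Theta(\rank(\mathcal N_i))$ classes is independent in $\mathcal N_i$ with constant probability, and the sample-based single-secretary step — which, unlike the plain secretary algorithm, tolerates adversarial arrival order — returns the maximum of its class with probability at least $\sfrac14$; hence the subroutine returns an independent set of $\mathcal N_i$ (and thus of $\M_i$) of expected weight $\Omega(r_i\,\eta(\lambdagrid_i))$, using that $\eta$ is evaluated against the global weight vector $w$ and is non-decreasing (\cref{lem:etaProperties}). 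Summing over $i$ and using feasibility of the union, the output has expected weight $\Omega(\sum_i r_i\,\eta(\lambdagrid_i)) \geq \Omega(\sum_i \ell_i\,\eta(\lambdagrid_i)) = \Omega(F(\rhogrid))$; carefully tracking the constants through the sampling factor $\sfrac12$, the rank loss from \eqref{eq:concentrationRankS} and the gap condition, the density loss, the partition-transversal probability, and the $\sfrac1e$-type loss of the secretary step yields the claimed $\sfrac{1}{180e}\,F(\rhogrid)$.

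The main obstacle is guarantee (ii): sampling simultaneously shrinks ranks (controlled by \eqref{eq:concentrationRankS}) and dilutes densities, and one must certify that every piece $\M_i$ still carries a uniformly dense minor of density $\Omega(\lambdagrid_i)$ \emph{and} rank $\Omega(r_i)$ at once, while consecutive pieces stay rank-separated after the contraction of $T_{i-1}$ eats into them. This is precisely what the hypotheses are tailored to — the densities being integer powers of $\beta \geq 3$ (giving clean disjoint-base counts and an integer downshifted threshold $\sfrac{\lambdagrid_i}{\beta}$) and the factor-$24$ gap being measured from the downshifted level $\sfrac{\lambdagrid_i}{\beta}$ up to the next level $\lambdagrid_{i+1}$ — and most of the technical work is the matroid bookkeeping: relating the random sample sets $D(S,\sfrac{\lambdagrid_i}{\beta})$ and their spans to the principal sequence of $\M$, controlling the conditioning when one $T_{i-1}$ is contracted before the next level is analyzed, and counting the pairwise disjoint bases in each relevant minor so that \cref{thm:concentrationBounds} can be invoked.
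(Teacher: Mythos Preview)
Your chain construction via $T_i = \mspan(D(S,\sfrac{\lambdagrid_i}{\beta}))$ and the use of the gap hypothesis plus a union bound over geometrically growing $r_i$ both match the paper. The gap is in the per-piece subroutine. You propose to ``pass to a uniformly dense minor $\mathcal N_i$ of $\M_i$'' and then run Soto's random partition on it; but $\M_i$ lives entirely on the non-sampled elements $N\setminus S$, so after the sample you know neither its ground set nor its rank, and you certainly cannot locate a uniformly dense minor of it---that would require exactly the upfront matroid knowledge the theorem is meant to avoid. The same obstruction hits ``split its elements into $\Theta(\rank(\mathcal N_i))$ classes'' (you do not know $\rank(\mathcal N_i)$) and the ``sample-based single-secretary step'' within each class (every element of $\mathcal N_i$ is post-sample, so there is nothing seen during the sample to set a threshold from).

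The paper sidesteps this by \emph{not} asking $\M_i$ to contain anything uniformly dense and by \emph{not} partitioning it. \Cref{lem:goodSample} establishes only that, with probability at least $\sfrac13$, each $N_i$ (for $i\in\Lambdagrid$) contains $\lambdagrid_i$ pairwise disjoint independent sets of total size at least $(\sfrac{1}{24})\,\lambdagrid_i\,r(\overline N_i)$---proved by intersecting $N_i$ with a fixed union $B_i$ of $\lambdagrid_i$ bases of $D(N,\lambdagrid_i)$ and applying~\eqref{eq:concentrationCard} to $\mrestrict{\M}{B_i}$. This guarantee is then exploited by an online subroutine (OSP, \cref{alg:osp}, analyzed in \cref{lem:ospProcedure}) that needs \emph{only} the known integer $h=\lambdagrid_i$: it repeatedly feeds batches of $h$ arriving elements (filtered to keep the output independent) to the classical secretary algorithm; under random assignment each batch carries $h$ uniformly random weights from $w$, and the disjoint-sets guarantee certifies $\Omega(r(\overline N_i))$ completed batches, yielding $\Omega(r(\overline N_i))\cdot\sfrac{\eta(\lambdagrid_i)}{e}$ per piece. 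The boundary indices $i\notin\Lambdagrid$ (namely $r(\overline N_1)<24$ or $\lambdagrid_m=1$) are handled by two separate branches---classical secretary and OSP with $h=1$---run with fixed probabilities, rather than by the Chernoff patch you sketch.
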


We note that the above result assumes the $\lambdagrid_{i}$ to be powers of $\beta$ mainly for convenience (so that $\sfrac{\lambdagrid_{i}}{\beta}$ is an integer), but it is not strictly needed.

The second main ingredient in the proof of \cref{thm:aidedAlgorithm} is the following result. 

\begin{theorem}\label{thm:findGoodCurve}
	Let $\M=(N,\I)$ be a matroid with $w$-sampled weights, and let $r$ and $\rho_{\M}$ denote the rank function and rank-density curve of $\M$, respectively.
	Given an $(\alpha, \beta)$-approximate curve $\rhoapprox$ of $\rho_{\M}$ with $\alpha \in \mathbb{R}_{\geq 24}$ and $\beta \in \mathbb{Z}_{\geq 3}$, there is a procedure $\mathrm{ALG}(\rhoapprox, \alpha, \beta)$ returning rank-density curves $\rhogrid, \rhogrid_{1}, \rhogrid_{2}, \rhogrid_{3}, \rhogrid_{4}$ such that:
	\begin{enumerate}
	    \item $\rhogrid$ is an $(\alpha^2, \beta^2)$-approximation of~$\rho_{\M}$. \label{item:goodCurve1}
	    \item $\sum_{i \in [4]} F(\rhogrid_{i}) \geq F(\rhogrid)$. \label{item:goodCurve2}
	    \item For each $i\in [4]$, $\rhogrid_{i}$ satisfies the following properties: Let $\{\mu_j\}_{j \in [\ell]}$ be the densities of $\rhogrid_{i}$, then all the $\mu_j$ are powers of $\beta \geq 3$, and $r(D(N, \mu_{j + 1})) \geq \alpha r(D(N, \sfrac{\mu_{j}}{\beta})) \geq 24 r(D(N, \sfrac{\mu_{j}}{\beta}))$ for $j \in [\ell - 1]$. Moreover, $\rhogrid_{i} \leq \rho_{\M}$. \label{item:goodCurve3}
	\end{enumerate}
\end{theorem}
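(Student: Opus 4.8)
The plan is to first round $\rhoapprox$ to powers of $\beta$, then greedily \emph{sparsify} the resulting staircase into four ``chains'' whose consecutive densities are far apart on the rank scale $r(D(N,\cdot))$ (discarding densities that fit no chain), and finally to assemble $\rhogrid$ from the kept densities. The four chains will be the $\rhogrid_i$ and the assembled curve will be $\rhogrid$.

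First I would define $\hat\rho$ by rounding every density of $\rhoapprox$ down to the nearest integer power of $\beta$ (treating values in $[0,1)$ and $0$ via the $(1,\beta)$-downshift convention, exactly as in the proof of \cref{thm:goodEvent}). As there, $\hat\rho$ is a $(1,\beta)$-approximation of $\rhoapprox$, so by \cref{lem:approxOfApprox} and the hypothesis on $\rhoapprox$ it is an $(\alpha,\beta^2)$-approximation of $\rho_\M$; moreover all its densities are powers of $\beta$, and since distinct powers of $\beta$ differ by a factor $\geq\beta$, consecutive densities of $\hat\rho$ differ by a factor $\geq\beta$. Write $\lambda_1>\lambda_2>\cdots>\lambda_p$ for the densities of $\hat\rho$, $w_m$ for the width (length of rank interval) on which $\hat\rho=\lambda_m$, and $T_m:=w_1+\cdots+w_m$ for the rank at which the $\lambda_m$-step ends; note $T_m\leq r(D(N,\lambda_m))$ because $\hat\rho\leq\rho_\M$.

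Next I would process $\lambda_1,\dots,\lambda_p$ in this order while maintaining four chains $C_1,\dots,C_4$, each a strictly decreasing sequence of powers of $\beta$, initially empty. Call $\lambda_m$ \emph{compatible} with a chain with current last density $\mu$ if $r(D(N,\lambda_m))\geq\alpha\, r(D(N,\mu/\beta))$ (an empty chain is vacuously compatible). Since $r(D(N,\cdot))$ is non-increasing, among non-empty chains the one that is easiest to extend is the one with the \emph{largest} last density, and $\lambda_m$ is compatible with some chain iff it is compatible with that one; so I append $\lambda_m$ to the compatible chain with the largest last density and discard $\lambda_m$ if there is none. Because the first four kept densities fill the four chains and thereafter we always append to the chain of largest last density, the chains' last densities are always the four most recently kept densities, so $C_\ell$ ends up being every fourth kept density. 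In particular, within each chain consecutive densities $\mu_j>\mu_{j+1}$ satisfy $r(D(N,\mu_{j+1}))\geq\alpha\, r(D(N,\mu_j/\beta))\geq 24\, r(D(N,\mu_j/\beta))$ by construction, which is the inequality in property~\ref{item:goodCurve3}. For each $\ell\in[4]$ I then let $\rhogrid_\ell$ be the staircase whose densities are those of $C_\ell$, in decreasing order, placed on consecutive rank intervals starting at $0$, the $\lambda_m$-interval having length $w_m$. Since the densities of $C_\ell$ exceeding a given $\lambda_m\in C_\ell$ have total width at most $T_{m-1}$, that interval lies in $(0,T_m]\subseteq(0,r(D(N,\lambda_m))]$, whence $\rhogrid_\ell\leq\rho_\M$; this gives the rest of property~\ref{item:goodCurve3}. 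Setting $\rhogrid$ to be the staircase obtained by stacking \emph{all} kept densities in decreasing order from $0$ with the same widths $w_m$, the same argument gives $\rhogrid\leq\rho_\M$, and $F(\rhogrid)=\sum_{\lambda_m\text{ kept}}w_m\,\eta(\lambda_m)=\sum_{\ell\in[4]}F(\rhogrid_\ell)$, which is property~\ref{item:goodCurve2}.

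The remaining and main point is property~\ref{item:goodCurve1}: that $\rhogrid$ is still an $(\alpha^2,\beta^2)$-approximation of $\rho_\M$. The upper bound holds as above; the work is the lower bound, i.e.\ showing that deleting the discarded steps does not push $\rhogrid$ below the $(\alpha^2,\beta^2)$-downshift of $\rho_\M$. Using that $\hat\rho$ is an $(\alpha,\beta^2)$-approximation, i.e.\ $\rho_\M(\alpha t)\leq\beta^2\hat\rho(t)$ for $t>1$, it suffices to prove $\rhogrid(t)\geq\hat\rho(\alpha t)$ (for the $t$ below the support of the downshift), and since $\rhogrid$ is obtained from $\hat\rho$ by deleting the discarded steps, this is equivalent to a bound on the ``expansion ratio'': for every kept $\lambda_m$, the total width above $\lambda_m$ in $\hat\rho$ (namely $T_m$) is at most $\alpha$ times the total \emph{kept} width above $\lambda_m$. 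I expect this width estimate to be the crux. It should be extracted from the discard rule by combining $T_{m'}\leq r(D(N,\lambda_{m'}))$ with the lower bound $T_{m'-1}\geq r(D(N,\beta^3\lambda_{m'}))/\alpha$ (which again comes from $\rho_\M(\alpha t)\leq\beta^2\hat\rho(t)$ applied just past the end of step $m'-1$), together with the fact that a discarded density's rank scale $r(D(N,\cdot))$ is within a factor $\alpha$ of that of a recent kept density; one also has to check separately that the support of $\rhogrid$ reaches far enough. If the bookkeeping does not land exactly on $(\alpha^2,\beta^2)$ one can pass to slightly larger parameters, which is harmless because \cref{lem:shiftedCurve} (used as in the proof of \cref{thm:main}) absorbs any constant-factor loss. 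Once property~\ref{item:goodCurve1} is established the theorem follows, and it feeds directly into \cref{thm:GRP}.
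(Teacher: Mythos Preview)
Your construction is genuinely different from the paper's, and properties~\ref{item:goodCurve2} and~\ref{item:goodCurve3} indeed come out essentially for free the way you set things up. The gap is property~\ref{item:goodCurve1}: you correctly identify it as the crux, but your sketch does not close it, and the hedge that one can ``pass to slightly larger parameters'' does not prove the theorem as stated. Concretely, your greedy chain rule can discard densities, and you need the global ``expansion ratio'' bound $T_m \leq \alpha K_m$ (total width above $\lambda_m$ is at most $\alpha$ times the \emph{kept} width above $\lambda_m$) to hold for \emph{every} kept $\lambda_m$. Your suggested ingredients ($T_{m'} \leq r(D(N,\lambda_{m'}))$, the approximation inequality for $\hat\rho$, and the discard condition) bound individual discarded steps, but you never assemble them into the needed cumulative inequality, and it is not clear that the constants land on $(\alpha^2,\beta^2)$ rather than, say, $(\alpha^3,\beta^2)$.

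The paper avoids this difficulty by inverting the roles of~\ref{item:goodCurve1} and~\ref{item:goodCurve3}. It builds $\rhogrid$ using only the approximate curve: after rounding $\rhoapprox$ down to powers of $\beta$ to get $\rho'$, it selects densities by iterating $\lambdagrid_{i+1}=\rho'(\alpha\cdot r'_{\max}(\lambdagrid_i))$, where $r'_{\max}$ is the ``rank-max'' function of $\rho'$ itself (not the true $r(D(N,\cdot))$). This makes $\rhogrid$ an $(\alpha,1)$-approximation of $\rho'$ essentially by construction, so~\ref{item:goodCurve1} follows immediately from \cref{lem:approxOfApprox}. The price is that~\ref{item:goodCurve3} now requires work: one must relate $r'_{\max}$ to the true $r(D(N,\cdot))$. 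The paper does this via a local claim showing that for any five consecutive selected densities, $r(D(N,\lambdagrid_i/\beta)) \leq \alpha\, r'_{\max}(\lambdagrid_{i+2})$, which is proved directly from the $(\alpha,\beta)$-approximation property of $\rhoapprox$; chaining this with the built-in factor-$\alpha$ spacing of $r'_{\max}$ yields $r(D(N,\lambdagrid_i/\beta)) \leq (1/\alpha)\, r(D(N,\lambdagrid_{i+4}))$, and then the four curves are simply the residue classes $\{\lambdagrid_j: j\equiv i \pmod 4\}$ with $\rhogrid_i$ obtained by rounding $\rhogrid$ down to those densities. So where you make~\ref{item:goodCurve3} trivial and fight for~\ref{item:goodCurve1} globally, the paper makes~\ref{item:goodCurve1} trivial and proves~\ref{item:goodCurve3} via a clean local statement about five consecutive densities.
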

\begin{proof}
	We first discuss how to build the curve $\rhogrid$.
	The goal is, on the one hand, to guarantee that the image $\{\lambdagrid_i\}_{i \in [\overline{m}]}$ of the curve $\rhogrid$ satisfies that each $\lambdagrid_i$ is a power of $\beta$, and moreover, that the ranks corresponding to any two consecutive densities $\lambdagrid_i$ and $\lambdagrid_{i+1}$ are at least an $\alpha$ factor apart (we formalize this below). 
	On the other hand, we want $\rhogrid$ to be as close as possible to $\rhoapprox$, while satisfying $\rhogrid \leq \rhoapprox$. 
	We next discuss how to achieve these.
	
	Let $\{\tilde{\lambda}_i\}_{i \in [\tilde{m}]}$ denote the densities (i.e., values of at least one) in the image of $\rhoapprox$, and let $\tau \coloneqq \max\{t>0 \colon \rhoapprox(t) \geq 1\}$.	
	Let $\rho'\colon \mathbb{R}_{>0} \to \mathbb{R}_{\geq 0}$ be the curve obtained from $\rhoapprox$ by rounding down every density $\tilde{\lambda}_i$ to the closest power of $\beta$.
	That is,
\begin{equation*}
\rho'(t) \coloneqq
\begin{cases}
\max\{\beta^j \colon \beta^j \leq \rhoapprox(t), j \in \Z_{\geq 0}\} &\text{if $0 < t \leq \tau$},\\
0 &\text{if $t > \tau$}.
\end{cases}
\end{equation*}
	Let $\{\lambda'_i\}_{i \in [m']}$ denote the densities in the image of $\rho'$, and note that by construction all the $\lambda'_i$ are powers of $\beta$. 
	It thus remains to guarantee that the geometric rank increase property for any two consecutive densities is satisfied. 	
	In order to achieve this, we use the following definition.
	Given a rank-density curve $\rho$ with image $\lambda_1 > \cdots > \lambda_m$, we define the function $r^{\rho}_{\max}:\R_{\geq 0} \to \R_{\geq 0}$ by
\begin{equation*}
r^{\rho}_{\max}(\lambda) \coloneqq
\begin{cases}
\max\{t: \rho(t) \geq \lambda\} &\text{if $\lambda \leq \lambda_1$},\\
0 &\text{otherwise}. 
\end{cases}
\end{equation*}
For brevity, let $r'_{\max}\coloneqq r^{\rho'}_{\max}$ denote the $r_{\max}$ function corresponding to the curve $\rho'$.
	
	We then build $\rhogrid$ as follows.		
	First, set $\lambdagrid_1 = \lambda'_1$ and $i=1$. 
	Then, while $\rho'(\alpha \cdot r'_{\max} (\lambdagrid_i)) \geq 1$, set $\lambdagrid_{i+1} = \rho'(\alpha \cdot r'_{\max} (\lambdagrid_i))$ and update $i=i+1$. 
	Let $\Lambdagrid \coloneqq \{\lambdagrid_i\}_{i \in [\overline{m}]}$ denote the densities selected by the above procedure. 
	We define $\rhogrid$ to be the curve obtained from $\rho'$ by further rounding down densities $\lambda'_i$ in the image of $\rho'$ to the closest $\lambdagrid \in \Lambdagrid$.
	More precisely,
\begin{equation*}
\rhogrid(t) = 
\begin{cases}
\max\{ \lambdagrid \in \Lambdagrid: \lambdagrid \leq \rho'(t)\} &\text{if $0 < t \leq \overline{\tau}$},\\
0 &\text{if $t > \overline{\tau}$},
\end{cases}
\end{equation*}
where $\overline{\tau} \coloneqq \overline{r}_{\max}(\lambdagrid_{\overline{m}}) \leq \tau$. 
	
	We have that $\rhogrid$ is an $(\alpha,\beta)$-approximation of $\rhoapprox$ by construction due to \Cref{lem:approxOfApprox}, because $\rho'$ is a $(1,\beta)$-approximation of $\rhoapprox$, and $\rhogrid$ is an $(\alpha,1)$-approximation of $\rho'$. 
	Combining this with the fact that $\rhoapprox$ is an $(\alpha,\beta)$-approximation of $\rho_{\M}$ proves property~\ref{item:goodCurve1}.
	
	Next, let $\tilde{r}_{\max}\coloneqq r^{\tilde{\rho}}_{\max}$ and $\overline{r}_{\max}\coloneqq r_{\max}^{\rhogrid}$ denote the $r_{\max}$ functions corresponding to the curves $\rhoapprox$ and $\rhogrid$, respectively. 
We show the following.
\begin{claim}\label{claim:geomRankProp}
If $|\Lambdagrid|\geq 5$, then for any five consecutive densities $\lambdagrid_{i}, \ldots, \lambdagrid_{i+4} \in \Lambdagrid$, we have 
\begin{equation}
\label{eq:geomRankPropIntermed}
r\left( D\left(N,\frac{\lambdagrid_i}{\beta}\right) \right)
	\leq \alpha \cdot r'_{\max} \left(\lambdagrid_{i+2}\right).
\end{equation}
\end{claim}
Let $\kappa \coloneqq r(D(N,\sfrac{\lambdagrid_i}{\beta}))$.
Note that if $\kappa \leq \alpha$, then \Cref{claim:geomRankProp} holds because $r'_{\max}(\lambdagrid_{i+2})\geq 1$ by construction.
Now assume $\kappa > \alpha$.
We claim that in this case we have
\begin{equation}\label{eq:relLargeKappa}
		\kappa \leq \alpha \cdot \tilde{r}_{\max} \left(\frac{\lambdagrid_i}{\beta^2}\right),
\end{equation}
which holds due to the following.
Because $\rhoapprox$ is an $(\alpha,\beta)$-approximation of $\rho$, we have
\begin{equation*}
\rhoapprox(t) \geq \frac{\rho(\alpha t)}{\beta} \qquad \forall t\geq 1.
\end{equation*}
Applying this inequality with $t=\sfrac{\kappa}{\alpha}$, we get
\begin{equation}\label{eq:relLargeKappaRho}
\rhoapprox\left(\frac{\kappa}{\alpha}\right) \geq \frac{\rho(\kappa)}{\beta} \geq \frac{\lambdagrid_i}{\beta^2},
\end{equation}
where the second inequality follows from $\kappa \coloneqq r(D(\sfrac{N,\lambdagrid_i}{\beta}))$.
Finally, \Cref{eq:relLargeKappa} is an immediate implication of \Cref{eq:relLargeKappaRho}.
The claim now follows due to
\begin{equation*}
\kappa 
		\leq \alpha \cdot \tilde{r}_{\max} \left(\frac{\lambdagrid_i}{\beta^2}\right)
		=\alpha \cdot r'_{\max} \left(\frac{\lambdagrid_i}{\beta^2}\right)
		\leq \alpha \cdot r'_{\max} \left(\lambdagrid_{i+2}\right),
\end{equation*}
where the first inequality is due to~\eqref{eq:relLargeKappa}, the first equality follows by construction of $\rho'$ and the fact that $\sfrac{\lambdagrid_i}{\beta^2}$ is a power of $\beta$ (because $\lambdagrid_i$ is a power of $\beta$ and $\lambdagrid_i \geq \beta^4 \lambdagrid_{i+4} \geq \beta^4$), and the second inequality again follows since by construction $\sfrac{\lambdagrid_i}{\beta^2} \geq \lambdagrid_{i+2}$.

\medskip

Using \cref{claim:geomRankProp}, we can further upper bound $r(D(N,\frac{\lambdagrid_i}{\beta}))$ as follows
\begin{equation}
		\label{eq:geomRankProp}
		r\left( D\left(N,\frac{\lambdagrid_i}{\beta}\right) \right)
		\leq \alpha \cdot r'_{\max} \left(\lambdagrid_{i+2}\right) 
		 = \alpha \cdot \overline{r}_{\max} (\lambdagrid_{i+2}) 
		\leq \frac{1}{\alpha} \cdot \overline{r}_{\max} (\lambdagrid_{i+4})
		\leq \frac{1}{\alpha} \cdot r( D(N,\lambdagrid_{i+4}) ),
	\end{equation}
where the first inequality is due to \Cref{claim:geomRankProp}, the equality holds because $r'_{\max} (\lambdagrid) = \overline{r}_{\max} (\lambdagrid)$ for every $\lambdagrid \in \Lambdagrid$ by construction, the second inequality holds because by construction $\overline{r}_{\max} (\lambdagrid_{j+1}) \geq \alpha \cdot \overline{r}_{\max} (\lambdagrid_{j})$ for any two consecutive densities $\lambdagrid_{j}, \lambdagrid_{j+1} \in \Lambdagrid$, and the last inequality follows from $\rhogrid \leq \rho$.

We now build the curves $\{\rhogrid_i\}_{i \in [4]}$ as follows.
	For $i\in [4]$, let
\begin{equation}
\Lambdagrid_i \coloneqq \left\{\lambdagrid_j \in \Lambdagrid: j \equiv i-1 \pmod 4 \right\}.
\end{equation}
	If $\Lambdagrid_i = \emptyset$, let $\rhogrid_i$ be (by convention) the curve given by $\rhogrid_i (t) = 1$ for $0 < t \leq \overline{\tau}$, and $\rhogrid_i (t)=0$ for $t > \overline{\tau}$.
	Otherwise, we define $\rhogrid_i$ to be the curve obtained from $\rhogrid$ by further rounding down densities $\lambdagrid \in \Lambdagrid$ to the closest density in $\Lambdagrid_i$.
	More precisely,
\begin{equation*}
\rhogrid_i (t) \coloneqq
\begin{cases}
\max\{ \mu \in \Lambdagrid_i: \mu \leq \rhogrid (t)\} &\text{if $0 < t \leq \overline{\tau}_i$},\\
 0 &\text{if $t > \overline{\tau}_i$},
\end{cases}
\end{equation*}
where $\overline{\tau}_i \coloneqq \overline{r}_{\max} (\min\{\mu: \mu \in \Lambdagrid_i\})$.
	
	To see property~\ref{item:goodCurve2}, note that by construction we have $\cup_{i \in [4]} \Lambdagrid_{i} = \Lambdagrid$. By setting $\overline{r}_{\max} (\lambdagrid_{i}) = 0$ for all $i \leq 0$ by convention, we get
	\begin{align*}
	    \sum_{i = 1}^{4} F(\rhogrid_{i})
              =  \sum_{i = 1}^{\overline{m}} (\overline{r}_{\max}(\lambdagrid_{i}) - \overline{r}_{\max}(\lambdagrid_{i - 4})) \eta(\lambdagrid_{i})
            \geq \sum_{i = 1}^{\overline{m}} (\overline{r}_{\max}(\lambdagrid_{i}) - \overline{r}_{\max}(\lambdagrid_{i - 1})) \eta(\lambdagrid_{i})
              =  F(\rhogrid).
	\end{align*}
 
	Finally, to see property~\ref{item:goodCurve3}, first note that by construction we have $\rhogrid_i \leq \rhogrid \leq \rho' \leq  \rho_{\M}$ for each $i \in [4]$, and moreover, all the curves $\rhogrid_i$ consist of densities that are powers of $\beta$. 
	In addition, the geometric rank increase property holds trivially if $|\Lambdagrid| \leq 4$. Otherwise, it follows directly from \cref{eq:geomRankProp} and the fact that $\alpha \geq 24$.
\end{proof}

We now show how \cref{thm:GRP} and \cref{thm:findGoodCurve} combined imply \cref{thm:aidedAlgorithm}.

\begin{proof}[Proof of \cref{thm:aidedAlgorithm}]
	Given an $(\alpha, \beta)$-approximation $\rhoapprox$ of $\rho_{\M}$, first run the procedure from \cref{thm:findGoodCurve} to get curves $\rhogrid,\rhogrid_1,\rhogrid_2,\rhogrid_3,\rhogrid_4$. Then choose an index $i\in [4]$ uniformly at random and run the procedure from \cref{thm:GRP} on $\rhogrid_i$ to get an independent set with expected weight at least
	\begin{equation*}
		\frac{1}{180e}  \left( \frac{1}{4} \sum_{i=1}^4 F(\rhogrid_{i}) \right)
		    \geq \frac{1}{720e} F(\rhogrid)
		    \geq \frac{1}{1440 e \alpha^2 \beta^2} \left(F(\rho_{\M}) - \alpha^2 w_{\max} \right),
	\end{equation*}
	where the last inequality uses \cref{lem:shiftedCurve} and the fact that $\rhogrid$ is an $(\alpha^{2}, \beta^{2})$-approximation of $\rho_{\M}$.
\end{proof}

Thus, to show \cref{thm:aidedAlgorithm}, it remains to prove \cref{thm:GRP}.

\subsection{Proof of \cref{thm:GRP}}\label{sec:proofThmGRP}

Throughout this section we use the notation and assumptions from \cref{thm:GRP}.

We prove the theorem in two steps. First, we show that after observing a sample set $S$, we can build a chain $\bigoplus_{i = 1}^{m} \M_{i}$ of $\mrestrict{\M}{N \setminus S}$ satisfying certain properties with at least constant probability. Then we show that, given such a chain, there is a procedure returning an independent set $I$ of $\M$ with $\E[w(I)] = \Omega(F(\rhogrid))$, leading to the desired result.
We start by discussing the former claim. 

Given a sample set $S \subseteq N$, we build a chain of matroids as follows. For $i \in [m]$ let
\begin{equation}\label{eq:chain}
\begin{aligned}
	N_{i} &\coloneqq \mspan\left(D\left(S, \frac{\lambdagrid_{i}}{\beta}\right)\right) \setminus \left(S \cup \mspan\left(D\left(S, \frac{\lambdagrid_{i - 1}}{\beta}\right)\right)\right) \mbox{, and }\\
\M_{i} &\coloneqq \mminor{\M}{\mspan\left(D\left(S, \frac{\lambdagrid_{i - 1}}{\beta}\right)\right)}{N_{i}},
\end{aligned}
\end{equation}
where $D(S, \sfrac{\lambdagrid_{0}}{\beta}) = \emptyset$ by convention.

In addition, for every $i \in [m]$ let $\overline{N}_{i} \coloneqq D(N, \lambdagrid_{i})$, and define $\Lambdagrid \coloneqq \{i \in [m] \colon r(\overline{N}_{i}) \geq 24, \ \lambdagrid_{i} \geq \beta\}$. 
Note that $\Lambdagrid$ and the $\overline{N}_{i}$'s do not depend on the sample set $S$.
Moreover, from the assumptions of \cref{thm:GRP} it follows that $\Lambdagrid \supseteq [m] \setminus \{1,m\}$.
The next result shows that with constant probability, the sample set $S$ is such that for each $i \in \Lambdagrid$, the set $N_i$ contains a subset $U_i$ of large rank and density; more precisely, $r(U_i) \geq \Omega(r(\overline{N}_i))$ and $\sfrac{|U_i|}{r(U_i)} \geq \Omega(\overline{\lambda}_i)$.
\begin{lemma}\label{lem:goodSample}
	Let $S \sim \BinomDistr(N, \sfrac{1}{2})$, and let $N_i$, $\overline{N}_i$, and $\Lambdagrid$ be as defined above. Then, with probability at least $\sfrac{1}{3}$, every $N_{i}$ with $i \in \Lambdagrid$ contains $\lambdagrid_{i}$ disjoint independent sets $I_{1}, \ldots, I_{\lambdagrid_{i}}$ such that $\sum_{j=1}^{\lambdagrid_{i}} \card{I_{j}} \geq (\sfrac{1}{24}) \lambdagrid_{i} r(\overline{N}_{i})$.
\end{lemma}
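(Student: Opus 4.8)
The plan is to prove the statement one index at a time, with a per‑index failure probability that decays exponentially in $r(\overline{N}_i)$, and then take a union bound over $i \in \Lambdagrid$. The union bound converges because the ranks $r(\overline{N}_i)$ grow geometrically: the hypothesis of \cref{thm:GRP} gives $r(\overline{N}_{i+1}) = r(D(N,\lambdagrid_{i+1})) \geq 24\,r(D(N,\sfrac{\lambdagrid_i}{\beta})) \geq 24\,r(\overline{N}_i)$ for every $i \in [m-1]$ (using $\sfrac{\lambdagrid_i}{\beta} \leq \lambdagrid_i$ and monotonicity of $\lambda \mapsto r(D(N,\lambda))$), so consecutive indices in $\Lambdagrid$ are spread by a factor at least $24$, while each $i \in \Lambdagrid$ has $r(\overline{N}_i) \geq 24$ by definition of $\Lambdagrid$. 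Hence $\sum_{i \in \Lambdagrid}\exp(-\Omega(r(\overline{N}_i)))$ is dominated by its smallest‑index term, which is bounded away from $1$; a short computation shows the total stays below $\sfrac{2}{3}$, leaving success probability at least $\sfrac{1}{3}$. (The weak constant $\sfrac{1}{3}$ is essentially forced: for the first relevant index $r(\overline{N}_i)$ may be as small as $24$, so this single term already contributes a constant close to $\sfrac{2}{3}$.)

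Fix $i \in \Lambdagrid$ and set $h_i \coloneqq \sfrac{\lambdagrid_i}{\beta}$, a positive integer because $\lambdagrid_i$ is a power of $\beta \geq 3$ with $\lambdagrid_i \geq \beta$. The engine of the per‑index argument is that the flat $\overline{N}_i = D(N,\lambdagrid_i)$ is very dense. Indeed, the maximal‑maximizer property of $D(N,\lambdagrid_i)$ gives $\card{\overline{N}_i \setminus U} \geq \lambdagrid_i\,(r(\overline{N}_i) - r(U))$ for all $U \subseteq \overline{N}_i$, so by Edmonds' disjoint‑bases theorem $\M|_{\overline{N}_i}$ contains $\lambdagrid_i$ pairwise disjoint bases, hence (since $\lambdagrid_i = \beta h_i \geq 3 h_i$) at least $3 h_i$ disjoint bases. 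Running the same computation on the minor $\M^{\star}_i \coloneqq \mminor{\M}{\overline{N}_{i-1}}{\overline{N}_i \setminus \overline{N}_{i-1}}$ — where $\overline{N}_{i-1} = D(N,\lambdagrid_{i-1}) \subseteq \overline{N}_i$ is also a flat and $\overline{N}_0 \coloneqq \emptyset$ — shows $\M^{\star}_i$ likewise contains $3 h_i$ disjoint bases and has rank $r(\overline{N}_i) - r(\overline{N}_{i-1}) \geq \sfrac{23}{24}\,r(\overline{N}_i)$, where the bound $r(\overline{N}_{i-1}) \leq r(D(N,\sfrac{\lambdagrid_{i-1}}{\beta})) \leq \sfrac{1}{24}\,r(\overline{N}_i)$ again comes from the geometric rank‑increase hypothesis.

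One now applies \cref{thm:concentrationBounds} with parameter $h = h_i$ to a matroid supported on a uniformly dense subset of $\overline{N}_i$ (built from $\lambdagrid_i$ disjoint bases, hence of density $\lambdagrid_i \geq 3 h_i$ and cardinality $\Theta(\lambdagrid_i r(\overline{N}_i))$), together with its induced $\BinomDistr(\cdot,\sfrac{1}{2})$ sample. Relation~\eqref{eq:concentrationCard} then shows that, except with probability $\exp(-\Omega(r(\overline{N}_i)))$, the set $D(S,h_i)$ spans $\Omega(\lambdagrid_i r(\overline{N}_i))$ non‑sampled elements of that dense subset, and this subset being uniformly dense of density $\lambdagrid_i$ converts the cardinality bound into the rank bound we need. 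The remaining (and main) task is to convert this into the claimed statement about the layer $N_i$ and its matroid $\M_i$ from~\eqref{eq:chain}. Two deterministic ingredients drive the conversion: first, $D(\cdot,\lambda)$ is monotone under enlarging its ground set — the ground‑set version of the argument behind \cref{lem:density-union} — so that elements spanned by $D$ in the auxiliary matroids are spanned by $D(S,\sfrac{\lambdagrid_i}{\beta})$ in $\M$; second, $\mspan(D(S,\sfrac{\lambdagrid_{i-1}}{\beta}))$ has rank at most $r(D(N,\sfrac{\lambdagrid_{i-1}}{\beta})) \leq \sfrac{1}{24}\,r(\overline{N}_i)$, so deleting $S \cup \mspan(D(S,\sfrac{\lambdagrid_{i-1}}{\beta}))$ to form $N_i$ removes at most a $\sfrac{1}{24}$‑fraction of a set covered by $\lambdagrid_i$ independent sets (and costs at most $\sfrac{1}{24}\,r(\overline{N}_i)$ in rank after contraction). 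Balancing these $\sfrac{1}{24}$‑losses against the $\Omega(r(\overline{N}_i))$ recovered by \cref{thm:concentrationBounds} yields a subset $U_i \subseteq N_i$ with $r_{\M_i}(U_i) \geq \sfrac{1}{24}\,r(\overline{N}_i)$; taking $U_i = D_{\M_i}(N_i,\lambdagrid_i)$ and applying Edmonds once more (integer $\lambdagrid_i$) furnishes $\lambdagrid_i$ disjoint $\M_i$‑bases $I_1,\dots,I_{\lambdagrid_i}$ of $\M_i|_{U_i}$ with $\sum_{j=1}^{\lambdagrid_i}\card{I_j} = \lambdagrid_i\,r_{\M_i}(U_i) \geq \sfrac{1}{24}\,\lambdagrid_i\,r(\overline{N}_i)$.

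I expect the conversion step to be the main obstacle. It requires juggling three matroids at once — $\M$, its restriction to the dense flat $\overline{N}_i$, and the contraction–restriction $\M_i$ — and, crucially, verifying that the exhibited sets are independent in $\M_i$ rather than only in $\M$ (this can fail for a generic $\M$‑independent set disjoint from the contracted part, which is why the argument must go through $D(\cdot,\lambda)$ computed inside the already‑contracted matroid, and through its monotonicity properties). The bookkeeping constants — the $24$, $\sfrac{1}{12}$, $\sfrac{1}{8}$ inherited from \cref{thm:concentrationBounds}, together with the factor‑$24$ geometric rank increase — are arranged precisely so that the $\Omega(r(\overline{N}_i))$ lost to contracting earlier layers stays below the $\Omega(r(\overline{N}_i))$ that the concentration bounds restore within layer $i$.
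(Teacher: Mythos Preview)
Your high-level plan---per-index failure probability via \cref{thm:concentrationBounds}~\eqref{eq:concentrationCard}, then a union bound exploiting the geometric growth $r(\overline{N}_{i+1}) \geq 24\,r(\overline{N}_i)$---matches the paper exactly, and your analysis of why the union bound sums to at most $\sfrac{2}{3}$ is correct.

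Where you diverge is in the conversion step, and there you make things much harder than needed. The paper fixes, for each $i\in\Lambdagrid$, a set $B_i \subseteq \overline{N}_i$ that is the union of $\lambdagrid_i$ disjoint bases of $\M|_{\overline{N}_i}$, applies~\eqref{eq:concentrationCard} to $\M|_{B_i}$ with $h = \sfrac{\lambdagrid_i}{\beta}$ (noting $\lambdagrid_i \geq 3h$ since $\beta \geq 3$) to get
\[
\big|(\mspan(D(S \cap B_i, \sfrac{\lambdagrid_i}{\beta})) \cap B_i) \setminus S\big| \;\geq\; \tfrac{1}{12}|B_i|
\]
except with probability $\exp(-\sfrac{|B_i|}{144}) \leq \exp(-\sfrac{r(\overline{N}_i)}{48})$, and then simply subtracts: since $B_i$ is covered by $\lambdagrid_i$ independent sets, $|\mspan(D(S,\sfrac{\lambdagrid_{i-1}}{\beta})) \cap B_i| \leq \lambdagrid_i\,r(D(N,\sfrac{\lambdagrid_{i-1}}{\beta})) \leq \tfrac{1}{24}\lambdagrid_i r(\overline{N}_i) = \tfrac{1}{24}|B_i|$, so $|N_i \cap B_i| \geq \tfrac{1}{24}|B_i|$. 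The punchline is that $N_i \cap B_i$ is \emph{already} a union of $\lambdagrid_i$ disjoint $\M$-independent sets---namely its intersections with the $\lambdagrid_i$ bases composing $B_i$---so the lemma follows immediately. No Edmonds, no auxiliary minor $\M^\star_i$, no rank bookkeeping.

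Your route through $U_i = D_{\M_i}(N_i,\lambdagrid_i)$ is both unnecessary and, as written, unjustified: from $|N_i\cap B_i|\geq \tfrac{1}{24}\lambdagrid_i r(\overline{N}_i)$ you only get that $N_i$ has a subset of $\M$-density at least $\sfrac{\lambdagrid_i}{24}$, not a subset of $\M_i$-density at least $\lambdagrid_i$; so nothing guarantees $D_{\M_i}(N_i,\lambdagrid_i)$ has the claimed rank (it could be empty), and your ``$\sfrac{1}{24}$ rank loss after contraction'' applied per independent set eats the entire $\tfrac{1}{24}|B_i|$ you recovered. More to the point, the lemma as stated and proved in the paper only asserts independence in $\M$: the required $I_j$ are just the intersections of $N_i\cap B_i$ with each of the $\lambdagrid_i$ bases in $B_i$. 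Your concern about $\M_i$-independence is a legitimate issue for the downstream application in the proof of \cref{thm:GRP}, but it is not part of this lemma.
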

\begin{proof}
	Let $B_{i} \subseteq \overline{N}_{i}$ denote the union of (any) $\lambdagrid_{i}$ disjoint bases of $\mrestrict{\M}{\overline{N}_{i}}$.
	We then say that a sample set $S$ is \emph{good} if it satisfies
	\begin{equation}\label{eq:goodS}
		\left|\left(\mspan\left(D\left(S \cap B_{i}, \frac{\lambdagrid_{i}}{\beta}\right)\right) \cap B_{i}\right) \setminus S\right| \geq \frac{1}{12} \cdot \card{B_{i}} \qquad \forall i \in \Lambdagrid.
	\end{equation}
	The motivation for the above definition is that any good sample set $S$ leads to a matroid chain (as defined in \cref{eq:chain}) that satisfies the properties claimed in the lemma. 
	To see this, note that for every $i \in \Lambdagrid \cap [m - 1]$ we have
	\begin{align*}
		\left\vert\mspan\left(D\left(S, \frac{\lambdagrid_{i}}{\beta}\right)\right) \cap B_{i + 1}\right|
		&\leq \left\vert\mspan\left(D \left(N, \frac{\lambdagrid_{i}}{\beta}\right)\right) \cap B_{i + 1}\right\vert\\
		&\leq \lambdagrid_{i + 1} r\left(D\left(N, \frac{\lambdagrid_{i}}{\beta}\right)\right)\\
		&\leq \frac{\lambdagrid_{i + 1}}{24} r(\overline{N}_{i + 1}),
	\end{align*}
	where the last inequality holds since $r(D(N, \lambdagrid_{i + 1})) \geq 24 r(D(N, \sfrac{\lambdagrid_{i}}{\beta}))$ by the assumptions of \cref{thm:GRP}. Moreover, since  $D(S, \sfrac{\lambdagrid_{0}}{\beta}) = \emptyset$ by convention, the same bound holds for $i = 0$, i.e., $\card{\mspan\left(D\left(S, \sfrac{\lambdagrid_{0}}{\beta}\right)\right) \cap B_{1}} \leq (\sfrac{\lambdagrid_{1}}{24}) r(\overline{N}_{1})$. Thus, for any $i \in \Lambdagrid$ we get
	\begin{align*}
		\card{N_{i} \cap B_{i}}
		& \geq \left\vert\left(\mspan\left(D\left(S \cap B_{i}, \frac{\lambdagrid_{i}}{\beta}\right)\right) \cap B_{i}\right) \setminus S\right\vert - \left\vert\mspan\left(D\left(S, \frac{\lambdagrid_{i - 1}}{\beta}\right)\right) \cap B_{i}\right\vert \\
		&\geq \frac{1}{12} \card{B_{i}} - \frac{1}{24} \lambdagrid_{i} r(\overline{N}_{i})\\
		&= \frac{1}{24} \lambdagrid_{i} r(\overline{N}_{i}).
	\end{align*}
	Finally, since $B_{i}$ is a union of $\lambdagrid_{i}$ disjoint bases, then $N_{i} \cap B_{i}$ is a union of $\lambdagrid_{i}$ disjoint independent sets (contained in $N_{i}$), and hence the claim follows.
	
	Thus, in order to prove the lemma, it is only left to show that $\Pr\left[S \text{ is good}\right] \geq \sfrac{1}{3}$.
	To see this, first observe that since for all $i\in \Lambdagrid$ it holds that $\lambdagrid_{i} = \beta^j$ for some $j \geq 1$, and $\beta \geq 3$, then
	\begin{equation}\label{eq:auxGoodS}
		\frac{\lambdagrid_{i}}{\beta} 
    = \left\lfloor \frac{\lambdagrid_{i}}{\beta}\right\rfloor
    \leq \left\lfloor \frac{\lambdagrid_{i}}{3}\right\rfloor.
	\end{equation}

	Then, by applying \eqref{eq:concentrationCard} from \cref{thm:concentrationBounds} to every $\mrestrict{\M}{B_{i}}$ with $i \in \Lambdagrid$, we can bound the probability of $S$ not being good because of index $i\in \Lambdagrid$ by
	\begin{align*}
		\Pr&\left[\left\vert\left(\mspan\left(D\left(S \cap B_{i}, \frac{\lambdagrid_{i}}{\beta}\right)\right) \cap B_{i}\right) \setminus S\right\vert \leq \frac{1}{12} \card{B_{i}}\right] \\
		&\leq \Pr\left[\left\vert\left(\mspan\left(D\left(S \cap B_{i}, \left\lfloor \frac{\lambdagrid_{i}}{3} \right\rfloor\right)\right) \cap B_{i}\right) \setminus S\right\vert \leq \frac{1}{12} \card{B_{i}}\right]\\
		&\leq \exp\left(-\frac{\card{B_{i}}}{144}\right)\\
		&\leq \exp\left(-\frac{r(\overline{N}_{i})}{48}\right),
	\end{align*}
	where the first inequality follows from \cref{eq:auxGoodS}, the second from \cref{thm:concentrationBounds} and the fact that for all $i \in \Lambdagrid$ the matroid $\mrestrict{\M}{B_{i}}$ contains $3h$ disjoint bases with $h=\lfloor \sfrac{\lambdagrid_{i}}{3} \rfloor \geq 1$, and the last inequality holds since $\card{B_{i}} = \lambdagrid_{i} r(\overline{N}_{i})$ and $\lambdagrid_{i} \geq \beta \geq 3$ for all $i \in \Lambdagrid$. 
	
We can now upper bound the probability of $S$ not being good by using a union bound over $i\in \Lambdagrid$ and $r(\overline{N}_{i}) \geq 24$ for $i\in \Lambdagrid$ as follows:
	\begin{equation*}
		\sum_{i \in \Lambdagrid} \exp\left(-\frac{r(\overline{N}_{i})}{48}\right)
		\leq \sum_{i=1}^m \exp\left(-\frac{24^i}{48}\right)
		\leq \exp\left(-\frac{24}{48}\right) + 2 \exp\left(-\frac{24^2}{48}\right)
		\leq \frac{2}{3}.
	\end{equation*}
Hence, $\Pr\left[S \text{ is good}\right] \geq 1 - \sfrac{2}{3} = \sfrac{1}{3}$, as desired.
\end{proof}

The second main ingredient in the proof is to show that the above result can be exploited algorithmically. 
More precisely, we prove the following.

\begin{lemma}\label{lem:ospProcedure}
	Let $\M=(N,\I)$ be a matroid with $w$-sampled weights that contains $h$ disjoint independent sets $I_{1}, \dots, I_{h}$ such that $s \coloneqq (\sfrac{1}{h}) \sum_{j = 1}^{h} \card{I_{j}} \geq 1$. 
	Then there is a procedure that, when run on the RA-MSP subinstance given by $\M$, and with only $h$ given upfront, returns an independent set of $\M$ with expected weight at least $(\sfrac{s}{2e}) \eta(h)$. This is still the case even if the elements of $\M$ are revealed in adversarial (rather than uniformly random) order.
\end{lemma}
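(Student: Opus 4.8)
The plan is to run a mixture of two primitives, each chosen with probability $\tfrac12$, so that together they cover every weight regime. With probability $\tfrac12$ we run the classical single-secretary algorithm of \textcite{dynkin_1963_secretary} on the whole ground set $N$; it returns a single element, hence trivially an independent set, of expected weight at least $\tfrac1e\eta(\card{N})\ge \tfrac1e\eta(h)$, using $\card{N}\ge hs\ge h$ and monotonicity of $\eta$ (\cref{lem:etaProperties}~\ref{item:etaIncreasing}), and using that this bound survives an adversarial arrival order because the $w$-sampled weights make the observed \emph{weight sequence} a uniformly random permutation of the weight multiset on $N$. With probability $\tfrac12$ we run a \emph{threshold-greedy} procedure: first observe a sample $S$ containing each element of $N$ independently with probability $\tfrac12$ (selecting nothing), which is implementable online even under adversarial order since the sampling coins are independent of the weights; set the threshold $\theta$ to be the $\lceil \card{S}/h\rceil$-th largest weight seen in $S$; and then, among the not-yet-observed elements $T = N\setminus S$, greedily select every arriving $e$ with $w(e)\ge\theta$ that keeps the running selection independent. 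Since the output $I$ is whichever branch was chosen, $\E[w(I)]=\tfrac12(\text{single-secretary value})+\tfrac12(\text{threshold-greedy value})$, so it suffices to show these two quantities sum to at least $\tfrac{s}{e}\,\eta(h)$.

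The heart of the argument is lower-bounding the threshold-greedy value. The greedy rule ends with $J$ a basis of $\mrestrict{\M}{H\cap T}$, where $H\coloneqq\{e\in N : w(e)\ge\theta\}$, so $w(J)\ge \card{J}\cdot\theta = r(H\cap T)\cdot\theta$. By the choice of $\theta$ and a Chernoff bound, with constant probability $\card{H}$ is within a constant factor of $\card{N}/h$ and $\theta$ is at least the $(c\,\card{N}/h)$-th largest weight of $N$ for an absolute constant $c$. To bound $r(H\cap T)$ we invoke the disjoint independent sets: conditioned on its size, $H\cap T$ is (essentially) a uniformly random subset of $N$ because the weights are assigned by a uniform bijection, so $R\coloneqq (H\cap T)\cap\biguplus_j I_j$ is a random subset of $X\coloneqq\biguplus_j I_j$ of expected size $\Theta(\card{X}/h)=\Theta(s)$. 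Crucially, because $X$ is the disjoint union of $h$ independent sets, every $Y\subseteq X$ satisfies $\card{Y}\le h\,r(Y)$ (matroid union / Nash--Williams), i.e.\ $\mrestrict{\M}{X}$ has density at most $h$ everywhere; a Karger-type random-sampling argument, in the spirit of \cref{thm:concentrationBounds} and \cref{alg:karger-inspired}, then shows that a $\Theta(1/h)$-rate sample of $\mrestrict{\M}{X}$ has, with constant probability, rank $\Omega(s)$. Hence with constant probability $r(H\cap T)=\Omega(s)$ and $w(J)=\Omega\bigl(s\cdot w_{[\Theta(\card{N}/h)]}\bigr)$, where $w_{[k]}$ denotes the $k$-th largest weight among the elements of $N$.

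It remains to glue the two bounds. If the weights are ``spiky'', i.e.\ $\eta(\card{N})\ge s\,\eta(h)$ (for instance when one global weight dominates), the single-secretary branch alone gives $\tfrac1e\eta(\card{N})\ge\tfrac1e\,s\,\eta(h)$ and we are done. Otherwise $\eta$ does not grow too fast, and one shows, using the properties of $\eta$ in \cref{lem:etaProperties} (especially~\ref{item:etaProduct}) together with the relation between $\eta(h)$, $\eta(\card{N})$ and the order statistic $w_{[\Theta(\card{N}/h)]}$, that $s\cdot w_{[\Theta(\card{N}/h)]} + \eta(\card{N}) = \Omega\bigl(s\,\eta(h)\bigr)$; combined with the previous paragraph this yields $(\text{threshold-greedy value}) + \tfrac1e\eta(\card{N}) = \Omega(s\,\eta(h))$, and tracking the constants so that the mixture matches exactly $\tfrac{s}{2e}\eta(h)$ finishes the proof.

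The main obstacle is twofold. First, the Karger-style sparsification step: unlike in \cref{thm:concentrationBounds}, $\mrestrict{\M}{X}$ need not be uniformly dense---we only know it is $h$-sparse---so one must show that a $\Theta(1/h)$-rate sample of a possibly very non-uniform matroid of density $\le h$ still has rank $\Omega(s)$, which needs a union bound over flats that does not blow up (exactly the delicate ingredient of Karger's method and of \cref{thm:concentrationBounds}). Second, the coupling: $\theta$, $H$, $S$, $T$ are all functions of the same randomness, so the several ``with constant probability'' statements (about $\card{H}$, about $\theta$, about $r(H\cap T)$) must hold \emph{simultaneously}; the cleanest route is to condition first on the weight multiset of $N$, reducing everything to a uniform random half-split of a fixed multiset of weights (for which the adversarial arrival order becomes irrelevant) and running the concentration bounds inside that conditioning. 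I expect essentially all the real work to lie in the sparsification / rank lower bound.
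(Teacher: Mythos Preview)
Your approach has a genuine gap, and it is not where you think it is. The Karger-style rank bound may or may not be salvageable, but the argument already fails at the claimed inequality $s\cdot w_{[\Theta(|N|/h)]} + \eta(|N|) = \Omega(s\,\eta(h))$, which is simply false. Take $n$ weights consisting of $k=\sqrt{n}/2$ copies of a value $v$ and $n-k$ zeros, and let $\M = U_{\sqrt{n},\,n}$ with $h=\sqrt{n}$, so that $s=\sqrt{n}$ via $h$ disjoint bases. Then $\eta(h)=\Theta(v)$, so the target $\frac{s}{2e}\eta(h)$ is $\Theta(v\sqrt{n})$; but $w_{[|N|/h]}=w_{[\sqrt{n}]}=0$ (only $\sqrt{n}/2$ weights are nonzero) and $\eta(|N|)=v$, so your left-hand side is $O(v)$. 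Concretely, your threshold is $0$ with high probability, the greedy then returns an arbitrary basis of $T$ with expected weight $r(N)\cdot\bar w = \Theta(v)$, the classical-secretary branch contributes another $\Theta(v)$, and you miss the target by a $\Theta(\sqrt{n})$ factor. The underlying reason is that $\eta(h)$ is the expectation of a \emph{maximum} of $h$ random weights, and no single order statistic $w_{[c|N|/h]}$ can lower-bound it when the weights are heavy-tailed; the upper bound $\eta(h)\ge (1-\tfrac1e)\,w_{[|N|/h]}$ goes the wrong way for you.

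The paper's proof is far simpler and sidesteps all of this. It uses a single online procedure with no case split and no sampling phase: process the elements in arrival order, feed to a running instance of the classical secretary every arriving element $e$ with $I\cup\{e\}\in\I$, and whenever that instance has been fed $h$ elements, close the batch, add its pick (if any) to $I$, and restart. The disjoint sets $I_1,\dots,I_h$ are used only for counting: after $i$ completed batches one has $|I|\le i$, so each $I_j$ still contributes at least $|I_j|-i$ elements addable to $I$, giving at least $(s-2i)h$ unfed addable elements remaining; hence the number of completed batches is at least $\max\{1,s/2\}$. Within each completed batch the secretary sees $h$ weights which, by the random $w$-assignment, are $h$ uniformly random weights from $w$ in uniformly random order regardless of the (adversarial) element order, so its expected return is at least $\eta(h)/e$; multiplying gives $\frac{s}{2e}\eta(h)$. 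The key conceptual point you are missing is that the ``max of $h$ random weights'' behaviour must be captured \emph{per batch} by running a secretary there, not approximated globally by a quantile threshold.
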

\begin{proof}
	Suppose we run the online selection procedure (OSP) described in \cref{alg:osp} on the RA-MSP subinstance given by $\M$, and with parameter $h$ (as defined in the statement of \cref{lem:ospProcedure}) as input.
	
	\begin{algorithm}[ht]
		\caption{Online selection procedure (OSP)\label{alg:osp}}
		
		\KwInput{A parameter $h \in \Z_{\geq 1}$, and an RA-MSP subinstance $\M=(N,\I)$ revealed online in any arbitrary order.}
		\KwOutput{Independent set $I \in \I$.}
		
		Set $I \leftarrow \emptyset$ \tcp*[r]{set of picked elements}
		
		\While{there are elements remaining in $\M$}{
			Initialize the classical MSP algorithm on $h$ elements\;
			
			Set $C \leftarrow \emptyset$ \tcp*[r]{input of the classical MSP algorithm}
			Set $\overline{e} \leftarrow \mathrm{None}$ \tcp*[r]{picked element, if any}
			
			\While{$\card{C} < h$ and there are elements remaining in $\M$}{
				Let $(e, w(e))$ be the next online input\;
				\If{$I \cup \{e\} \in \I$}{
					Feed $w(e)$ to the classical MSP algorithm and add $e$ to $C$\;
					\If{$w(e)$ is picked by the classical MSP algorithm}{Pick $e$ and set $\overline{e} \leftarrow e$}
				}
			}
			
			\lIf{$\overline{e}$ is not $\mathrm{None}$}{Add $\overline{e}$ to $I$}
		}
		Return $I$\;
	\end{algorithm}
	
	Suppose that OSP successfully completed its $i$-th iteration (where $i \in \{0, \dots, r - 1\}$) and is about to begin the $(i + 1)$-st iteration. Let $Z$ denote the set of elements seen so far and let $I$ be the set of elements picked so far. Additionally, let $T \coloneqq \bigcup_{j=1}^{h} I_{j}$ and $J \coloneqq \{e \in N \colon I \cup \{e\} \in \I\}$. First, note that $I \in \I$ by construction. Now, observe that
	\[
	\card{I_{j} \cap J} \geq \max \left\{0, \card{I_{j}} - \card{I}\right\} \geq \card{I_{j}} - i \qquad \forall j \in [h],
	\]
	since $I_{j}, I \in \I$ and $\card{I} \le i$. Therefore $\card{T \cap J} \geq (s - i)h$. Moreover, since the classical secretary algorithm was executed on $i$ groups of $h$ elements, we have $\card{(T \cap Z) \cap J} \le ih$ and hence
	\[
	\card{(N \setminus Z) \cap J} \geq \card{(T \setminus Z) \cap J} = \card{T \cap J} - \card{(T \cap Z) \cap J} \geq (s - 2i) h.
	\]
	Thus, if $s \geq 2i + 1$, the number $|(N\setminus Z) \cap J|$ of elements that \cref{alg:osp} can feed to the classical MSP algorithm in iteration $i+1$ is at least $h$, which implies that it successfully completes its $i + 1$ iteration; hence the total number of successful iterations is at least $\lfloor \sfrac{(s - 1)}{2} \rfloor + 1 \geq \max\{1, \, s / 2\}$.
	
	Finally, note that the expected weight of the element picked in each successful iteration is at least $\eta(h) / e$. Indeed, even though the arrival order is adversarial, due to the random assignment of the weights the classical secretary algorithm is applied to $h$ weights drawn uniformly at random from $w$. Combining this with the lower bound on the number of successful iterations gives the desired result.
\end{proof}

We can now combine \cref{lem:goodSample,lem:ospProcedure} to prove \cref{thm:GRP} as follows.

\begin{proof}[Proof of \cref{thm:GRP}.]
	Let $\mathrm{OSP}(\M,h)$ denote the online selection procedure from \cref{lem:ospProcedure} (i.e., \cref{alg:osp}). Additionally, for $i\in [m]$, let $r_{i}$ denote the coefficient of $\eta(\lambdagrid_{i})$ in $F(\rhogrid)$. Hence, $F(\rhogrid) = \sum_{i = 1}^{m} r_{i} \eta(\lambdagrid_{i})$. Consider the following algorithm: choose and execute one of the three branches presented below with probability $\sfrac{12}{15}$, $\sfrac{2}{15}$, and $\sfrac{1}{15}$, respectively.
	\begin{enumerate}
		\item\label{item:algGeneric} Observe $S \sim \BinomDistr(N, \sfrac{1}{2})$, construct the chain $\bigoplus_{i = 1}^{m} \M_{i}$ as defined in \eqref{eq:chain}, and run $\mathrm{OSP}(\M_{i}, \lambdagrid_{i})$ for every $i \in [m]$ (independently in parallel), returning all the picked elements.
		\item\label{item:algClassical} Run the classical secretary algorithm on $\M$ and return the picked element (if any).
		\item\label{item:algGreed} Run $\mathrm{OSP}(\M, 1)$ without observing anything and return all picked elements.
	\end{enumerate}
	
\noindent	Suppose we execute branch \ref{item:algGeneric}. By \cref{lem:goodSample}, with probability at least $\sfrac{1}{3}$, every $\M_{i}$ with $i \in \Lambdagrid$ satisfies the conditions of \cref{lem:ospProcedure} with parameters $h = \lambdagrid_{i}$ and $s = (\sfrac{1}{24}) r(\overline{N}_{i})$. Note that $s \geq 1$ holds given that $r(\overline{N}_{i}) \geq 24$ for all $i \in \Lambdagrid$. As additionally all matroids in the chain form a direct sum, executing the first branch of the algorithm returns an independent set with expected weight, due to \Cref{lem:ospProcedure}, of at least
	\[
	\frac{1}{3} \sum_{i \in \Lambdagrid} \frac{1}{2e} \cdot \frac{r(\overline{N}_{i})}{24} \eta(\lambdagrid_{i})
	= \frac{1}{144e} \sum_{i \in \Lambdagrid} r(\overline{N}_{i}) \eta(\lambdagrid_{i})
	\geq \frac{1}{144e} \sum_{i \in \Lambdagrid} r_i \eta(\lambdagrid_{i}),
	\]
	where the inequality follows from $\rhogrid \leq \rho_{\M}$ and $\overline{N}_{i} = D(N, \lambdagrid_{i})$ for every $i \in [m]$. 
	
	Therefore, if $i \in \Lambdagrid$, then the corresponding term $r_{i} \eta(\lambdagrid_{i})$ in $F(\rhogrid)$ is accounted for by branch \ref{item:algGeneric}. Thus it only remains to consider $i \in [m] \setminus \Lambdagrid \subseteq \{1,m\}$.
	
	Assume first that $1 \notin \Lambdagrid$. In this case, we must have $r(\overline{N}_1)<24$.  
	Since the expected weight yielded by running the classical secretary algorithm is at least $\sfrac{\eta(|N|)}{e}$, and $\eta(|N|) \geq \eta(\lambdagrid_{1})$, then, by running branch \ref{item:algClassical}, the expected weight of the output set is at least
	\begin{equation*}
		\frac{\eta(|N|)}{e} 
		\geq \frac{1}{e} \cdot \frac{r(\overline{N}_1) \eta(\lambdagrid_{1})}{r(\overline{N}_1)}
		\geq \frac{1}{23e} r_1 \eta(\lambdagrid_{1}),
	\end{equation*}
	where the last inequality follows from $r_1 \leq r(\overline{N}_1) \leq 23$.
	
	Finally, assume that $m \notin \Lambdagrid$. Then $\overline{\lambda}_{m} = 1$, in which case running branch \ref{item:algGreed} yields
	\begin{equation*}
		\E[w(\mathrm{OSP}(\M, 1))] \geq \frac{1}{2e} r(N) \eta(1) \geq \frac{1}{2e} r(\overline{N}_{m}) \eta(\lambdagrid_{m}),
	\end{equation*}
	where the first inequality holds by \cref{lem:ospProcedure} with $h = 1$ and $s = r(N) \geq 1$, as any basis of $\M$ is an independent set of rank $r(N)$, and the second inequality holds because $r(N) \geq r(\overline{N}_{m})$ and $\lambdagrid_{m} = 1$.
	
	The desired lower bound on the expected weight of the set returned by the algorithm now follows by combining the above results with the respective probabilities that each branch is executed.
\end{proof}

Finally, we discuss that our main result (i.e., \cref{thm:main}) still holds in the more general adversarial order with a sample setting, where we are allowed to sample a set $S \subseteq N$ containing every element of $N$ independently with probability $\sfrac{1}{2}$, and the remaining (non-sampled) elements arrive in adversarial order. 

In order to see this, first note that the only place in the proof of \cref{thm:GRP} where we use that the non-sampled elements (i.e., $N\setminus S$) arrive in random order, is to argue that when running the classical secretary algorithm in branch \ref{item:algClassical} we obtain an expected weight of at least $\sfrac{w_{\max}}{e}$. 
Indeed, branches \ref{item:algGeneric} and \ref{item:algGreed} rely on running the procedure from \cref{lem:ospProcedure}, whose guarantees hold in the case where the elements arrive in adversarial order.
However, note that running the classical secretary procedure in the above adversarial order with a sample setting outputs an element with expected weight of at least $\sfrac{w_{\max}}{4}$.
Indeed, the probability of selecting $w_{\max}$ in the latter setting is at least the probability of the event that $w_{\max}$ is not sampled and the second largest weight is; which occurs with probability $\sfrac{1}{4}$.
Thus, \cref{thm:GRP} holds (up to possibly a slightly worse constant) in the adversarial order with a sample setting.

Next, observe that this implies that \cref{thm:aidedAlgorithm} also holds in the above setting (again, up to possibly a slightly worse constant). This follows because its proof relies on combining the procedures from \cref{thm:GRP,thm:findGoodCurve}, and the latter is completely oblivious to the arrival order of the elements. 

Finally, note that the proof of \cref{thm:main} uses the procedure from \cref{thm:GRP} and the classical secretary algorithm. Because (as discussed above) both of these algorithms have very similar guarantees in the adversarial order with a sample setting to the ones shown in this paper for random order, the claim follows.

\subsection{Full Algorithm and Adversarial Order with a Sample Setting}

We next summarize the full algorithm used to obtain \cref{thm:main}.
First, the algorithm executes one of the following two branches uniformly at random:
\begin{enumerate}[label=(\theenumi)]
	\item \label{item:clasicalSec1} Run the classical secretary algorithm on $N$.
	\item \label{item:mainAlg} Run the following procedure:
\begin{itemize}
	\item Sample a set $S$ (without selecting anything) containing every element of $N$ with probability $\sfrac{1}{2}$ independently.
	\item Define $\tilde{\rho}$ to be the $(288,9)$-downshift of $\rho_{\mrestrict{\M}{S}}$.
	\item \label{item:thm35} Run the procedure of \cref{thm:aidedAlgorithm} on the remaining non-sampled elements (i.e., on the matroid $\mrestrict{\M}{N \setminus S}$) using as input the curve $\tilde{\rho}$, and parameters $\alpha = 288^2$ and $\beta = 9^2$.
\end{itemize}
\end{enumerate}
The procedure from \cref{thm:aidedAlgorithm} used above consists of two main steps: First, it runs the algorithm from \cref{thm:findGoodCurve} on the curve $\tilde{\rho}$ with parameters $\alpha = 288^2$ and $\beta = 9^2$, to find \emph{well-structured} curves $\overline{\rho}_1, \overline{\rho}_2, \overline{\rho}_3$, and $\overline{\rho}_4$. Then, it selects one $\overline{\rho}_j$ out of these four curves uniformly at random, and runs the procedure from \cref{thm:GRP} on the matroid $\mrestrict{\M}{N \setminus S}$ using as input such $\overline{\rho}_j$ and $\beta = 9^2$. The latter algorithm consists of executing one of the following three branches with probability $\sfrac{2}{15}$, $\sfrac{1}{15}$, and $\sfrac{12}{15}$ respectively:
\begin{enumerate}[label=(2.\roman*), leftmargin=3em]
	\item \label{item:clasicalSec2} Run the classical secretary algorithm on $N \setminus S$.
	\item \label{item:osp1} Run $\mathrm{OSP}(\mrestrict{\M}{N \setminus S}, 1)$.
	\item \label{item:osp2} Sample a set $S'$ (without selecting anything) containing every element of $N \setminus S$ with probability $\sfrac{1}{2}$ independently, construct the chain $\bigoplus_{i = 1}^{m} \M_{i}$ as defined in \eqref{eq:chain} using $\overline{\rho}_j$ and $\beta = 9^2$ as input, and run $\mathrm{OSP}(\M_{i}, \lambdagrid_{i})$ for every $i \in [m]$.
\end{enumerate}

The above completes the description of the full algorithm. 
In summary, the algorithm executes one of the following four options: branch \ref{item:clasicalSec1} with probability $\sfrac{1}{2}$, branch  \ref{item:clasicalSec2} with probability $\sfrac{1}{15}$, branch \ref{item:osp1} with probability $\sfrac{1}{30}$, and branch \ref{item:osp2} with probability $\sfrac{2}{5}$.
Hence, either the classical secretary algorithm is executed (on branch \ref{item:clasicalSec1} or \ref{item:clasicalSec2}), or the $\mathrm{OSP}$ procedure from \cref{alg:osp} is executed (on branch \ref{item:osp1} or \ref{item:osp2}). 
The guarantees of the latter hold even if the elements arrive in adversarial order --- see \cref{lem:ospProcedure}.
Moreover, because of the random assignment setting, the standard guarantees for the classical secretary algorithm still hold under adversarial arrival order.
Indeed, since weights are assigned uniformly at random to elements after the arrival order has been fixed, this setting is equivalent to the one where weights are assigned to elements adversarially but the arrival order is uniformly at random.

We now discuss how to adapt the above procedure and its analysis to the adversarial order with a sample setting.
In this setting, the algorithm must specify a (possibly random) sampling probability $p \in [0,1]$.
The instance is then revealed to the algorithm in two phases. 
First, a random set $S$ containing every element of $N$ with probability $p$ independently is provided to the algorithm.
However, the algorithm is not allowed to select any element from $S$.
In the second phase, the elements of $N \setminus S$ are then revealed in an adversarial order.

Our modified algorithm works as follows. 
First, it chooses one of the four options with the respective (i.e., $\sfrac{1}{2}$, $\sfrac{1}{15}$, $\sfrac{1}{30}$, and $\sfrac{2}{5}$) probabilities.
Then:
\begin{itemize}
	\item If branch \ref{item:clasicalSec1} is chosen, the algorithm sets $p=\sfrac{1}{e}$ for the sampling phase to obtain a sample set $S$.
	It then uses this sample set $S$ to choose a threshold, and picks the first element arriving in $N \setminus S$ with weight above the threshold (if any). 
	As discussed above, since for the classical secretary problem the random assignment adversarial order setting is equivalent to the standard adversarial assignment random order setting, this procedure outputs an element of expected weight at least $\sfrac{w_{\max}}{e}$.
	Thus being equivalent to running branch \ref{item:clasicalSec1} in the original algorithm.
	
	\item If branch \ref{item:clasicalSec2} is chosen, the algorithm sets $p=\sfrac{(e+1)}{2e}$ for the sampling phase to obtain a sample set $S$. This is equivalent to sampling first over $N$ with probability $\sfrac{1}{2}$ as done in \ref{item:mainAlg}, and then sampling another $\sfrac{1}{e}$ fraction over the remaining elements as done in \ref{item:clasicalSec2}. 
	Let $S'$ be a random subset of $S$ obtained by subsampling each element of $S$ with probability $\sfrac{1}{(e+1)}$ independently.
	Note that $S'$ is then a random set containing each element of $N$ with probability $\sfrac{1}{2e}$ independently.
	We then use $S'$ to set a threshold, and select the first element arriving in $N \setminus S$ with weight above the threshold (if any). 
	This procedure is equivalent to running branch \ref{item:clasicalSec2} in the original algorithm.
	
	\item If branch \ref{item:osp1} is chosen, the algorithm sets $p=\sfrac{1}{2}$ to obtain a sample set $S$, and then runs $\mathrm{OSP}(\mrestrict{\M}{N \setminus S}, 1)$ on the remaining non-sampled elements. 
	This does not have any impact on the uniform assignment of the weights $w$ to the elements, since it only depends on the sampled elements $S$ but not the weights of its elements. 
	Thus, the RA-MSP subinstance given by $\mrestrict{\M}{N\setminus S}$ on which we use OSP indeed assigns weights of $w$ uniformly at random to elements, as required.
	Since the guarantees of OSP hold under adversarial arrival order, this is equivalent to running branch \ref{item:osp1} in the original algorithm.
	
	\item Finally, if branch \ref{item:osp2} is chosen, the algorithm sets $p=\sfrac{3}{4}$ to obtain a sample set $\overline{S}$. 
	Let $S'$ be a random subset of $\overline{S}$ obtained by subsampling each element of $\overline{S}$ with probability $\sfrac{2}{3}$ independently.
	Note that $S'$ is then a random set containing each element of $N$ with probability $\sfrac{1}{2}$ independently, while $\tilde{S} \coloneqq \overline{S} \setminus S'$ is random set containing each element of $N$ with probability $\sfrac{1}{4}$.
	We then use $S'$ to simulate the sample set $S$ used in branch \ref{item:mainAlg} of the original algorithm, and $\tilde{S}$ to simulate the sample set $S'$ used in branch \ref{item:osp2} of the original algorithm.
	As discussed in the case above, this construction does not have any impact on the uniform assignment of the weights $w$ to the elements. 
	Thus, the RA-MSP subinstance given by $\mrestrict{\M}{N\setminus \overline{S}}$ on which we use OSP indeed assigns weights of $w$ uniformly at random to elements, as required.
	Finally, using that the guarantees of OSP hold under adversarial arrival order, it follows that this procedure is equivalent to running branch \ref{item:osp2} in the original algorithm.
\end{itemize}
 
\printbibliography

\appendix
\section{Omitted Proofs from \cref{sec:curveEstimate}}\label{sec:cuveEstimate-proofs}

The following result is used to show property (ii) in the proof of \eqref{eq:concentrationCard} in \cref{thm:concentrationBounds}.

\begin{lemma}\label{lem:density-union}
	Let $\M = (N, \I)$ be a matroid, let $\M_{h} = (N, \I_{h})$ denote its $h$-fold union for some $h \in \Z_{\geq 1}$, and let $r_{h}$ denote the rank function of $\M_{h}$.
    Let $Q \subseteq N$ and $e \in N \setminus Q$ be such that $r_{h}(Q \cup \{e\}) = r_{h}(Q)$. Then $e \in \mspan(D(Q, h))$, where $D(Q, h)$ is as defined in \cref{eq:DSLambda}.
\end{lemma}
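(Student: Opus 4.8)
The plan is to replace the statement about the $h$-fold union rank $r_{h}$ by a statement about the potential function $f(U) := |U| - h\,r(U)$ for $U \subseteq N$. I would first recall two standard facts. By the matroid union theorem applied to $h$ copies of $\M$, for every $A \subseteq N$ one has $r_{h}(A) = \min_{B \subseteq A}\bigl(|A \setminus B| + h\,r(B)\bigr) = |A| - \max_{B \subseteq A} f(B)$. Moreover, $f$ is supermodular (immediate from submodularity of $r$, since cardinality is modular and $h \ge 1$), so the maximizers of $f$ among subsets of a fixed set $Q$ are closed under union and intersection; in particular, every such maximizer is contained in the unique maximal maximizer $D(Q,h)$.

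Applying the rank formula to $Q$ and to $Q \cup \{e\}$ and using $e \notin Q$, the hypothesis $r_{h}(Q \cup \{e\}) = r_{h}(Q)$ rewrites as $\max_{U \subseteq Q \cup \{e\}} f(U) = f(D(Q,h)) + 1$. I would then note that this maximum cannot be attained inside $Q$ (any $U \subseteq Q$ has $f(U) \le f(D(Q,h))$), hence it is attained at some $U^{\ast}$ with $e \in U^{\ast}$; writing $V := U^{\ast} \setminus \{e\} \subseteq Q$ gives $f(D(Q,h)) + 1 = f(V \cup \{e\})$. Since $r(V \cup \{e\})$ equals either $r(V)$ or $r(V) + 1$, we have $f(V \cup \{e\}) = f(V) + 1$ in the first case and $f(V \cup \{e\}) = f(V) + 1 - h$ in the second.

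The second case is impossible: it would force $f(V) = f(D(Q,h)) + h \ge f(D(Q,h)) + 1$, contradicting maximality of $f(D(Q,h))$ over subsets of $Q$ — this is the one place where $h \ge 1$ is used. Hence $r(V \cup \{e\}) = r(V)$, i.e.\ $e \in \mspan(V)$, and moreover $f(V) = f(D(Q,h))$, so $V$ is a maximizer of $f$ over subsets of $Q$ and therefore $V \subseteq D(Q,h)$ by the second fact above. Consequently $e \in \mspan(V) \subseteq \mspan(D(Q,h))$, as claimed. I do not expect a genuine obstacle here; the only points requiring care are invoking the matroid union rank formula in the right form and the routine supermodularity bookkeeping that yields the ``every maximizer is contained in $D(Q,h)$'' property. (An alternative route would introduce $D' := D(Q \cup \{e\}, h)$, show $D(Q,h) \subseteq D'$ and $e \in D'$ via supermodularity, and analyse $D' \setminus \{e\}$; this is the same computation with $D'$ named explicitly, so I would prefer the version above.)
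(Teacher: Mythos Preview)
Your proof is correct and follows essentially the same route as the paper: both invoke the matroid union rank formula $r_h(A)=|A|-\max_{B\subseteq A}f(B)$ with $f(B)=|B|-h\,r(B)$, pick a maximizer $U^\ast$ of $f$ over subsets of $Q\cup\{e\}$, show that $e\in U^\ast$ and that removing $e$ does not drop the rank, and conclude that $U^\ast\setminus\{e\}$ is a maximizer over subsets of $Q$ and hence lies in $D(Q,h)$. Your explicit use of supermodularity to justify that every maximizer is contained in $D(Q,h)$ is a nice touch that the paper leaves implicit.
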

\begin{proof}
    Let $r$ denote the rank function of $\M$.
    By the matroid partitioning theorem of Nash-Williams, we have
    \begin{equation}\label{eq:nash-williams}
        r_h(U) = \min_{B\subseteq U}\left\{
        |U\setminus B| + h r(B)
        \right\} \qquad \forall U\subseteq N.
    \end{equation}
    Let $B^{*} \subseteq Q$ be such that $r_{h}(Q \cup \{e\}) = \card{(Q \cup \{e\}) \setminus B^{*}} + h r(B^{*})$. Note that $e \in B^{*}$, as otherwise
    \[
        r_{h}(Q)
            \leq \card{Q \setminus B^{*}} + h r(B^{*})
             =  \card{(Q \cup \{e\}) \setminus B^{*}} - 1 + h r(B^{*})
             =  r_{h}(Q \cup \{e\}) - 1, 
    \]
    where the first inequality follows from~\eqref{eq:nash-williams}; however, this contradicts $r_{h}(Q \cup \{e\}) = r_{h}(Q)$. Additionally, note that $r(B^{*} \setminus \{e\}) = r(B^{*})$, as otherwise
    \begin{align*}
        r_{h}(Q)
            &\leq \card{Q \setminus (B^{*} \setminus \{e\})} + h r(B^{*} \setminus \{e\})
                =  \card{(Q \cup \{e\}) \setminus B^{*}} + 1 + h (r(B^{*}) - 1)\\
            &=  r_{h}(Q \cup \{e\}) - h + 1, 
    \end{align*}
    where the first inequality uses again~\eqref{eq:nash-williams}; this again contradicts $r_{h}(Q \cup \{e\}) = r_{h}(Q)$. Moreover, by putting $r_{h}(Q \cup \{e\}) = r_{h}(Q)$ together with $e \in B^{*}$ and $r(B^{*} \setminus \{e\}) = r(B^{*})$, we get:
    \[
        r_{h}(Q)
            = r_{h}(Q \cup \{e\})
            = \card{(Q \cup \{e\}) \setminus B^{*}} + h r(B^{*})
            = \card{Q \setminus (B^{*} \setminus \{e\})} + h r(B^{*} \setminus \{e\}).
    \]
    This implies, due to~\eqref{eq:nash-williams}, that the set $A^{*} \coloneqq B^{*} \setminus \{e\}$ is a minimizer of $|Q\setminus A| + h r(A) = |Q| - (|A| - h r(A))$ over all $A\subseteq Q$.
    Equivalently, $A^*$ is therefore a maximizer of $|A| - h r(A)$ over all $A\subseteq Q$; hence, $A^* \subseteq D(Q,h)$, because $D(Q,h)$ is the unique maximal maximizer of the same expression.
    Recalling that $e \in \mspan(A^{*})$, we thus get $e \in \mspan(D(Q, h))$, as desired.
\end{proof}
  
\end{document}